\newlist{indproof}{itemize}{5}
\setlist[indproof]{%
  itemsep=5pt,  %
  leftmargin=5pt,
  listparindent=10pt,
  font={\sc}, %
  label={}
}
\newcommand{\indcase}[1]{\item{{\color{RedDevil}{\sc Case}}(#1)}.}
\NewDocumentCommand{\defeq}{}{\stackrel{\mathclap{\mbox{\tiny def}}}{=}}
\NewDocumentCommand{\hcpcoexp}{}{CSLL}
\newcommand{\bigsymbol}[1]{%
  \DOTSB
  \mathop{%
    \mathchoice{\big@symbol\displaystyle\Large{#1}}
               {\big@symbol\textstyle\large{#1}}
               {\big@symbol\scriptstyle\normalsize{#1}}
               {\big@symbol\scriptscriptstyle\small{#1}}%
    }\slimits@
}
\newcommand{\big@symbol}[3]{%
  \vcenter{%
    \sbox\z@{$#1\sum$}%
    \dimen@=0.875\dimexpr\ht\z@+\dp\z@\relax
    #2%
    \hbox{\resizebox{!}{\dimen@}{$\m@th#3$}}%
  }%
  \vphantom{\sum}%
}
\newcommand{\bigwith}{\bigsymbol{\with}}
\newcommand{\bigparr}{\bigsymbol{\parr}}
\renewcommand{\@todonotes@textsize}{\footnotesize\sffamily}
\tikzset{notestyleraw/.append style={rounded corners = 1pt}}
\DeclareDocumentCommand{\Lars}{s +m}
{\IfBooleanTF{#1}
  {\todo[color=red!20]{Lars: #2}}
  {\todo[inline, color=red!20]{Lars: #2}}
}
\DeclareDocumentCommand{\Zesen}{s +m}
{\IfBooleanTF{#1}
  {\todo[color=green!50]{Zesen: #2}}
  {\todo[inline, color=green!50]{Zesen: #2}}
}
\DeclareDocumentCommand{\Alex}{s +m}
{\IfBooleanTF{#1}
  {\todo[color=blue!50]{Alex: #2}}
  {\todo[inline, color=blue!50]{Alex: #2}}
}
\DeclareDocumentCommand{\tagrule}{m}{\tag{\rulenamestyle{#1}} \label{#1}}
\newtheorem{lemma}{Lemma}
\newtheorem{theorem}{Theorem}
\newtheorem{definition}{Definition}
\title{Client-Server Sessions in Linear Logic}
\author[1]{Zesen Qian}
\author[2]{G. A. Kavvos}
\author[1]{Lars Birkedal}
\affil[1]{
	Aarhus University, Denmark
}
\affil[2]{
	University of Bristol, UK
}
\begin{document}
\maketitle
	\begin{abstract}
		We introduce coexponentials, a new set of modalities for Classical Linear Logic. As duals to
		exponentials, the coexponentials codify a distributed form of the structural rules of weakening
		and contraction. This makes them a suitable logical device for encapsulating the pattern of a
		server receiving requests from an arbitrary number of clients on a single channel. Guided by
		this intuition we formulate a system of session types based on Classical Linear Logic with
		coexponentials, which is suited to modelling client-server interactions.
		We also present a session-typed functional programming language for server-client programming, which we translate to our system of coexponentials.
	\end{abstract}

\section{Introduction}

The programme of \emph{session types} \cite{honda1998,vasconcelos_2012} aims to formulate
behavioural type systems that capture the notion of a \emph{session}---a structured, concurrent
interaction between communicating agents. Very little is usually assumed about these agents: their
only shared resource is usually a set of \emph{channels} through which they can send and receive
messages. On the other hand, ever since its inception it has been clear that \emph{linear logic}
\cite{girard1987linear} has a deep and mystifying relationship with concurrency.
\citet{abramsky1994proofs} argued that process calculi and linear logic should be in a Curry-Howard
correspondence \cite{bellin1994p}. Consequently, one should be able to use formulas of linear logic
as types that specify concurrent interactions, thereby constructing a system of session types that
is logically motivated. Session types and linear types have recently undergone a swift rapprochement
beginning with the work of Caires and Pfenning \cite{caires2010,caires_linear_2016}.

Despite these advances, the $\pi$-calculi that have been developed as process calculi for Linear
Logic suffer from dire expressive poverty. The typable processes are free of deadlock and
nondeterminism, at the price of being unable to model even benign forms of race. One striking
omission is that it is difficult to write down a well-typed process that represents two distinct
clients being served by a server listening on a single channel. The goal of the present paper is to
introduce a logical device, namely the \emph{strong coexponential modalities}, that will allow
us to give a linear type to this extremely common pattern of concurrent interaction.

\subsection{The problem}
	\label{section:problem}

\citet{caires2010} proposed a Curry-Howard correspondence in which Intuitionistic Linear Logic is
used as a type system for $\pi$-calculus \cite{milner1992calculus}. This correspondence allows one
to interpret formulas of linear logic as \emph{session types}, i.e., as specifications of
disciplined communication over a named channel. A few years later \citet{wadler2014} extended this
interpretation to \emph{Classical Linear Logic (CLL)}. Wadler's system, which is called
\emph{Classical Processes (CP)}, perfectly corresponds to \citeauthor{girard1987linear}'s original
one-sided sequent system for CLL \citeyearpar{girard1987linear}. Its typing judgments are of the
form $\IsProc{P}{\Gamma}$, where $\Proc{P}$ is a $\pi$-calculus process, and $\Gamma$ is a list
$\Name{x_1} : A_1, \dots, \Name{x_n} : A_n$ of name-session type pairs, with $A_i$ a formula of
Classical Linear Logic. The operational semantics of CP led Wadler to the following interpretation
of the connectives.
\begin{center}
	\begin{tabular}{ccp{4cm}cccp{4cm}}
		$\otimes$ && output         && $\parr$   && input  \\
		$\with$   && offer a choice && $\oplus$  && make a choice \\
		$!$       && server         && $?$       && client
	\end{tabular}
\end{center}
We follow a convention by which the multiplicative connectives $\otimes$, $\parr$ associate to the
right. Thus a type like $A \otimes B \parr C$ can be read as: output a (channel of type) $A$, then
input a (channel of type) $B$, and proceed as $C$. While the interpretation of the first four
connectives is intuitive, something seems to have gone awry with the exponentials \cite[\S
3.4]{wadler2014}. We claim that the computational behaviour of exponentials in CP does not in fact
accommodate what we would think of as client-server interaction.

To begin, we consider the following aspects to be the main characteristics of a client-server architecture \cite[\S\S 2.3, 3.4]{van_steen_2017}:
\begin{enumerate}[label=(\roman*)] \label{enumerate:cs-crit}
	\item There is a \emph{server process}, which repeatedly provides a service any number of clients.
	\item There is a \emph{pool of client processes}, each of which requests the said service.
	\item There is a unique \emph{end point} at which the clients may issue their requests to the
	server.
	\item The underlying network is \emph{inherently unreliable}: clients may be served
	out-of-order, i.e.,  in a \emph{nondeterministic} manner.
\end{enumerate}
While Wadler's interpretation faithfully captures (i) and (iii), it does not immediately enable the
representation of (ii). Because of its deterministic behaviour, CP is incapable of modelling (iv).

A CP term $\IsProc{\Proc{S}}{\Name{x} : \ofc A}$ can indeed `serve' sessions of type $A$ over the
channel $\Name{x}$. However, the reading of a term $\IsProc{C}{\Name{y} : \whynot A}$ as a process
which behaves as a \emph{pool of clients} along channel $\Name{y}$ is not so crisp. Recall the three
rules of $?$, namely weakening, dereliction, and contraction. In CP:
\begin{mathpar}
	\inferrule*[right=$\whynot w$] {
		\IsProc{Q}{\Gamma}
	}{
		\IsProc{Q}{\Gamma, \Name{x} : \whynot A}
	}
	\and
	\inferrule*[right=$\whynot d$]{
		\IsProc{Q}{\Gamma, \Name{y} : A}
	}{
		\IsProc{\WhyD{x}{y}{Q}}{\Gamma, \Name{x} : \whynot A}
	}
	\and
	\inferrule*[right=$\whynot c$]{
		\IsProc{Q}{\Gamma, \Name{x} : \whynot A, \Name{y} : \whynot A}
	}{
		\IsProc{\Sb{Q}{y}{x}}{\Gamma, \Name{x} : \whynot A}
	}
\end{mathpar}
Wadler interprets these rules as \emph{client formation}. Weakening stands for the empty case of a
pool of no clients. Dereliction represents a single client following session $A$. Finally,
contraction enables one to aggregate two client pools together: two sessions that are both of type
$\whynot A$ can be collapsed into one, now communicating along the shared channel $\Name{x}$.

We argue that, of those interpretations, only the one for dereliction is tenable. In the case
of weakening, we see that at least one process is involved in the premise; hence, the `pool' formed
has at least one client in it, albeit one that does not communicate with the server. Likewise,
contraction does not aggregate different clients, but different sessions owned by the same client.
Beginning with a single process $\IsProc{\Proc{P}}{\Name{x} : A, \Name{y} : A}$ we can use
dereliction twice followed by contraction to obtain $\IsProc{\WhyD{w}{x}{\WhyD{w}{y}{P}}}{\Name{w} :
\whynot A}$. This process will ask for two channels that communicate with session $A$. Nevertheless,
the result is still a single process, and not a pool of clients. Dually, the type $\ofc A$ merely
connotes a \emph{shared channel}: a non-linearized, non-session channel which is used to spawn an
arbitrary number of new sessions, each one of type $A$ \cite[\S 3]{caires2010}.

More alarmingly, there is no way to combine two distinct processes $\Proc{P} \vdash \Name{z} : A$
and $\Proc{Q} \vdash \Name{w} : A$ into a single process $\Deriv{\textsf{pool}({\Name{x}; \Name{z}. \, \Proc{P},
\Name{w}. \, \Proc{Q}})} \vdash \Name{x} : \whynot A$ communicating along a shared channel. As a
remedy, Wadler introduces the \textsc{Mix} rule:
\begin{mathpar}
	\inferH{Mix}{
		\IsProc{P}{\Gamma} \\
		\IsProc{Q}{\Delta}
	}{
		\IsProc{\Par{P}{Q}}{\Gamma, \Delta}
	}
\end{mathpar}
\ruleref{Mix} was under carefully considered for inclusion in Linear Logic, but was ultimately left
out \cite[\S V.4]{girard1987linear}. Informally, it allows two completely independent,
non-intercommunicating processes to run `in parallel.' We may then use contraction to merge them
into a single client pool:
\begin{mathpar}
	\inferrule*[right=$\whynot c$]{
		\inferrule*[right=\ruleref{Mix}]{
			\inferrule*[right=$\whynot d$]{
				\IsProc{P}{\Name{z} : A}
			}{
				\IsProc{\WhyD{x}{z}{P}}{\Name{x} : \whynot A}
			} \\
			\inferrule*[right=$\whynot d$]{
				\IsProc{Q}{\Name{w} : A}
			}{
				\IsProc{\WhyD{y}{w}{P}}{\Name{y} : \whynot A}
			}
		}{
			\IsProc{\Par{\WhyD{x}{z}{P}}{\WhyD{y}{w}{Q}}}{\Name{x} : \whynot A, \Name{y} : \whynot A}
		}
	}{
		\IsProc{\Par{\WhyD{x}{z}{P}}{\WhyD{x}{w}{Q}}}{\Name{x} : \whynot A}
	}
\end{mathpar}
The operational semantics of the \ruleref{Mix} rule in CP are studied by \citet{lindley2016}. To
formulate them correctly one needs also to add the rule
\begin{mathpar}
	\inferH{Mix0}{
		\strut
	}{
		\Stop \vdash \cdot
	}
\end{mathpar}
\ruleref{Mix0} has a flavour of inconsistency to it, but is advantageous on two levels. On the
technical level, it let us show that the operational semantics, which adds a reaction
$\red{\Par{P}{Q}}{\Par{P'}{Q}}$ whenever $\red{\Proc{P}}{\Proc{P'}}$, is well-behaved (terminating,
deadlock-free, and deterministic). In terms of computational interpretation, \ruleref{Mix0}
represents a stopped process. This solves the second problem we pointed out above, viz. the
formation of a vacuously empty client pool:
\begin{mathpar}
	\inferrule*[right=$\whynot w$]{\inferrule*[right=\ruleref{Mix0}]{\strut}{\Stop \vdash \cdot}}{\Stop \vdash \Name{x} : \whynot A}
\end{mathpar}

Nevertheless, \textsc{Mix} and \textsc{Mix0} are unbecoming rules. To begin, they are respectively
equivalent to $\bot \multimap \mathbf{1}$ and $\mathbf{1} \multimap \bot$, thereby conflating the
two units. Moreover, it is well-known \cite[\S 1.1]{bellin_subnets_1997}
\cite{girard1987linear,abramsky1996interaction,wadler2014,lindley2016} that \ruleref{Mix} is
equivalent to
\begin{equation}
	\label{eq:mix-result}
	A \otimes B \multimap A \parr B \tag{$\ast$}
\end{equation}
where $C \multimap D \defeq \negg{C} \parr D$.

Admitting this implication is unwise. At first glance, \eqref{eq:mix-result} merely weakens the
separation between these connectives, and hence damages the interpretation of $\parr$ as input, and
$\otimes$ as output. However, we argue that deeper problems lurk just beneath the surface. \citet[\S
3.4.2]{abramsky1996interaction} describe a perspective on CLL which reads $A \parr B$ as
\emph{connected concurrency} (information \emph{necessarily} flows between $A$ and $B$ \cite[\S
V.4]{girard1987linear}) and $A \otimes B$ as \emph{disjoint concurrency} (no information flow
between $A$ and $B$ whatsoever). The implication $\eqref{eq:mix-result}$ makes $\otimes$ a special
case of $\parr$. Hence, flow between the components of $A \otimes B$ is \emph{permitted, but not
obligatory} \cite[\S 3.2]{abramsky_games_1994}. Thus, \eqref{eq:mix-result} allows us to
\emph{pretend} that there is flow of information between two clients.\footnote{This is evident in
the Abramsky-Jagadeesan game semantics for MLL+MIX: a play in $A \otimes B$ projects to plays for
$A$ and $B$, but the Opponent can switch components at will. The fully complete model consists of
\emph{history-free} strategies, so there can only be non-stateful Opponent-mediated flow of
information between $A$ and $B$.}

Nevertheless, generating the actual flow of information is seemingly impossible. Using \ruleref{Mix}
we can put together two clients $\IsProc{C_i}{\Name{c_i} : A}$, and get a single process
$\IsProc{\Par{C_0}{C_1}}{\Name{c_0} : A, \Name{c_1} : A}$. As the comma stands for $\parr$, we can
only cut this with a server $\IsProc{S}{\Name{s} : \negg{A} \otimes \negg{A}}$. But, by the
interpretation of $\otimes$ as disjoint concurrency, we see that the two client sessions will be
served by disjoint server components. In other words, the server will \emph{not} allow information
to flow between clients, which does not conform to our usual conception of a stateful server! To
enable this kind of flow, a server must use $\parr$. As we cannot cut a $\parr$ (in the server) with
another $\parr$ (in the client pool), we are compelled to also accept the converse implication $A
\parr B \multimap A \otimes B$ in order to convert one of the two $\parr$'s to $\otimes$. This
forces $\otimes = \parr$, which inescapably leads to deadlock \cite[\S 4.2]{lindley2016}.

Requiring $\otimes = \parr$, a.k.a. \emph{compact closure} \cite{barr1991,abramsky1996interaction},
is often deemed necessary for concurrency. In fact, \citet{lindley2016} argue that this
\emph{conflation of dual connectives} ($\mathbf{1} = \bot$, $\otimes = \parr$, and so on) is the
source of all concurrency in Linear Logic. The objective of this paper is to argue that there is
another way: we aim to augment the Caires-Pfenning interpretation of propositions-as-sessions with a
certain degree of concurrency \emph{without adding \ruleref{Mix}}. We also wish to introduce just
enough nondeterminism to convincingly model client-server interactions in a style that satisfies
points (i)--(iv).

We shall achieve both of these goals with the introduction of \emph{coexponentials}.

\subsection{Roadmap}

First, in \cref{section:fix} we discuss the expression of the usual exponential modalities of linear
logic ($!?$) as least and greatest fixed points. This leads us to a different definition of $\ofc$,
which we call the \emph{strong exponential}. By taking a `multiplicative dual' of these fixed point
expressions, we reach two novel modalities, the \emph{strong coexponentials}, for which we write
$\exc$ and $\que$. We refine coexponentials back into a weak form that is similar to the usual
exponentials, and show that they coincide with weak coexponentials in the presence of \ruleref{Mix}
and the \emph{Binary Cut} rule.

Following that, in \cref{section:processes} we introduce a process calculus with strong
coexponentials, which we call \hcpcoexp. This new system is in the style of \citet{kokke2019better},
which replaces the one-sided sequents with \emph{hypersequents}. It is argued that coexponentials
enable a new abstraction, viz. the collection of an arbitrary number of clients following session
$A$ into a \emph{client pool}, which communicates on a channel that follows session $\que A$.
Conversely, the rules for $\exc$ express the formation of a \emph{server}, which can be cut with a
client pool to serve it requests.

In \cref{section:pools} we present an extended example which illustrates the computational behaviour
of coexponentials, i.e. an implementation of the \emph{Compare-and-Set (CAS)} synchronization
primitive. Our system neatly encapsulates racy yet atomic
behaviour implicit in such operations.

In \cref{section:csgv} we explore the implications of coexponentials in a session-typed functional language.
We extend Wadler's GV with client-server interactions and translate them to coexponentials in CSLL.
We take advantage of the higher-level notation to give several examples that would be tedious to program directly in CSLL.

We survey related work in \cref{section:related-work}, and make some concluding remarks in \cref{section:conclusion}.

\section{Exponentials, fixed points, and coexponentials}
 \label{section:fix}

	\subsection{Exponentials as fixed points}
	\label{section:exp-as-fix}

  The exponential (or `of course') modality of linear logic $!$ is used to mark a replicable
	formula. While describing a combinatory presentation of linear logic, \citet[\S
	3.2]{girardlafont1987} noticed that $!A$ can be expressed as a fixed point
	\[
		\ofc A \cong \ounit \with A \with (\ofc A \otimes \ofc A)
	\]
	The three additive conjuncts on the RHS correspond to the three rules of the dual connective
	$\whynot$, namely weakening, dereliction, and contraction. As $\with$ is a \emph{negative}
	connective, the choice of conjunct rests on the `user' of the formula,\footnote{Also known as \emph{external choice}. In the language of game semantics, the \emph{opponent}.} who may pick one of the three conjuncts at will.

	One may thus be led to believe that, were we to allow fixed points for all \emph{functors}, we
	could obtain $\ofc A$ as the \emph{fixed point} of a functor. \citet[\S 2.3]{baelde2012least}
	discusses this in the context of a system of higher-order CLL with least and greatest fixed
	points. Using the functors
	\begin{align*}
		F_A(\var{X}) &\defeq \ounit \with A \with (\var{X} \otimes \var{X})
			&
		G_A(\var{X}) &\defeq \punit \oplus A \oplus (\var{X} \parr  \var{X})
	\end{align*}
	one defines
	\begin{align*}
		!A \defeq \nu F_A && ?A \defeq \mu G_A
	\end{align*}
	where $\mu$ and $\nu$ stand for the least and greatest fixed point respectively. Just by expanding
	the fixed point rules, one then obtains certain derivable rules. While those for $?$ are the usual ones---weakening, dereliction, and contraction---the rule for $!$ is radically	different:
	\begin{mathpar}
		\inferH{StrongExp}{
			\vdash \Gamma, B \\
			\vdash \negg{B}, \ounit \\
			\vdash \negg{B}, A \\
			\vdash \negg{B}, B \otimes B
		}{
			\vdash \Gamma, !A
		}
	\end{mathpar}
	As foreshadowed by the use of a greatest fixed point, this rule is \emph{coinductive}. To prove
	$!A$ from context $\Gamma$ one must use it to construct a `seed' value (or `invariant') of type $B$.
	Moreover, this value must be discardable ($\vdash \negg{B}, \ounit$), derelictable
	($\vdash \negg{B}, A$), and copyable ($\vdash \negg{B}, B \otimes B$). This is eerily reminiscent
	of the \emph{free commutative comonoids} used to build certain categorical models of Linear Logic
	\cite[\S 7.2]{mellies2009categorical}. Because of the arbitrary choice of `seed' type $B$, the
	system using this rule does not produce good behaviour under cut elimination: the normal forms do
	not satisfy the \emph{subformula property} \cite[\S 3]{baelde2012least}: not all detours are
	eliminated. We call the modality introduced by \ruleref{StrongExp} the \emph{strong exponential}.

	Baelde shows that the standard $!$ rule can be derived from \ruleref{StrongExp}. But while the
	strong exponential can simulate the standard exponential, it also enables a host of other
	computational behaviours under cut elimination. Put simply, the standard exponential ensures
	\emph{uniformity}: each dereliction of $\ofc A$ into an $A$ must be reduced to the very same proof
	of $A$ every time. This makes sense in at least two ways. First, when we embed intuitionistic
	logic into linear logic through the Girard translation, we expect that in a proof of  $(A \to B)^o
	\defeq \ofc A^o \multimap B^o$ each use of the antecedent $\ofc A$ produces the same proof of $A$.
	Second, we know that one way to construct the exponential in many `degenerate' models of linear
	logic \cite{barr1991,mellies2018} is through the formula
	\[
		\ofc A\ \defeq\ \bigwith_{n \in \mathbb{N}} A^{\otimes n} \mathbin{/} \sim_n
	\]
	where $A^{\otimes n} \defeq A \otimes \dots \otimes A$, and $A^{\mathop{\otimes} n} \mathbin{/}
	\sim_n$ stands for the equalizer of $A^{\otimes n}$ under its $n!$ symmetries. Decoding the
	categorical language, this means that we take one $\with$ component for each multiplicity $n$, and
	each component consists of exactly $n$ copies of the same proof of $A$.

	In contrast, the $!$ rules derived from their fixed point presentation merely create an infinite
	tree of occurrences of $A$, and not all of them need be proven in the same way.

	\subsection{Deriving Coexponentials}
	\label{section:deriving-coexp}

	Both exponentials (qua fixed points) are given by a tree where each fork is marked with a
	connective ($\otimes$ for $!$, $\parr$ for $?$). The leaves of the tree are either marked with
	$A$, or with the corresponding unit. Turning this process on its head leads to two dual
	modalities, which we call the \emph{coexponentials}.

	More concretely, we define two functors by dualising the connective that adorns forks. We must not
	forget to change the units accordingly: we swap $\ounit$ (the unit for $\otimes$) with $\bot$
	(the unit for $\parr$). Let
	\begin{align*}
		H_A(\var{X}) &\defeq \punit \with A \with (\var{X} \parr \var{X})
		&
		K_A(\var{X}) &\defeq \ounit \oplus A \oplus (\var{X} \otimes \var{X})
	\end{align*}
	The strong coexponentials are then defined by
	\begin{align*}
		\exc A \defeq \nu H_A && \que A \defeq \mu K_A
	\end{align*}
	We define $\negg{\left(\que A\right)} \defeq \exc \negg{A}$, and vice versa. We obtain the following derived rules.
	\begin{mathpar}
		\inferrule*[right=$\que w$]{
			\strut
		}{
			\vdash \que A
		}
		\and
		\inferrule*[right=$\que d$]{
			\vdash \Gamma, A
		}{
			\vdash \Gamma, \que A
		}
		\and
		\inferrule*[right=$\que c$]{
			\vdash \Gamma, \que A \\
			\vdash \Delta, \que A
			}{
				\vdash \Gamma, \Delta, \que A
			}
		\and
		\inferrule*[right=$\exc$]{
			\vdash \Gamma, B \\
			\vdash \negg{B}, \punit \\
			\vdash \negg{B}, A \\
			\vdash \negg{B}, B \parr B
		}{
			\vdash \Gamma, \exc A
		}
\end{mathpar}
The rules for $\que$ are \emph{distributed forms} of the structural rules, while the $\exc$ rule
gives a \emph{strong coexponential}, analogous to the strong version of $!$ described in the
previous section.. The corresponding `weak' coexponential is given by replacing the above $\exc$
rule with
\begin{mathpar}
	\inferrule*[right=$\exc$]{
		\vdash \bigotimes \que \Gamma, A
	}{
		\vdash \bigotimes \que \Gamma, \exc A
	}
\end{mathpar}
$\que \Gamma$ stands for the context obtained by applying $\que$ to every formula in $\Gamma$, and
$\bigotimes$ folds this context with a tensor. Unfortunately, the presence of this folding operation
means that this rule is not well-behaved in proof-theoretic terms.

\subsection{Exponentials vs. Coexponentials under Mix and Binary Cuts}

In fact, we can show that, in the presence of additional rules, (weak) exponentials and (weak)
coexponentials are interderivable up to provability. This result provides strong evidence that
coexponentials are the right abstraction for our purposes, essentially by showing that
\citet{wadler2014} and others implicitly use them in modelling server-client interactions within
CLL.

The requisite rules are \ruleref{Mix}, and one of the \emph{binary cut} or \emph{multicut} rules:
\begin{mathpar}
	\inferH{BiCut}{
		\vdash \Gamma, A, B \\
		\vdash \Delta, \negg{A}, \negg{B}
	}{
		\vdash \Gamma, \Delta
	}
	\and
	\inferH{MultiCut}{
		\vdash \Gamma, A_1, \dots, A_n \\
		\vdash \Delta, \negg{A_1}, \dots,  \negg{A_n}
	}{
		\vdash \Gamma, \Delta
	}
\end{mathpar}
\ruleref{BiCut} cuts two formulas at once, and \ruleref{MultiCut} an arbitrary number. These rules
were first proposed in the context of Linear Logic by \citet{abramsky1993interaction} in the compact
setting ($\otimes = \parr$). They are logically equivalent, but only the second one satisfies cut
elimination \cite[\S 4.2]{lindley2016}. We recall some folklore facts regarding the
interderivability of certain formulas and \rulenamestyle{Mix}-like inference rules. Recall that $C
\multimap D \defeq \negg{C} \parr D$. The following statements may be found across the relevant
literature
\cite{girard1987linear,abramsky1996interaction,bellin_subnets_1997,wadler2014,lindley2016}.

\begin{lemma} \label{lem:mix-folklore}
		The following rules are logically interderivable.
		\begin{enumerate}[label=(\roman*)]
			\item The axiom $\ounit \multimap \punit$ and the \rulenamestyle{Mix0} rule.
			\item The axiom $\punit \multimap \ounit$ and the \rulenamestyle{Mix} rule.
			\item The axiom $A \otimes B \multimap A \parr B$ and the \rulenamestyle{Mix} rule.
			\item The axiom $A \parr B \multimap A \otimes B$ and the \rulenamestyle{BiCut} rule.
			\item \rulenamestyle{BiCut} and \rulenamestyle{MultiCut}.
		\end{enumerate}
		Moreover, \rulenamestyle{Mix0} is derivable from the axiom rule $\vdash \negg{A}, A$ and
		\rulenamestyle{BiCut}.
\end{lemma}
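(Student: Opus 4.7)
The plan is to dispatch each item by the standard two-directional translation: given an axiom $\vdash \phi$, we derive the corresponding structural rule by one CLL Cut against $\phi$ combined with routine rewrapping using $\otimes$/$\parr$/$\ounit$/$\bot$ introductions; conversely, we derive $\vdash \phi$ by applying the rule to identity axioms and collapsing with the negative connectives. Each direction is essentially a one-step calculation, so the proof is a catalogue of such translations.

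For (i)–(iii), the derivations are short. For (i), \ruleref{Mix0} gives $\vdash \cdot$, to which two $\bot$-introductions followed by $\parr$ yield $\vdash \bot \parr \bot = \ounit \multimap \punit$; conversely, $\vdash \ounit \otimes \ounit$ (from two $\ounit$-axioms and $\otimes$) cuts against $\vdash \bot \parr \bot$ to give $\vdash \cdot$. For (ii), $\vdash \ounit, \ounit$ from \ruleref{Mix} on two $\ounit$-axioms gives $\vdash \ounit \parr \ounit = \punit \multimap \ounit$; conversely, given $\vdash \Gamma$ and $\vdash \Delta$, weaken to $\vdash \Gamma, \bot$ and $\vdash \Delta, \bot$, tensor to $\vdash \Gamma, \Delta, \bot \otimes \bot$, and cut with $\vdash \ounit \parr \ounit$. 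For (iii), the generalisation of (ii), \ruleref{Mix} applied to $\vdash A, \negg{A}$ and $\vdash B, \negg{B}$ produces the required axiom after two $\parr$'s; in the converse direction, weaken using $A \otimes B \multimap A \parr B$ instantiated at suitable $A, B$ in conjunction with routine $\parr$/$\otimes$ manipulations, exactly as in (ii) but one level higher.

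Item (iv) is the only interesting one. For \ruleref{BiCut} $\Rightarrow$ axiom: apply $\otimes$ to $\vdash A, \negg{A}$ and $\vdash B, \negg{B}$ (disjoint contexts) to obtain both $\vdash A \otimes B, \negg{A}, \negg{B}$ and $\vdash \negg{A} \otimes \negg{B}, A, B$, then \ruleref{BiCut} cuts $A, B$ against $\negg{A}, \negg{B}$ to leave $\vdash \negg{A} \otimes \negg{B}, A \otimes B$, i.e. the axiom after one $\parr$. Conversely, given the axiom $\vdash \negg{A} \otimes \negg{B}, A \otimes B$ and premises $\vdash \Gamma, A, B$ and $\vdash \Delta, \negg{A}, \negg{B}$: apply $\parr$ to the first premise to get $\vdash \Gamma, A \parr B$ and single-cut with the axiom on $A \parr B$ to obtain $\vdash \Gamma, A \otimes B$; then apply $\parr$ to the second premise, yielding $\vdash \Delta, \negg{A} \parr \negg{B}$, and single-cut on $A \otimes B$ to conclude $\vdash \Gamma, \Delta$. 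For (v), \ruleref{BiCut} is the $n=2$ instance of \ruleref{MultiCut}; for the converse, reduce \ruleref{MultiCut} at $n$ to \ruleref{BiCut} by folding premises with $\parr$ into $\vdash \Gamma, A_1 \parr \cdots \parr A_n$ and $\vdash \Delta, \negg{A_1} \parr \cdots \parr \negg{A_n}$, then using (iv) (available from \ruleref{BiCut}) to rewrite $\parr$'s on the second side to $\otimes$'s so the folded formulas become duals, and finishing with a single Cut.

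The corollary that \ruleref{Mix0} follows from the axiom rule and \ruleref{BiCut} is immediate: instantiate both premises of \ruleref{BiCut} with $\vdash A, \negg{A}$ (taking $B := \negg{A}$ so both formulas and both duals are available on each side), obtaining $\vdash \cdot$. The main technical point to be careful about is the $\parr$-folding argument in (v): one must keep track of associativity and the order of negations so that the folded left and right formulas are genuinely dual; everything else is routine CLL bookkeeping.
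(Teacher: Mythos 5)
The paper never actually proves this lemma: it is presented as a collection of folklore facts with citations to the literature, and only the appendix theorem that \emph{uses} it is proved there. So there is no in-paper argument to compare against; judged on its own, your proof is correct, and it is the standard folklore argument that the cited sources contain. All five items and the final corollary check out — in particular your (iv), simulating \rulenamestyle{BiCut} by two ordinary cuts through the axiom, and the corollary instantiating \rulenamestyle{BiCut} at $B := \negg{A}$ against two identity axioms, are exactly right. Two points deserve the care you only partly flag. First, you pass silently between the axiom as a formula, e.g. $(\negg{A} \otimes \negg{B}) \parr (A \otimes B)$, and the axiom as a two-formula sequent, $\vdash \negg{A} \otimes \negg{B},\, A \otimes B$: the formula-to-sequent direction is not an inference rule of CLL but requires one extra cut against a plain-CLL-derivable sequent such as $\vdash (A \parr B) \otimes (\negg{A} \parr \negg{B}),\, \negg{A} \otimes \negg{B},\, A \otimes B$; this works, but it should be stated, since all your uses of the axioms in cuts rely on the sequent form. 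Second, in (v), folding all $n$ formulas at once leaves you having to rewrite $\parr$'s nested \emph{underneath} $\otimes$'s on the second premise, which needs additional congruence-style derivations; the cleaner induction folds only the last two formulas on each side, converts that single outermost $\parr$ with one instance of (iv) (at $\negg{A_{n-1}}, \negg{A_n}$), and recurses from \rulenamestyle{MultiCut} at $n$ to \rulenamestyle{MultiCut} at $n-1$, with \rulenamestyle{BiCut} as the base case. Relatedly, your converse direction of (iii) is the thinnest step but is sound as sketched: instantiate the axiom at $A = B = \ounit$, cut against the plain-CLL-derivable $\vdash (\ounit \otimes \ounit) \otimes (\punit \otimes \punit),\, \ounit \parr \ounit$ to obtain $\vdash \ounit \parr \ounit$, and then run your argument for (ii).
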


Armed with this, we can prove that

\begin{theorem}
	In CLL with \ruleref{Mix} and \ruleref{BiCut}, exponentials and coexponentials coincide up to provability. That is: if we replace $\whynot$ and $\ofc$ in the rules for the exponentials with $\que$ and $\exc$ respectively, the resultant rule is provable using the coexponential rules, and vice versa.
\end{theorem}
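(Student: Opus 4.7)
The plan is to simulate each rule of one family of modalities using the rules of the other. The enabling observation, which I would use silently throughout, is that by \cref{lem:mix-folklore}, in the presence of \ruleref{Mix} and \ruleref{BiCut} we have \emph{both} implications $A \otimes B \multimap A \parr B$ and $A \parr B \multimap A \otimes B$. Consequently the folded formula $\bigotimes \que \Gamma$ and the comma-separated context $\que \Gamma$ are freely interconvertible on the right of a sequent, and likewise for $\bigotimes \whynot \Gamma$ versus $\whynot \Gamma$.

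For the direction in which exponentials simulate coexponentials (reading $\que$ as $\whynot$ and $\exc$ as $\ofc$): $\que w$ follows from \ruleref{Mix0}---which \cref{lem:mix-folklore} derives from axiom and \ruleref{BiCut}---followed by standard weakening of $\whynot$; $\que d$ is literally standard dereliction; distributed $\que c$ is obtained by combining its two premises $\vdash \Gamma, \whynot A$ and $\vdash \Delta, \whynot A$ with \ruleref{Mix} into $\vdash \Gamma, \Delta, \whynot A, \whynot A$ and then applying standard contraction; and the weak $\exc$ rule is simulated by unfolding the tensor $\bigotimes \whynot \Gamma$ into the context $\whynot \Gamma$, applying standard promotion, and refolding.

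For the direction in which coexponentials simulate exponentials: weakening is the axiom $\vdash \que A$ juxtaposed via \ruleref{Mix} with $\vdash \Gamma$; dereliction is immediate from $\que d$; and promotion follows the same fold/unfold strategy, this time around the weak $\exc$ rule. The genuinely subtle case is standard contraction, since distributed $\que c$ has two premises while standard contraction has only one. I would handle this by first deriving the auxiliary sequent $\vdash \exc \negg A, \exc \negg A, \que A$---obtained by applying $\que c$ to two copies of the identity $\vdash \exc \negg A, \que A$---and then, given $\vdash \Gamma, \que A, \que A$, applying \ruleref{BiCut} to cut both $\que A$'s against both $\exc \negg A$'s in one step, yielding $\vdash \Gamma, \que A$.

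The main obstacle is exactly this last step. Distributed $\que c$ reads forwards as \emph{aggregating} two independent client pools, whereas standard contraction \emph{collapses} two $\que A$'s that already inhabit the same proof; there is no syntactic way to split the single premise of standard contraction into the two required by $\que c$. The trick is to use $\que c$ ``in reverse''---on identity sequents---to manufacture a tautology witnessing $\que A \otimes \que A \multimap \que A$, which can then be discharged against both copies simultaneously via \ruleref{BiCut}. This is precisely where \ruleref{BiCut} (rather than ordinary single-formula cut) becomes indispensable, neatly mirroring its role in \cref{lem:mix-folklore}.
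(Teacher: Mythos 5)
Your proposal is correct and follows essentially the same route as the paper's own proof: the same rule-by-rule simulation in both directions, the same use of \ruleref{Mix} plus standard contraction to recover distributed $\que c$, the same fold/unfold of $\bigotimes$ via the generalized folklore bi-implications for promotion, and---crucially---the identical key trick for standard contraction, namely applying $\que c$ to two identity axioms $\vdash \exc \negg{A}, \que A$ to obtain $\vdash \exc \negg{A}, \exc \negg{A}, \que A$ and then discharging both $\que A$'s at once with \ruleref{BiCut}. No gaps to report.
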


This result extends to strong exponentials vs. strong coexponentials. The proof is even simpler:
under \ruleref{Mix} and \ruleref{BiCut} we have $\otimes = \parr$, so $F_A$, $H_A$ and $G_A$, $K_A$
are pairwise logically equivalent.

\section{Processes}
	\label{section:processes}

In the rest of the paper we will argue that the logical observations we made in \cref{section:fix}
have an interesting computational interpretation as server-client interaction. To this end we will
introduce a process calculus for CLL equipped with a bespoke form of strong coexponentials. Our
system shall introduce a certain amount of nondeterminism, yet it will remain \ruleref{Mix}-free.

We first explain how the coexponentials capture the intuitive shape of client pool formation
(\cref{section:que-means-client}). Following that, we briefly discuss three technical design
decisions that pertain to the coexponentials used in our system
(\cref{section:decision-strong,section:decision-lists,section:decision-permutation}). Finally, we
introduce the system in \cref{section:hcpcoexp}, and its metatheory in
\cref{section:hcpcoexp-metatheory}.

\subsection{$\que$ means client, $\exc$ means server}
	\label{section:que-means-client}

Recall the three rules for $\que$, namely
\begin{mathpar}
	\inferrule*[right=$\que w$]{
		\strut
	}{
		\vdash \que A
	}
	\and
	\inferrule*[right=$\que d$]{
		\vdash \Gamma, A
	}{
		\vdash \Gamma, \que A
	}
	\and
	\inferrule*[right=$\que c$]{
		\vdash \Gamma, \que A \\
		\vdash \Delta, \que A
		}{
			\vdash \Gamma, \Delta, \que A
		}
\end{mathpar}
We can read $\que A$ as the session type of a channel shared by a \emph{pool of clients}.
\begin{itemize}
	\item $\que w$ allows the vacuous formation of a empty client pool.
	\item $\que d$ allows the formation of a client pool consisting of exactly one client.
	\item $\que c$ rule can be used to \emph{aggregate} two client pools together.
\end{itemize}
The last point requires some elaboration. Each premise can be seen as a client pool with an external
interface ($\Gamma$ and $\Delta$ respectively). $\que c$ allows us to combine these into a single
process. This new process still behaves as a client pool, but it also retains \emph{both} external
interfaces. In contrast, the $\whynot c$ rule only allowed us to collapse two shared channel names
belonging to \emph{a single process}. Moreover, it did not allow us to mix two external
interfaces---one had to use \ruleref{Mix} for that.

Finally, the `weak' $\exc$ rule, i.e.,
\begin{mathpar}
	\inferrule{
		\vdash \bigotimes \que \Gamma, A
	}{
		\vdash \bigotimes \que \Gamma, \exc A
	}
\end{mathpar}
can be read as the introduction rule for a dual \emph{server session type}. It states that a process
serving $A$, and all of whose other interactions are as a client to a set of non-interacting
($\otimes$) services, can itself be `copromoted' to a \emph{server} $\exc A$.

Note that our intuitive explanations are almost identical to those of \citet{wadler2014}. The
difference is that our rules have the right branching structure to support them.

\subsection{Design Decision \#1: Server State and the Strong Rules}
	\label{section:decision-strong}

The first change with respect to the above interpretation is the switch to the \emph{strong rule}:
\begin{mathpar}
	\inferrule{
		\vdash \Gamma, B \\
		\vdash \negg{B}, \bot \\
		\vdash \negg{B}, A \\
		\vdash \negg{B}, B \parr B
	}{
		\vdash \Gamma, \exc A
	}
\end{mathpar}
This rule strongly evokes the structure of a `stateful' server serving $A$'s, with external
interface $\Gamma$. Within the server there exists an \emph{internal server protocol} $B$. This
comes with four ingredients: a process that provides a $B$, interacting along $\Gamma$
(initialization); a way to silently consume $B$ (finalization); a way to `convert' a $B$ to an $A$
(serving a client); and a way to fork one $B$ into two connected $B$'s (forking two subservers).

We decide to use this strong rule in order to avoid the \emph{uniformity} property that was
discussed in \cref{section:exp-as-fix}: the weak coexponential rule gives trivial servers providing
identical $A$'s to all clients. In contrast, this rule will allow a server to provide a different
$A$ each time it is called upon to do so.

\subsection{Design Decision \#2: Replacing Trees with Lists}
	\label{section:decision-lists}

The strong coexponential rule arose by taking the greatest fixed point of
\[
	H_A(\var{X}) \defeq \punit \with A \with (\var{X} \parr \var{X})
\]
As discussed in \cref{section:exp-as-fix,section:deriving-coexp}, this rule represents a
\emph{tree-like structure}. Nothing stops us from replacing it with a \emph{list-like}
structure.\footnote{It is worth noting that \citeauthor{girard1987linear} considered list-like
exponentials \citeyearpar[\S V.5(ii)]{girard1987linear}, but rejected them as they were not able to
reproduce contraction. This is not a requirement for modelling client-server interaction.} We use
the functors
\begin{align*}
	H'_A(\var{X}) &\defeq \punit \with (A \parr \var{X})
	&
	K'_A(\var{X}) &\defeq \ounit \oplus (A \otimes \var{X})
\end{align*}
and acquire the strong server rule derived from $H'_A$, viz.
\begin{mathpar}
	\inferrule{
		\vdash \Gamma, B \\
		\vdash \negg{B}, \punit \\
		\vdash \negg{B}, A \parr B \\
	}{
		\vdash \Gamma, \exc A
	}
\end{mathpar}

The main benefit is that the resulting system more closely reflects the pattern of server-client
interaction: clients form a queue rather than a tree, and servers no longer have to fork
subprocesses. This rule also requires fewer ingredients: an initialization of the internal protocol,
a finalization, and a component that spawns a session to serve one additional client.

To optimize this further, we make the $\parr$ implicit, and replace $\bot$ with a general $\Delta$
in the finalization:
\begin{mathpar}
	\inferH{Server}{
		\vdash \Gamma, B\\
		\vdash \negg{B}, \Delta \\
		\vdash \negg{B}, A, B
	}{
		\vdash \Gamma, \Delta, \exc A
	}
\end{mathpar}
This second rule can be immediately derived from the first one:
\begin{mathpar}
	\inferrule{
		\inferrule*{
			\vdash \Gamma, B \\
			\vdash \negg B, \Delta
		}{
			\vdash \Gamma, \Delta, B \otimes \negg B
		} \\
		\inferrule*{
			\inferrule*{
			}{
				\vdash \negg B, B
			}
		}{
			\vdash \negg B \parr B, \punit
		}\\
		\inferrule*{
			\inferrule*{
				\vdash \negg B, A, B \\
				\inferrule*{
				}{
					\vdash \negg B, B
				}
			}{
				\vdash \negg B, B, B \otimes \negg B, A
			}
		}{
			\vdash \negg B \parr B, A \parr (B \otimes \negg B)
		}
	}{
		\vdash \Gamma,\Delta,\exc A
	}
\end{mathpar}
There is a surreptitious twist here: the `new' internal server protocol is not $B$, but $B \otimes
\negg B$. This leads to internal back-and-forth communication in the server. $\Gamma$ is consumed to
produce a $B$. This is `passed' to each process serving each client. Finally, it is reflected back
to the initilization process, and `finalized' into a $\Delta$. The $\bot$ rule is invertible, so
instantiating $\Delta \defeq \bot$ in \ruleref{Server} gives back the preceding rule. Hence, these
two rules are logically equivalent.

\subsection{Design Decision \#3: Nondeterminism Through Permutation}
	\label{section:decision-permutation}

Using list-shaped rules for $\exc$ forces us to revise the rules for $\que$. To define a cut
elimination procedure they must now match the dual functor $K'_A$, and hence become
\begin{mathpar}
	\inferrule{\strut }{\vdash \que A}
	\and
	\inferrule{
		\vdash \Gamma, \que A \\
		\vdash \Delta, A
	}{
		\vdash \Gamma, \Delta, \que A
	}
\end{mathpar}
The cut elimination procedure for these rules leads to a confluent dynamics. This is unsatisfactory
from the perspective of client-server interaction: a proper model requires some nondeterminism in
the order in which clients are served. There are many ways to introduce this kind of behaviour. We
choose the simplest one: we identify derivations up to permutation of client formation in pools.
That is, we quotient them under the least congruence $\equiv$ generated from
\begin{mathpar}
	\inferrule{
		\inferrule*{
			\vdash \Gamma, \que A \\
			\vdash \Delta, A
		}{
			\vdash \Gamma,\Delta, \que A
		} \\
		\vdash \Sigma, A
	}{
		\vdash \Gamma,\Delta,\Sigma, \que A
	}
	\equiv
	\inferrule{
		\inferrule*{
			\vdash \Gamma, \que A \\
			\vdash \Sigma, A
		}{
			\vdash \Gamma,\Sigma, \que A
		} \\
		\vdash \Delta, A
	}{
		\vdash \Gamma,\Delta,\Sigma, \que A
	}
\end{mathpar}
This amounts to quotienting lists up to permutation. Thus, when a client pool interacts with a
server, the cut elimination procedure may silently choose to serve any of the constituent clients.

\paragraph{Trees and nondeterminism}

The careful reader might notice that the original, tree-like `distributed contraction' rule $\que c$
inherently supported a certain amount of nondeterminism: if we were to quotient derivations up to
permutation of the premises of $\que c$, then the cut elimination procedure would have some choice
of whether to serve the left or right subtree first. Switching to list-like functors forbids this
move, and seemingly imposes a much stricter discipline.

Nevertheless, the tree structure is awkward and rigid. For example, consider a client pool whose
tree structure is informally $[[c_0, c_1], [c_2, c_3]]$. As nondetermistic choices are only made at
each node, the clients cannot be served in any order. For example, if $c_0$ is served first then
$c_1$ must be served next, as it is in the same subtree. From a conventional client-server
perspective this is arguably \emph{not} a sufficient amount of nondeterminism. In contrast, our
formulation allows full permutations of the client pool.

\subsection{Introducing CSLL}
	\label{section:hcpcoexp}

Based on the above considerations, we introduce the system \hcpcoexp~of \emph{Client-Server Linear
Logic}.

Following recent presentation of CLL-based systems of session types \cite{kokke2019better},
\hcpcoexp~is structured around \emph{hyperenvironments}. Intuitively, the logical system underlying
\hcpcoexp~is not one-sided sequent calculus like CP, but a \emph{hypersequent system}
\cite{avron1991hypersequents}. In this kind of presentation process constructors are more finely
decoupled. For example, the original CP output/$\otimes$ constructor $\Out*{x}{y}{\Par{P}{Q}}$ h a
combination of a parallel composition with an output prefix. Hypersequent systems allow us to
separately type to these two constructs, and brings the language closer to $\pi$-calculus.

One-sided sequent systems for CLL---such as \citeauthor{girard1987linear}'s original presentation
\citeyearpar{girard1987linear}---use sequents of the form $\vdash \Gamma$ where $\Gamma$ is an
\emph{environment}, i.e. an unordered list of formulas. We assign distinct \emph{names} to each
formula. The environment $\Gamma = \Name{x_1}: A_1, \dots, \Name{x_n}:A_n$ stands for $A_1 \parr
\dots \parr A_n$. Hence, a comma stands for $\parr$. Environments are identical up to permutation.
We write $\pempty$ for the empty one.

A \emph{hyperenvironment} adds another layer: it is an unordered list of environments. We separate
environments by vertical lines. If each environment $\Gamma_i$ stands for the formula $A_i$, the
hyperenvironment $\Ctxs{G} = \Gamma_1 \mid \dots \mid \Gamma_n$ stands for the formula $A_1 \otimes
\dots \otimes A_n$. Hence, $\mid$ stands for $\otimes$. Hyperenvironments are identical up to
permutation, and we write $\emptyset$ for the empty one. We also stipulate that variable names be
distinct within and across environments.

\begin{figure}
	\input{proc-syntax}
	\input{proc-types}
	\caption{The syntax and type system of \hcpcoexp.}
	\label{figure:proc-syntax}
\end{figure}
The syntax and the type system of \hcpcoexp~are defined in \cref{figure:proc-syntax}. The types
are the formulas of CLL. Processes have the following binding structure:
\begin{itemize}
	\item $\Name{x}$ and $\Name{y}$ are bound in $\Proc{P}$ within $\New{x}{y}{P}$.
	\item $\Name{x}$ and $\Name{y}$ are bound in $\Proc{P}$ within $\In{y}{x}{P}$, $\Out{y}{x}{P}$.
	\item $\Name{x}$ is bound in $\Proc{P}$ within $\Inl{x}{P}$,$\Inr{x}{P}$.
	\item $\Name{x}$ is bound in both $\Proc{P}$ and $\Proc{Q}$ within $\Case{x}{P}{Q}$.
	\item Within $\ExcP{y}{P}{i}{f}{z}{z'}{y'}{Q}$, $\Name{i}$ and $\Name{f}$ are bound in $\Proc{P}$, $\Name{z}$,  $\Name{z'}$, and $\Name{y'}$ are bound in $\Proc{Q}$.
	\item $\Name{x}$ and $\Name{y}$ are bound in $\Proc{P}$ within $\QueA{x}{y}{P}$.
\end{itemize}
As is usual, processes are identified up to $\alpha$-equivalence.

In all processes that involve a dot not in curly braces, we call the part that precedes it the
\emph{prefix} of the process, and the part that succeeds it the \emph{continuation}. E.g. in the
process $\In{y}{x}{P}$, the prefix is $\Name{y}\Procparens{\Name{x}}$, and the continuation is
$\Proc{P}$. We write $\Pre{y}$ for an arbitrary prefix communicating on channel $\Name{y}$, and
$\bn{\Pre{y}}$ for the variables that it binds in its continuation. For example, we may write
$\Pre{y} \defeq \Name{y}\Procparens{\Name{x}}$, and $\bn{\Pre{y}} = {\Name{x}}$. We write
$\fn{\Proc{P}}$ for the free names in a process $\Proc{P}$.

A generic judgment of the type system has the shape $\IsProc{P}{\Ctxs{G}}$ where $\Proc{P}$ is a
process, and $\Ctxs{G}$ is a hyperenvironment. Most rules are identical to HCP, and in the interest
of brevity we only discuss those that are important. Similarly to HCP, the typing cannot be inferred
from the terms alone. For example, in $\ruleref{M-False}$ the term $\In{x}{}{P}$ does not specify
the environment $\Gamma$ on which it acts.

Hyperenvironment components are introduced by the nullary and binary \emph{hypermix} rules,
\ruleref{HMix0} and \ruleref{HMix2}. These rules are `Mix' rules only in name. \ruleref{HMix2} forms
the disjoint parallel composition of two processes: their environments are joined with $\mid$, which
stands for $\otimes$.\footnote{\ruleref{Mix} would join them with a comma, which would stand for a
$\parr$.} \ruleref{HMix0} is the stopped process; its hyperenvironment is the empty one, which
stands for the unit of $\otimes$, namely $\ounit$.\footnote{\ruleref{Mix0} would stand for the unit
  of $\parr$, namely $\punit$.}

The \ruleref{Cut} and \ruleref{Tensor} rules eliminate hyperenvironment components. The premises of
\ruleref{Cut} ensure that the two variables that are being connected---viz. $\Name{x}$ and
$\Name{y}$---are in different `parallel components' of $\Proc{P}$. Notice that the external
environments of these two components, namely $\Gamma$ and $\Delta$, are then brought together in the
conclusion. A similar pattern permeates the \ruleref{Tensor} and $\ruleref{M-True}$ rules. It is
instructive to follow the derivation of the original CP rules for $\otimes$ and $\ounit$, which we
will silently use:
\begin{mathpar}
	\inferrule*{
		\inferrule*{
			\IsProc{P}{\Gamma, \Name{y} : A} \\
			\IsProc{Q}{\Delta, \Name{x} : B}
		}{
			\IsProc{\Par{P}{Q}}{\Gamma, \Name{y} : A \mid \Delta, \Name{x} : B}
		}
	}{
		\IsProc{\Out*{x}{y}{\Par{P}{Q}}}{\Gamma, \Delta, \Name{x} : A \otimes B}
	}
	\and
	\inferrule*{
		\inferrule*{ }{\IsProc{\Stop}{\emptyset}}
	}{
		\IsProc{\Out{x}{}{\Stop}}{\Name{x} : \ounit}
	}
\end{mathpar}

The exponential rules \ruleref{WhyNotW}, \ruleref{WhyNotD}, \ruleref{WhyNotC} and \ruleref{OfCourse}
are formulated in the style of \citet{kokke2019better}. In \ruleref{OfCourse} we use vector notation
($\vec{-}$) as a shorthand for lists of names and types. Note that---in contrast to all previous
systems---we notate $\Proc{P}$ as a parameter rather than as the continuation in the process
$\OfCP{x}{\vec{y}}{P}$. This because $\Proc{P}$ does not behave like a continuation. For example, it
has its own distinct commuting conversion.

The coexponential rules \ruleref{QueW}, \ruleref{QueA} and \ruleref{Claro} follow the patterns
described in
\cref{section:que-means-client,section:decision-strong,section:decision-lists,section:decision-permutation}.
The rule \ruleref{QueW} (W stands for `weaken') constructs an empty client pool. The rule
\ruleref{QueA} (A stands for `absorb') combines a client and a pool into a slightly larger pool. The
interfaces of the client pool and the client are necessarily disjoint, as they are separated by a
$\mid$ in the premise. All the processes in the resultant pool race to communicate with a server at
the single endpoint $\Name{x}$.

Correspondingly, \ruleref{Claro} constructs a process that offers a service at the single endpoint
$\Name{y}$. Its continuation $\Proc{P}$ functions as both the initialization and the finalization of
the server, over channels $\Name{i}$ and $\Name{f}$ respectively. In terms of the \ruleref{Server}
rule of \cref{section:decision-lists}, it combines the premises $\vdash \Gamma, B$ and $\vdash \negg
B, \Delta$ into one process. However, these functionalities continue to be disjoint components of
$\Proc{P}$, as their interfaces are separated by a $\mid$ in the premise. The process $\Proc{Q}$ is
a `worker process' which is spawned every time a client is to be served.

\subsection{Operational Semantics and Metatheory}
	\label{section:hcpcoexp-metatheory}

\begin{definition}\label{def:canonical}
	Canonical terms are defined by the following clauses.
	\begin{itemize}
		\item $\Pre{x}[P]$ is canonical whenever $\Proc{P}$ is.
		\item $\Par{P}{Q}$ is canonical if both $\Proc{P}$ and $\Proc{Q}$ are canonical.
		\item $\Stop$ and $\Link{x}{y}$ are canonical.
		\item $\Case{y}{P}{Q}$ and $\OfCP{x}{\vec{y}}{P}$ are canonical.
	\end{itemize}
\end{definition}
\noindent In particular, $\New{x}{y}{P}$ is \emph{not} canonical; it is a cut.

The above notion of canonicity is not definitive. For example, $\Pre{x}[P]$ could have been
considered canonical regardless of the canonicity of $\Proc{P}$ (similar to weak head normal form
for $\lambda$-calculus). However, we choose to react $\Proc{P}$ further to make the `final result'
of an interaction visible in later examples. In addition, we could require terms such as $\Proc{P}$
and $\Proc{Q}$ in $\Case{y}{P}{Q}$ be canonical for the whole term to be canonical, but we choose
not to so as to reduce the number of reaction rules.

\begin{figure}
	\input{proc-equiv}
	\caption{The structural equivalence of \hcpcoexp~processes.}
	\label{figure:proc-equiv}
\end{figure}

\begin{figure}
	\input{proc-opsem}
	\caption{The operational semantics of \hcpcoexp~processes.}
	\label{figure:proc-opsem}
\end{figure}

We define the notion of \emph{structural equivalence} $\Proc{P} \equiv \Proc{Q}$ to be the least
congruence between processes induced by the clauses in \cref{figure:proc-equiv}. Furthermore, we
define the \emph{reaction relation} $\red{P}{Q}$ between processes to be the least relation induced
by the clauses in \cref{figure:proc-opsem}.

The structural equivalence and the reaction semantics largely mirror the notions of the same name in
the $\pi$-calculus \cite{milner_functions_1992,milner_communicating_1999}. Those that differ are
justified \emph{via} linear logic. \ruleref{Res-Pre} and \ruleref{Pre-Pre} can be seen as
identifications arising from \emph{proof nets}, in which the corresponding proofs would be
graphically identical. \ruleref{OfCourse-Comm} and \ruleref{With-Comm} are commuting conversions for
$\ruleref{OfCourse}$ and $\ruleref{With}$ respectively. \ruleref{Pre-Par} with
\ruleref{Res-Pre} combine into a kind of commuting conversion for prefixes. We take the former as
reaction rules, and the latter as structural equivalences. The reason is that we wish to make
equivalence preserve canonicity. For example, in \ruleref{With-Comm} the LHS is not canonical, but
the RHS is.

The overwhelming majority of these commuting conversions is used in previous works on the
relationship between linear logic and $\pi$-calculus to obtain cut elimination \citep[\S
3.6]{wadler2014} \citep[\S 3]{bellin1994p}. Perhaps the only exception is \ruleref{Pre-Pre}, which
allows us to swap any two noninterfering prefixes. It can be justified computationally as an
observational equivalence arising from the semantics of \citet[\S 5]{atkey2017observed}. Finally,
\citet{kokke2019better} view it as a session-theoretic version of \emph{delayed actions}
\cite{merro_2004}.

The structural equivalence \ruleref{Que-Que} allows us to commute the position of two clients in the
pool, thereby imitating racing---as discussed in \cref{section:decision-permutation}. Note that to
fully exploit the nondeterminism induced by \ruleref{Que-Que} the other structural equivalences are
necessary. For example, the two clients in $\QueA{x}{x_0}{\In{y}{y'}{\QueA{x}{x_1}{P}}}$ cannot be
permuted without using \ruleref{Pre-Pre} first.

\ruleref{Pre} corresponds to eliminating non-top-level cuts in Linear Logic; it is not standard in
either $\pi$-calculus or CP. Nevertheless, we choose to include it in order to strengthen our notion
of canonical form, which in turn elucidates the examples in \cref{section:pools}. In contrast, the
reaction rules for the exponentials are standard; see \cite{kokke2019better}.

The rule \ruleref{Claro-QueW} corresponds to serving an empty client pool. In this case we simply
`short-circuit' the initialization and finalization channels of $\Proc{P}$. Likewise, the rule
\ruleref{Claro-QueA} is the reaction caused by a nonempty pool of clients. The pool offers a fresh
channel $\Name{x'}$ on which the new client expects to be served. The server then spawns a worker
process $\Proc{Q}$, and the channel $\Name{y'}$ on which it will serve the new client is connected
to $\Name{x'}$, as expected. The initialization channel $\Name{i}$ of the server continuation is
connected to the $\Name{z}$ channel, on which the worker process expects to receive the `current
state' of the server. Once $\Proc{Q}$ serves the client, it will send the `next state' of the server
on $\Name{z'}$. Thus, we re-instantiate the server with $\Name{z'}$ as the new initialization
channel. Note that the `server state' we discuss here does not conform to the usual intuition of an
immutable value; it could be a session type itself, as demonstrated by the example in
\cref{section:voting-game}.

We have the following metatheoretic results.
\begin{lemma}\label{lem:equiv-preserves-types}
	If $\Equiv{P}{Q}$, then $\IsProc{P}{\Ctxs{G}}$ if and only if $\IsProc{Q}{\Ctxs{G}}$.
\end{lemma}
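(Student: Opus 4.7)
The plan is to proceed by induction on the derivation of $\Equiv{P}{Q}$, viewing $\equiv$ as the least congruence generated by the clauses in \cref{figure:proc-equiv}. Since the relation is symmetric by construction, and the ``only if'' and ``if'' directions are symmetric in the statement, it suffices to show that each generating clause preserves typability in both directions; the congruence, reflexivity, symmetry and transitivity cases are routine, using the induction hypothesis under each process-forming context.

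For each generating clause, I would invert the typing derivation of one side, then reassemble a derivation of the other side with the same hyperenvironment $\Ctxs{G}$. The cases group naturally:
\begin{itemize}
  \item \emph{Algebraic properties of $\mid$}: \ruleref{Par-Unit}, \ruleref{Par-Comm} and \ruleref{Par-Assoc} all follow because hyperenvironments are unordered lists of environments with $\emptyset$ as identity. I invert \ruleref{HMix0} and \ruleref{HMix2} to decompose $\Ctxs{G}$, then reassemble in the opposite shape. \ruleref{Link-Comm} only needs the symmetry of $\negg{A},A$ in \ruleref{Ax}.
  \item \emph{Cut manipulations}: \ruleref{Res-Par} is proved by inverting the top \ruleref{Cut} and the internal \ruleref{HMix2}, then permuting the order in which cuts and mixes are applied; the side condition $\Name{x},\Name{y}\not\in\fn{P}$ guarantees that the two channels being cut live in the $Q$ component, so no environment of $P$ participates. \ruleref{Res-Res} swaps two cuts, which is immediate once we track which pair of components each cut binds.
  \item \emph{Prefix commutations}: \ruleref{Pre-Par}, \ruleref{Res-Pre} and \ruleref{Pre-Pre} require a small case analysis on the shape of $\Pre{x}$ (receive, send, choice prefixes, unit prefixes, and the coexponential prefixes $\QueA{x}{x'}{\cdot}$). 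In each case the typing rule for $\Pre{x}$ picks one environment from the premise's hyperenvironment; the side conditions on bound and free names ensure that the other environments and the other prefix/cut act on disjoint data, so the two rules can be applied in either order.
  \item \emph{The new clause} \ruleref{Que-Que}: inverting two \ruleref{QueA} steps gives a premise whose hyperenvironment contains $\Gamma,\Name{x}:\que A \mid \Delta_0,\Name{x_0}':A \mid \Delta_1,\Name{x_1}':A$ (after combining); applying \ruleref{QueA} in the opposite order absorbs the same two clients into the pool and yields the same final hyperenvironment. Because $\mid$ is unordered, no bookkeeping beyond renaming is required.
\end{itemize}

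The main obstacle will be the prefix commutation cases, specifically \ruleref{Pre-Pre} and \ruleref{Res-Pre}. A single generic ``$\Pre{x}$'' actually abbreviates several different rules with differently shaped premises (some take a single environment, others split into two via $\mid$ as in \ruleref{Tensor}, \ruleref{M-True}, \ruleref{QueA}, \ruleref{Claro}). I have to check that when we commute two such prefixes, the hyperenvironment splitting induced by the outer rule is compatible with the splitting induced by the inner rule, given only the name-disjointness hypotheses. The key observation making this work is that hyperenvironments are quotiented by commutativity and associativity of $\mid$, so the environment touched by the inner prefix can always be floated next to the environment touched by the outer one without disturbing the remaining components, and \emph{vice versa}. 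Once this is spelled out for the splitting rules, all the single-environment cases (\ruleref{Par}, \ruleref{M-False}, \ruleref{WhyNotW}, \ruleref{WhyNotD}, \ruleref{WhyNotC}) are strictly easier and follow the same pattern.
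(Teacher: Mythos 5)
Your proposal takes essentially the same approach as the paper's proof: induction on the derivation of $\equiv$, proving one direction with the other being symmetric, dispatching the congruence cases as routine, deriving \rulenamestyle{Par-Unit}/\rulenamestyle{Par-Comm}/\rulenamestyle{Par-Assoc} from the unordered structure of hyperenvironments and \rulenamestyle{Link-Comm} from involution of negation, and handling the cut and prefix commutation cases by inversion and reassembly using commutativity and associativity of $\mid$. The only detail the paper makes explicit that you leave implicit is an auxiliary lemma that $\fn{\Proc{P}}$ coincides with the names of the typing hyperenvironment, which is what actually discharges the side condition $\Name{x},\Name{y}\not\in\fn{\Proc{P}}$ in the \rulenamestyle{Res-Par} case; your stated consequence is exactly what that lemma provides.
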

\begin{theorem}[Preservation]\label{thm:preservation}
	If $\IsProc{P}{\Ctxs{G}}$ and $\red{P}{Q}$, then $\IsProc{Q}{\Ctxs{G}}$.
\end{theorem}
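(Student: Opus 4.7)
The plan is to prove Preservation by rule induction on the derivation of $\red{P}{Q}$ as laid out in \cref{figure:proc-opsem}. The cases split cleanly into three groups: the congruence closure rules (\ruleref{ParL}, \ruleref{Res}, \ruleref{Pre}), the structural-equivalence rule \ruleref{Eq}, and the principal reaction rules. For each congruence rule I would assume $\IsProc{P}{\Ctxs{G}}$, invert the last typing rule to isolate the sub-derivation of the redex, apply the induction hypothesis to replace it with a derivation of the reduct at the same hyperenvironment, and reassemble using the same outer typing rule. Since typing rules take a sub-process to a process with the same external hyperenvironment (up to renaming of bound names), this re-assembly is routine. The \ruleref{Eq} case is immediate from \cref{lem:equiv-preserves-types}.

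For the principal reaction rules, the pattern is the same in each case: the LHS is a cut $\New{x}{y}{P_0}$ and inversion of the \ruleref{Cut} rule, together with inversion of \ruleref{HMix2} and the rules introducing the two dual prefixes that interact, yields typing derivations for the sub-components. I would then exhibit a typing derivation of the RHS whose final hyperenvironment is the same as the LHS's. For the well-understood rules (\ruleref{One-Bot}, \ruleref{Tensor-Par}, \ruleref{PlusL-With}, \ruleref{PlusR-With}, \ruleref{Link}, \ruleref{With-Comm}, \ruleref{OfCourse-Comm}, and the three \ruleref{OfCourse}-\ruleref{WhyNot} reactions) this reconstruction follows the arguments already made for CP and HCP \citep{wadler2014,kokke2019better} and amounts to tracking how the cuts on the principal formulas are decomposed into cuts on subformulas. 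The only technical subtlety there is the hyperenvironment bookkeeping, which is uniform and mechanical.

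The main obstacle, and the only genuinely novel case, is \ruleref{Claro-QueA}. Unrolling the LHS: the process $C$ yields a derivation of $\Ctxs{G_C} \mid \Gamma_C, \Name{x}:\que A \mid \Delta_C, \Name{x'}:A$, while the server yields a derivation of $\Ctxs{G_P} \mid \Gamma_P, \Delta_P, \Name{y}:\exc A$ whose last step is \ruleref{Claro} with premises $\IsProc{P}{\Ctxs{G_P} \mid \Gamma_P, \Name{i}:B \mid \Delta_P, \Name{f}:\negg B}$ and $\IsProc{Q}{\Name{z}:\negg B, \Name{z'}:B, \Name{y'}:A}$. To type the RHS I would first form $\Par{P}{Q}$ by \ruleref{HMix2}, then cut $\Name{i}$ against $\Name{z}$ via \ruleref{Cut}, obtaining a derivation of $\Ctxs{G_P} \mid \Delta_P, \Name{f}:\negg B \mid \Gamma_P, \Name{z'}:B, \Name{y'}:A$. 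A second application of \ruleref{Claro} (with the same internal protocol $B$, now initialised via $\Name{z'}$, with worker $Q$ reused up to $\alpha$, and with $\Name{y'}$ treated as an extra external channel in the initialiser's environment) produces the new server, typed with $\Ctxs{G_P} \mid \Gamma_P, \Delta_P, \Name{y'}:A, \Name{y}:\exc A$. Finally I would compose with $C$ via \ruleref{HMix2} and perform the two outer cuts $\New{x}{y}$ and $\New{x'}{y'}$, arriving at $\Ctxs{G_C} \mid \Ctxs{G_P} \mid \Gamma_C, \Delta_C, \Gamma_P, \Delta_P$, which matches the LHS. The companion case \ruleref{Claro-QueW} is similar but simpler, since the empty pool forces short-circuiting $\Name{i}$ to $\Name{f}$ and discards the worker.

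The delicate point throughout is $\alpha$-management: the bound variable $\Name{y'}$ of the original \ruleref{Claro} appears \emph{twice} on the RHS of \ruleref{Claro-QueA}---once as a free name inside $\Par{P}{Q}$ that is cut against $\Name{x'}$, and once as a newly bound name in the worker copy of $Q$ for future invocations---so I would work modulo $\alpha$ from the outset and use the convention that all bound names are distinct. Given this, the remaining bookkeeping is deterministic, and together with \cref{lem:equiv-preserves-types} the induction closes.
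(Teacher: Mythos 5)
Your proposal is correct and takes essentially the same route as the paper's own proof: induction on $\red{P}{Q}$, the \ruleref{Eq} case discharged by \cref{lem:equiv-preserves-types}, the congruence and standard principal cases by inversion and re-derivation, and a renaming lemma implicitly needed for \ruleref{Link}. In particular, your handling of \ruleref{Claro-QueA}---re-applying \ruleref{Claro} with initializer $\New{i}{z}{\Par*{P}{Q}}$, new initialization channel $\Name{z'}$, a fresh copy of the worker $\Proc{Q}$, and $\Name{y'}$ exposed as an extra external channel of the server, followed by the two cuts against the pool---is exactly the construction the paper gives.
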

\begin{theorem}[Progress]\label{thm:progress}
	If $\IsProc{R}{\Ctxs{G}}$ then either $\Proc{R}$ is canonical, or there exists $\Proc{R'}$ such that $\red{R}{R'}$.
\end{theorem}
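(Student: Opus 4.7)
My plan is structural induction on the typing derivation of $\Proc{R}$. Most rules produce a term that is either immediately canonical or canonical under the induction hypothesis. The axioms \ruleref{HMix0} and \ruleref{Ax}, and the rules \ruleref{With} and \ruleref{OfCourse}, yield a directly canonical form by \cref{def:canonical}. Every prefix-forming rule (\ruleref{Par}, \ruleref{Tensor}, \ruleref{PlusL}, \ruleref{PlusR}, \ruleref{M-False}, \ruleref{M-True}, \ruleref{WhyNotW}, \ruleref{WhyNotD}, \ruleref{WhyNotC}, \ruleref{QueW}, \ruleref{QueA}, \ruleref{Claro}) constructs a term of the shape $\Pre{x}[P']$; applying the induction hypothesis to $P'$ either exhibits a reduction that lifts through \ruleref{Pre}, or exhibits canonicity of $P'$, which makes $\Pre{x}[P']$ canonical. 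The rule \ruleref{HMix2} is handled by the induction hypothesis on both components, using \ruleref{ParL} (possibly after \ruleref{Par-Comm} via \ruleref{Eq}) to lift a reduction from either side.

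The substantive case is \ruleref{Cut}: $\Proc{R} = \New{x}{y}{P}$ with $\IsProc{P}{\Ctxs{G} \mid \Gamma, x : A \mid \Delta, y : \negg{A}}$. If $P$ itself reduces then \ruleref{Res} lifts the reduction to $\Proc{R}$, so the interesting subcase is where $P$ is canonical. Here I would prove a subsidiary \emph{canonical forms lemma}: whenever a canonical $P$ has the above typing, $\New{x}{y}{P}$ admits a reaction. The lemma is proved by induction on the typing derivation of $P$, driven by the constraint that $x$ and $y$ sit in distinct hyperenvironment components. Since only \ruleref{HMix2} introduces multiple components, any such canonical $P$ is, modulo the structural equivalences of \cref{figure:proc-equiv}, of the form $\Par{P_x}{P_y}$ (plus unrelated parallel baggage which we move aside using \ruleref{Res-Par}, \ruleref{Par-Comm}, and \ruleref{Par-Assoc}) with the cut name $x$ free in $P_x$ and $y$ free in $P_y$. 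A secondary induction on $P_x$ and $P_y$ walks past non-principal prefixes (using \ruleref{Res-Pre} and \ruleref{Pre-Pre}), through canonical $\Case$ and $\OfCP$ blocks (absorbed by the commuting conversions \ruleref{With-Comm} and \ruleref{OfCourse-Comm}), and terminates either at an axiom (firing \ruleref{Link}) or at dual top-level prefixes on $x$ and $y$, at which point the unique principal reaction from \cref{figure:proc-opsem} selected by $A$ versus $\negg{A}$ fires.

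The main obstacle is the canonical forms lemma itself, and in particular the bookkeeping that connects the $\mid$-structure of the hyperenvironment to the $\Par$-structure of the term under the structural equivalences of \cref{figure:proc-equiv}. The coexponential cases warrant particular attention: when a pool is cut against a server, the principal reductions \ruleref{Claro-QueW} and \ruleref{Claro-QueA} require a top-level \ruleref{QueW} or \ruleref{QueA} action, which in a canonical pool may become available only after one or more applications of \ruleref{Que-Que} to reorder the pool. This is where the nondeterminism introduced in \cref{section:decision-permutation} enters the progress argument: the proof must commit to one client as the next-served redex, but any such choice closes the case. Once the canonical forms lemma is established, the outer induction closes in two lines.
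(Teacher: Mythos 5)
Your proposal is correct and takes essentially the same route as the paper: the paper also proceeds by the same outer induction on the typing derivation, and your ``canonical forms lemma'' is exactly its two auxiliary lemmas --- Separation (\cref{lem:separation}), which matches the $\mid$-structure of the hyperenvironment to the parallel structure of a canonical term up to $\equiv$, and Local Progress (\cref{lem:local-progress}), which shows that a cut of two canonical, singly-typed processes always reacts, proved by walking past non-principal prefixes and through the commuting conversions exactly as you describe. The only (harmless) inaccuracy is the remark about \ruleref{Que-Que}: it is never actually needed for progress, since \ruleref{Claro-QueW} or \ruleref{Claro-QueA} fires for whichever client happens to head the pool, and---as you yourself note---any choice of client closes the case.
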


\section{An example: Compare-and-Set}
	\label{section:pools}

We now wish to demonstrate the client-server features of \hcpcoexp. To do so we produce an
implementation of the quintessential example of a synchronization primitive, the
\emph{Compare-and-Set operation} (CAS) \cite[\S 5.8]{herlihy_art_2012}. Higher-level examples are
given in \cref{section:csgv}.

A register that supports compare-and-set comes with an operation $\textsc{Cas}(e, d)$ which takes two
values: the \emph{expected} value $e$, and the \emph{desirable} value $d$. The function compares the
expected value $e$ with the register. If the two differ, the value of the register remains put, and
$\textsc{Cas}(e, d)$ returns false. But if they are found equal, the register is updated with the
desirable value $d$, and $\textsc{Cas}(e, d)$ returns true. When multiple clients are trying to
perform CAS operations on the same register, these must be performed \emph{atomically}. The CAS
operation is very powerful: an asynchronous machine that supports it can implement all concurrent
objects in a wait-free manner.

We follow previous work
\cite{girard1987linear,abramsky_computational_1993,lindley2016,kokke2019better} and define the type
of Boolean sessions to be $\boom \defeq \ounit \oplus \ounit$. We have the following derivable
constants:
\begin{align*}
	\TrueP{z}  &\defeq \IsProc{\Inl{z}{\Out{z}{}{\Stop}}}{\Name{z} : \boom}
		&
	\FalseP{z} &\defeq \IsProc{\Inr{z}{\Out{z}{}{\Stop}}}{\Name{z} : \boom}
\end{align*}
Moreover, we obtain the following derivable `elimination' rule (we write derivable rules in \Deriv{blue}):
\begin{mathpar}
	\inferrule*{
		\inferrule*{
			\IsProc{P}{\Gamma}
		}{
			\IsProc{\In{z}{}{P}}{\Name{z} : \bot, \Gamma}
		 } \\
		\inferrule*{
			\IsProc{Q}{\Gamma}
		}{
			\IsProc{\In{z}{}{Q}}{\Name{z} : \bot, \Gamma}
		}
	}{
		\IfO{z}{P}{Q} \defeq
			\IsProc{\Case{z}{\In{z}{}{P}}{\In{z}{}{Q}}}{\Name{z} : \negg{\boom}}, \Gamma
	}
\end{mathpar}
Hence, we can eliminate a Boolean channel in any environment $\Gamma$. The induced reactions are
\begin{align*}
	\New{x}{y}{
		\Par*{
			\TrueP{x}
		}{
			\IfO{y}{P}{Q}
		}
	}
	&\red*{}{}
	\Par{\Stop}{P}
		\equiv
	\Proc{P}
		& 										%
	\New{x}{y}{
		\Par*{
			\FalseP{x}
		}{
			\IfO{y}{P}{Q}
		}
	}
	&\red*{}{}
	\Par{\Stop}{Q}
		\equiv
	\Proc{P}
\end{align*}

We can now implement a register with a CAS operation. To begin, each client communicates with the
register along a channel of type
\[
	A \defeq \boom \otimes \boom \otimes \negg{\boom} \parr \ounit
\]
Thus, a client outputs three channels. On the first two it shall to receive the expected and
desirable values. On the third it will input a boolean, namely the success flag of the CAS
operation. Following that, it will accept an end-of-session signal. Curiously, this last step is
necessary for our implementation to type-check.

As a minimal example we will construct a pool of two racing clients, one performing
$\textsc{Cas}(\textsf{ff}, \textsf{tt})$, and the other one $\textsc{Cas}(\textsf{tt},
\textsf{ff})$. Initially $\Name{x_1}$ is ahead in the client pool.
\begin{align*}
		\Proc{C_0} &\defeq
			\IsProc{
				\Out{x_0}{x_e}{
					\Out{x_0}{x_d}{
						\Par*{
							\FalseP{x_e}
						}{
							\Par{
								\TrueP{x_d}
							}{
								\Link{x_0}{r_0}
							}
						}
					}
				}
			}{\Name{x_0} : \boom \otimes \boom \otimes \negg \boom \parr \ounit, \Name{r_0} : \boom \otimes \punit}
		\\
		\Proc{C_1} &\defeq
			\IsProc{
				\Out{x_1}{x_e}{
					\Out{x_1}{x_d}{
						\Par*{
							\TrueP{x_e}
						}{
							\Par{
								\FalseP{x_d}
							}{
							\Link{x_1}{r_1}
							}
						}
					}
				}
			}{\Name{x_1} : \boom \otimes \boom \otimes \negg \boom \parr \ounit, \Name{r_1} : \boom \otimes \punit}
	\\
	\Procsf{clients} &\defeq
		\IsProc{
			\QueA{x}{x_1}{\QueA{x}{x_0}{\QueW{x}{\Par*{C_0}{C_1}}}}
		}{
			\Name{x} : \que \left(\boom \otimes \boom \otimes \negg \boom \parr \ounit \right),
			\Name{r_0} : \boom \otimes \punit,
			\Name{r_1} : \boom \otimes \punit
		}
\end{align*}
Note that each client forwards the result it receives to an individual channel $\Name{r_i}$.
By the \ruleref{QueA} rule these two channels are preserved in the final interface of the pool.

Next we define the CAS register process, for which we use the $\exc$ connective.
This requires two components: the initialization and finalization process $\Proc{P}$, and the worker process $\Proc{Q}$ that serves one client.
To begin, we pick the internal server state to be $B \defeq \boom$.
We initialize the register to false, and forward the final state of the register to $\Name{u}$.
\[
	\Proc{P} \defeq
		\IsProc{\Par*{\FalseP{i}}{\Link{f}{u}}}
					 {\Name{i} : \boom \mid \Name{f} : \negg \boom, \Name{u} : \boom}
\]
Finally, we define $\Proc{Q}$. We begin by receiving the input and output channels from a client, and do a
case analysis on the current state of the register:
\[
	\Proc{Q} \defeq
		\IsProc{
			\In{y'}{y_e}{\In{y'}{y_d}{\IfO{z}{\Proc{R_1}}{\Proc{R_0}}}}
		}{
			\Name{z} : \negg \boom,
			\Name{y'}: \negg \boom \parr \negg \boom \parr \boom \otimes \punit,
			\Name{z'} : \boom
		}
\]
We have carefully named the channels so that $\Name{y_e} : \negg{\boom}$ and $\Name{y_d} :
\negg{\boom}$ carry the expected and desirable values. $\Name{z'}$ and $\Name{w'}$ carry the internal register, before and after the operation. The continuations $\Proc{R_0}$ and $\Proc{R_1}$ do a case analysis on the expected and desired value:
\begin{align*}
	\Proc{R_1} \defeq\ &\IsProc{\IfO{y_e}{\IfO{y_d}{S_{111}}{\Proc{S_{110}}}}{\IfO{y_d}{\Proc{S_{101}}}{\Proc{S_{100}}}}}{\Name{y_e}:\negg \boom, \Name{y_d}:\negg \boom, \Name{y'}:\boom \otimes \punit, \Name{z'}:\boom}
	\\
	\Proc{R_0} \defeq\ &\IsProc{\IfO{y_e}{\IfO{y_d}{\Proc{S_{011}}}{\Proc{S_{010}}}}{\IfO{y_d}{\Proc{S_{001}}}{\Proc{S_{000}}}}}{
		\Name{y_e}:\negg \boom, \Name{y_d}:\negg \boom, \Name{y'}:\boom \otimes \punit, \Name{z'}:\boom
	  }
\end{align*}
Two further case analyses lead to an exhaustive eight cases, each of which is handled by a separate process
$\Proc{S_{ijk}}$. We only give $\Proc{S_{110}}$ here, the rest being analogous:
\[
	\Proc{S_{110}} \defeq
		\IsProc{\Out{y'}{y_r}{\Par*{\TrueP{y_r}}{\In{y'}{}{\FalseP{z'}}}}}
		{\Name{y'}:\boom \otimes \punit, \Name{z'}:\boom}
\]
In this case, the expected value (true) matches the register state (true), so the process outputs true to the result channel $\Name{y_r}$ (the CAS operation succeeds), and the register is set to the desired value (false).
We must not forget to receive an end-of-session signal on $\Name{y}$, as required by the session type.
We let $\Procsf{server} \defeq \IsProc{\ExcP{y}{P}{i}{f}{z}{z'}{y'}{Q}}{\Name{y}:\exc (\negg \boom \parr \negg \boom \parr \boom \otimes \punit), \Name{u}:\boom}$, and cut:
\begin{align*}
	& \New{x}{y}{\Par*{\Procsf{clients}}{\Procsf{server}}} \\
	=\ & \New{x}{y}{\Par*{\QueA{x}{x_1}{\QueA{x}{x_0}{\QueW{x}{\Par*{\Proc{C_0}}{\Proc{C_1}}}}}}
		{\Procsf{server}}} \\
	\equiv\ & \New{x}{y}{\Par*{\QueA{x}{x_0}{\QueA{x}{x_1}{\QueW{x}{\Par*{\Proc{C_0}}{\Proc{C_1}}}}}}
		{\Procsf{server}}} \tag{$\Name{x_0}$ preempts $\Name{x_1}$ using \ruleref{Que-Que}} \\
	\red{}{}\ & \New{x}{y}{\New{x_0}{y'}{
		\Par*{
			\Par{\Proc{C_0}}{\QueA{x}{x_1}{\QueW{x}{\Proc{C_1}}}}
		}{
		\ExcP*{y}{
				\New{i}{z}{
				\Par*{\Proc{P}}{\Proc{Q}}
				}
		}{z'}{f}{z}{z'}{y'}{\Proc{Q}}
		}
	}} \tag{$\Proc{C_0}$ is accepted} \\
	\red*{}{}\ & \Out{r_0}{y_r}{\Par*{
		\TrueP{y_r}
	}{
		\In{r_0}{}{
	\New{x}{y}{
		\Par*{
			\QueA{x}{x_1}{\QueW{x}{\Proc{C_1}}}
		}{
			\ExcP{y}{
				P'
			}{z''}{f}{z}{z'}{y'}{\Proc{Q}}
		}
	}}}} \tag{$\Proc{C_0}$ performs CAS}\\
	\red{}{} & \Out{r_0}{y_r}{\Par*{
		\TrueP{y_r}
	}{
		\In{r_0}{}{
			\New{x}{y}{\New{x_1}{y'}{
				\Par*{
					\Par{\Proc{C_1}}{\QueW{x}{\Stop}}
				}{
					\ExcP*{y}{
						\New*{z''}{z}{
							\Par{\Proc{P'}}{\Proc{Q}}
						}
					}{z'}{f}{z}{z'}{y'}{\Proc{Q}}
				}
			}}
		}
	}}
	\tag{$\Proc{C_1}$ is accepted} \\
	\red*{}{} & \Out{r_0}{y_r}{\Par*{
		\TrueP{y_r}
	}{
		\In{r_0}{}{
	\Out{r_1}{y_r}{\Par*{
		\TrueP{y_r}
	}{
		\In{r_1}{}{
			\New{x}{y}{
				\Par*{
					\QueW{x}{\Stop}
				}{
					\ExcP{y}{
						P''
					}{z'''}{f}{z}{z'}{y'}{Q}
				}
			}
		}
	}}}}}
	\tag{$\Proc{C_1}$ performs CAS} \\
	\red{}{} &
	\Out{r_0}{y_r}{\Par*{
		\TrueP{y_r}
	}{
		\In{r_0}{}{
	\Out{r_1}{y_r}{\Par*{
		\TrueP{y_r}
	}{
		\In{r_1}{}{
			\Par*{\Stop}{\New{z'''}{f}{P''}}
		}
	}}}}}
	 \tag{$\Procsf{server}$ starts to finalize} \\
	\red*{}{} & \IsProc{\Out{r_0}{y_r}{\Par*{
		\TrueP{y_r}
	}{
		\In{r_0}{}{
			\Out{r_1}{y_r}{\Par*{
				\TrueP{y_r}
			}{
				\In{r_1}{}{\FalseP{u}}
			}}
		}
	}
	}}{\Name{r_0}:\boom \otimes \punit, \Name{r_1}:\boom \otimes \punit, \Name{u}:\boom}
	\tag{$\Procsf{server}$ finalizes}
\end{align*}
where $	\Proc{P'} = \Par{\TrueP{z''}}{\Link{f}{u}}$ and $\Proc{P''} =
\Par{\FalseP{z'''}}{\Link{f}{u}} $. This corresponds to the scenario where $\Proc{C_0}$ wins the
first race, and hence the CAS operation of both clients suceeds. There is another reaction
sequence: if $\Proc{C_1}$ wins the first race, we end up with
$\Out{r_1}{y_r}{\Par*{
	\FalseP{y_r}
}{
	\In{r_1}{}{
		\Out{r_0}{y_r}{\Par*{
			\TrueP{y_r}
		}{
			\In{r_0}{}{\TrueP{u}}
		}}
	}
}
}$.

\begin{wrapfigure}{R}{0.5\textwidth}
	\begin{tikzpicture}[
		process/.style={rectangle, draw=black!60, fill=black!5, very thick, minimum size=12.5mm},
		null/.style={rectangle,draw=black!0, minimum size = 5mm},
		node distance=0.5cm
	]
	\node[process] (init) {$\FalseP{i}$};
	\node[process] (fini) [below=of init] {$\Link{}{}$};
	\node[process] (Q0) [right=of init] {$\Proc{Q}$};
	\node[process] (Q1) [right=of fini] {$\Proc{Q}$};
	\node[process] (C0) [right=of Q0] {$\Proc{C_0}$};
	\node[process] (C1) [right=of Q1] {$\Proc{C_1}$};
	\node[null] (fini-env) [left = of fini] {};
	\node[null] (C0-env) [right= 1cm of C0] {};
	\node[null] (C1-env) [right= 1cm of C1] {};
	\node[null] (topleft) [above=of init.west] {};
	\node[null] (botright) [below=of Q1.east] {};

	\draw (init.east) node [left] {$\Name{i}$} -- (Q0.west) node [right] {$\Name{z}$}
						node [midway, above] {$B$}
								   ;
	\draw (Q0.south) node [above] {$\Name{z'}$} -- (Q1.north) node [below] {$\Name{z}$}
						node [midway, right] {$B$};
	\draw (Q1.west) node [right] {$\Name{z'}$} -- (fini.east)  node [left] {$\Name{f}$}
						node [midway, above] (fini-Q1) {$B$};
	\draw (Q0.east) node [left] {$\Name{y'}$} -- (C0.west)  node [right] {$\Name{x_0}$}
						node [midway, above] (Q0-C0) {$A$};
	\draw (Q1.east) node [left] {$\Name{y'}$} -- (C1.west) node [right] {$\Name{x_1}$}
						node [midway, above] (Q1-C1) {$A$};
	\draw (fini.west) node [right] {$\Name{u}$} -- (fini-env.east) node [midway, above]  {$\boom$};
	\draw (C0-env.west) -- (C0.east) node [left] {$\Name{r_0}$} node [midway, above] (C0-C0-env) {$\boom \otimes \bot$};
	\draw (C1-env.west) -- (C1.east) node [left] {$\Name{r_1}$} node [midway, above] (C1-C1-env) {$\boom \otimes \bot$};
	\begin{pgfonlayer}{bg}    %
		\fill[blue!25] (topleft) rectangle (botright) node [below] {server};
    \end{pgfonlayer}
	\end{tikzpicture}
	\caption{Topology of Compare-and-Set protocol, after two server acceptances. Boxes represent processes. Cuts  are represented by edges connecting two channels. The dual of each session type is omitted for simplicity.}
	\label{figure:cas-layout}
\end{wrapfigure}

The coexponentials play a central r\^{o}le here: $\exc$ is used to represent the fact that this
register provides a server session at a unique end point, and $\que$ is used to collect requests for
a CAS operation to this single end point. We see that every feature of client-server interaction, as
described in points (i)--(iv) of \cref{section:problem}, is modelled. 

The fact we are able to implement a synchronization primitive like CAS shows that the server-client
rules also provide an additional safeguard, namely that \emph{server acceptance is atomic}. While
the actual CAS is \emph{not} an atomic operation---as many things are happening in parallel---the
causal flow of information ensures that the state implicitly remains atomic. To illustrate the type
of atomicity we have, consider an alternative reaction sequence where the two clients are
immediately accepted before any other reaction. \cref{figure:cas-layout} shows the process topology
of the scenario where $\Proc{C_0}$ is accepted immediately before $\Proc{C_1}$. Each client is
connected to the one of the two worker processes $\Proc{Q}$ with client protocol $A$, and the worker
processes are connected to each other and $\Proc{P}$ with internal server protocol $B$. Which
specific worker process a client connects to is determined by the client's position in the queue,
before the coexponential reaction \ruleref{Claro-QueA} takes place. The clients' positions in the
layout also determine the final result of the reaction up to structural equivalence, even before the
computation of the output takes place.

\section{A session-typed language for server-client programming}
  \label{section:csgv}

As the example of the previous section shows, \hcpcoexp~is a particularly low-level language. This
is a feature of essentially all variants of linear logic as used for session typing, including
\citeauthor{kokke2019better}'s HCP \citeyearpar[Example 2.1]{kokke2019better}, and Wadler's CP
\cite[\S 2.1]{atkey2017observed} \cite[\S 3.1]{lindley2016}. Consequently, the need arises for
higher-level notation to help us write richer examples that illustrate the degree of channel sharing
in CSLL. We follow the lead of \citet[\S 4]{wadler2014} and introduce a higher-level, session-typed
functional language, which we call CSGV.

CSGV is a linear $\lambda$-calculus augmented with session types and communication primitives. It is
based on the influential work of \citet{gay2010linear}. Over the past decade many variations of this
language have been proposed; see e.g. \citet{lindley2015semantics,
lindley2016talking,lindley_lightweight_2017} and \citet{fowler2019exceptional}. CSGV extends the
version given by \citeauthor{wadler2014} with primitives for client-server interaction. Like the
approach in \emph{loc. cit.} we do not directly endow CSGV with a semantics. Instead, we formulate a
type-preserving translation into \hcpcoexp, which indirectly provides an execution mechanism.
Naturally, the client-server primitives translate to the coexponential rules of CSLL.

\subsection{Source Language and the translation}

\paragraph{Types}

The types of CSGV consist of standard functional types and session types. While the former are used
to classify values, the latter are used to describe the behaviour of channels. Compared to
\citet{wadler2014} we have added sum types, and session types for server-client shared channels.
\begin{align*}
	T, \dots\ \Coloneqq\  \quad
					& T \multimap T \mid
				T \rightarrow T \mid
				T + T \mid
				T \otimes T \mid
				\textsf{Unit} \mid
				T_S \\
	T_S, \dots\ \Coloneqq \quad
		& \send T.T_S \tag{output value of type $T$, then behave as $T_S$}\\
		\mid \ & \recv T.T_S \tag{input value of type $T$, then behave as $T_S$}  \\
		\mid \ & T_S \oplus T_S \tag{select from options} \\
		\mid \ & T_S \with T_S \tag{offer choice} \\
		\mid \ & \endp \mid \endm \tag{end-of-session} \\
		\mid \ & \que T_S \tag{request $T_S$ session} \\
		\mid \ & \exc T_S \tag{serve $T_S$ session}
\end{align*}
Both the functional types and the session types of CSGV are translated to the linear types of CSLL.
The functional part closely follows \citeauthor{wadler2014} in using the `call-by-value' embedding
of intuitionistic logic into linear logic \cite{benton_linear_1996,maraistcall1995,maraist1999}. The
session types are translated as follows:
\begin{align*}
	\bra*{\send T.T_S} &\defeq\   \negg{\bra*{T}} \parr \bra*{T_S}
	&
	\bra*{T_S \with U_L} &\defeq\  \bra*{T_S} \oplus \bra*{U_L}
	&
	\bra*{\endm} &\defeq\   \punit
	\\
	\bra*{\recv T.T_S} &\defeq\  \bra*{T} \otimes \bra*{T_S}
	&
	\bra*{T_S \oplus U_L} &\defeq\  \bra*{T_S} \with \bra*{U_L}
	&
	\bra*{\endp} &\defeq\  \ounit
	\\
	\bra*{\que T_S} &\defeq\  \exc \bra*{T_S}
	&
	\bra*{\exc T_S} &\defeq\  \que \bra*{T_S}
\end{align*}
As noted by \citet[\S 4.1]{wadler2014}, the connectives translate to the dual of what one might
expect. The reason is that channels are used in the opposite way. Consider the session type $\send
T.S$: sending a value in CSGV is translated as inputting a channel on which you can send it in CSLL.
Similarly, providing a service $\exc S$ is interpreting as inputting a channel on which the result
will be served.

\paragraph{Duality}

We define duality on session types in the standard way; it is obviously an involution.
\begin{align*}
	\overline{\send T.T_S} &\defeq\  \recv T.\overline{T_S}
	&
	\overline{\send T.T_S} &\defeq\  \recv T.\overline{T_S}
	&
	\overline{T_S \oplus U_L} &\defeq\  \overline{T_S} \with \overline{U_L}
	\\
	\overline{T_S \with U_L} &\defeq\  \overline{T_S} \oplus \overline{U_L}
	&
	\overline{\que T_S} &\defeq\  \exc \overline{T_S}
	&
	\overline{\exc T_S} &\defeq\  \que \overline{T_S}
\end{align*}
The translation is a homomorphism of involutions:
\begin{lemma}
	$\bra*{\overline{T_S}} = \negg{\bra*{T_S}}$.
\end{lemma}
\noindent Thus, connecting channels in CSGV will be translated to cuts in linear logic.

\begin{definition}
	The set of \emph{unlimited types} is defined inductively as follows.
	\begin{itemize}
		\item $\unit$ and $T \rightarrow U$ are unlimited.
		\item $T + U$ and $T \otimes U$ are unlimited whenever $T$ and $U$ are.
	\end{itemize}
	All other types are \emph{linear}.
\end{definition}
Values of unlimited types can be discarded and duplicated, because they are translated to CSLL types
that admit weakening and contraction. Categorical considerations \cite[\S
6.5]{mellies2009categorical} lead us to consider $T \otimes U$ unlimited whenever $T$ and $U$ are,
which is finer-grained than \emph{loc. cit.}

\paragraph{Terms}
CSGV is a linear $\lambda$-calculus, extended with constructs for sending and
receiving messages.
	\begin{align*}
	\Prog{L,M,N} \Coloneqq \quad
		& \Name{x} \mid
			\Unit \mid
			\Lam{x}{N} \mid
			\Jux{M}{N} \mid
			\Pair{M}{N} \mid
			\LetP{x}{y}{M} \; \Prog{N}  \\
		\mid \ & \Left{M} \mid \Right{M} \mid
			\Match{L}{x}{M}{N}
			\tag{functional fragment}\\
		\mid \ & \Send{M}{N} \mid \Recv{M} \tag{send and receive}\\
		\mid \ & \SelectL{M} \mid \SelectR{M} \mid \SCase{L}{x}{M}{N} \tag{select options}\\
		\mid \ & \Term{M} \tag{terminate $M$}\\
		\mid \ & \Conn{x}{M}{y}{N} \tag{connect $x$ of $M$ to $y$ of $N$}\\
		\mid \ & \ReqW{x} \tag{end client pool} \\
					\mid \ & \ReqA{x}{x'}{M} \tag{extract client interface} \\
					\mid \ & \Serv{y}{L}{z}{M}{f}{N} \tag{server construction}
\end{align*}

\paragraph{Typing rules}

The environments of CSGV are given by
$\Gamma, \dots \Coloneqq\ \bullet \mid\Gamma, \Name{x} : T
$. The translation of types is extended to environments pointwise.

\begin{figure}
	\small
	\begin{flushleft}
		$
		\Trans{
		\inferH{Recv}{
			\IsProg{\Gamma}{M}{\recv T.T_S}
		}{
			\IsProg{\Gamma}{\Recv{M}}{T \otimes T_S}
		}
		}[z]
		\defeq\ 
		\IsProc{\bra[z]{M}}{\negg{\bra*{\Gamma}}, \Name{z}: \bra*{T} \otimes \bra*{T_S}}
		$
	\end{flushleft}
	\begin{flushleft}
		$
		\Trans{
		\inferH{Send}{
			\IsProg{\Gamma}{M}{T} \\
			\IsProg{\Delta}{N}{\send T.T_S}
		}{
			\IsProg{\Gamma,\Delta}{\Send{M}{N}}{T_S}
		}
		}[z]
		\defeq\ 
		$
	\end{flushleft}
	\begin{mathpar}
		\inferrule{
			\inferrule*[right=$\otimes$]{
				\IsProc{\bra[y]{M}}{\negg{\bra*{\Gamma}}, \Name{y}:\bra*{T}}
				\\
				\IsProc{\Link{x'}{z}}{\Name{x'} : \negg{\bra*{T_S}}, \Name{z}:\bra*{T_S}}
			}{
				\IsProc{\Out{x'}{y}{\Par*{
					\bra[y]{M}
				}{
					\Link{x'}{z}
				}}}{
				\negg{\bra*{\Gamma}}, \Name{x'}:\bra*{T}\otimes \negg{\bra*{T_S}}, \Name{z}:\bra*{T_S}}
			}
			\\
			\IsProc{\bra[x]{N}}{\negg{\bra*{\Delta}}, \Name{x}:\negg{\bra*{T}} \parr \bra*{T_S}}
		}{
			\IsProc{\New{x}{x'}{\Par*{
				\Out{x'}{y}{\Par*{
					\bra[y]{M}
				}{
					\Link{x'}{z}
				}}
			}{
				\bra[x]{N}
			}}}{
				\negg{\bra*{\Gamma}},\negg{\bra*{\Delta}}, \Name{z}:\bra*{T_S}}
		}
	\end{mathpar}
	\begin{flushleft}
		$
		\Trans{
		\inferH{Conn}{
			\IsProg{\Gamma, \Name{x}:T_S}{M}{\endm}\\
			\IsProg{\Delta, \Name{y}:\overline{T_S}}{N}{T}
		}{
			\IsProg{\Gamma,\Delta}{\Conn{x}{M}{y}{N}}{T}
		}
		}[z]
		\defeq\ 
		$
	\end{flushleft}
	\begin{mathpar}
		\inferrule*[right=Cut]{
			\inferrule*[right=Cut]{
				\IsProc{\bra[y]{M}}{\negg{\bra*{\Gamma}}, \Name{x}:\negg{\bra*{T_S}}, \Name{y}:\punit}
				\\
				\IsProc{\Out{z}{}{\Stop}}{\Name{z}:\ounit}
			}{
			\IsProc{\New{y}{z}{\Par*{
				\bra[y]{M}
			}{
				\Out{z}{}{\Stop}
			}}}{\negg{\bra*{\Gamma}}, \Name{x}:\negg{\bra*{T_S}}}
			}
			\\
			\IsProc{\bra[z]{N}}{\negg{\bra*{\Delta}}, \Name{y}:\bra*{T_S}, \Name{z}:\bra*{T}}
		}{
			\IsProc{\New{x}{y}{\Par*{
				\New{y}{z}{\Par*{
				\bra[y]{M}
			}{
				\Out{z}{}{\Stop}
			}}
			}{
				\bra[z]{N}
			}}}{\negg{\bra*{\Gamma}},\negg{\bra*{\Delta}}, \Name{z}:\bra*{T}}
		}
	\end{mathpar}
	\caption{CSGV Typing Rules and Translation to CSLL: linear session part}
	\label{figure:csgv-linear}
\end{figure}

\begin{figure}
	\small
	\begin{flushleft}
		$
		\Trans{
			\inferH{ReqW}{
				\strut
			}{
				\IsProg{\Name{x} : \que T_S}{\ReqW{x}}{\endm}
			}
		}[z] \defeq\
		\inferrule{
			\inferrule*{
				\IsProc{\Stop}{\emptyset}
			}{
				\IsProc{\QueW{x}{\Stop}}{\Name{x} : \que \negg{\bra*{T_S}}}
			}
		}{
			\IsProc{\In{z}{}{\QueW{x}{\Stop}}}
						 {\Name{x} : \que \negg{\bra*{T_S}}, \Name{z} : \bot}
		}
		$
	\end{flushleft}
	
	\begin{flushleft}
		$
		\Trans{
			\inferH{ReqA}{
				\IsProg{\Gamma , \Name{x'} : T_S}{M}{\endm} \\
			}{
				\IsProg{\Gamma, \Name{x} : \que T_S}{\ReqA{x}{x'}{M}}{\que T_S}
			}
		}[z] \defeq
		$
	\end{flushleft}

	\begin{mathpar}
			\inferrule*[vcenter, right=HMix2+QueA]{
				\inferrule*{
					\IsProc{\bra[u]{M}}
								 {\negg{\bra*{\Gamma}}, \Name{x'} : \negg{\bra*{T_S}}, \Name{u} : \punit}\\
					\IsProc{\Out{v}{}{\Stop}}{\Name{v} : \ounit}
				}{
					\IsProc{
						\New{u}{v}{
							\Par*{\bra[z]{M}}{\Out{v}{}{\Stop}}
						}
					}{\negg{\bra*{\Gamma}}, \Name{x'} : \negg{\bra*{T_S}}}
				} \\
				\IsProc{\Link{x}{z}}{\Name{x} : \que \negg{\bra*{T_S}}, \Name{z} : \exc \bra*{T_S}}
			}{
				\IsProc{
					\QueA{x}{x'}{
						\Par*{
							\New{u}{v}{\Par*{\Out{z}{}{\Stop}}{\bra[u]{M}}}
						}{
							\Link{x}{z}
						}
					}
				}{
					\negg{\bra*{\Gamma}}, \Name{x} : \que\negg{\bra*{T_S}}, \Name{z} : \exc \bra*{T_S}
				}
			}
	\end{mathpar}

	\begin{flushleft}
		$
		\Trans{
			\inferH{Serv}{
				\IsProg{\Delta}{L}{T} \\
				\IsProg{\Name{z}:T, \Name{y}:T_S}{M}{T} \\
				\IsProg{\Sigma, \Name{f}:T}{N}{U} 
			}{
				\IsProg{\Delta,\Sigma, \Name{y} : \exc T_S}{\Serv{y}{L}{z}{M}{f}{N}}{U}
			}
		}[u] \defeq
		$
	\end{flushleft}

	\begin{mathpar}
		\inferrule*[right=Claro]{
			\inferrule*{
				\IsProc{\bra[i]{L}}{\negg{\bra*{\Delta}}, \Name{i}:\bra*{T}} \\
				\IsProc{\bra[u]{N}}{\negg{\bra*{\Sigma}}, \Name{f}: \negg{\bra*{T}}, \Name{u}:\bra*{U}}
			}{
				\IsProc{\Par{\bra[i]{L}}{\bra[u]{N}}}{
					\negg{\bra*{\Delta}}, \Name{i} : \bra*{T}
						\mid
					\negg{\bra*{\Sigma}}, \Name{f} : \negg{\bra*{T}}, \Name{u}:\bra*{U}   
				}
			} \\
			\IsProc{\bra[z']{M}}{\Name{z}:\negg {\bra*{T}}, \Name{y} : \negg{\bra*{T_S}}, \Name{z'}: \bra*{T}}
		}{
			\IsProc{\ExcP*{y}{\Par{\bra[i]{L}}{\bra[u]{N}}}{i}{f}{z}{z'}{y}{\bra[z']{M}}}{
			\negg{\bra*{\Delta}}, \Name{y} : \exc \negg{\bra*{T_S}}, \negg{\bra*{\Sigma}}, \Name{u} : \bra*{U}}
		}
	\end{mathpar}

	\caption{CSGV Typing Rules and Translation to CSLL: shared session part}
	\label{figure:csgv-shared}
\end{figure}

Selected typing rules of CSGV are given in \cref{figure:csgv-linear,figure:csgv-shared}. Most rules
follow \citet[\S 4.1]{wadler2014} to the letter, and are therefore omitted. In the interest of
economy we also give the translation to CSLL at the same time. The translation is defined by
induction on the typing derivations of CSGV. As the purpose of a CSGV program is the computation of
a value of a distinguished type, the translation must privilege a single name over which this value
will be returned. Thus, given a choice of name $\Name{z}$ and a typing derivation
$\IsProg{\Gamma}{M}{T}$, we write $\Trans{\IsProg{\Gamma}{M}{T}}[z]$ for its translation into CSLL.
Somewhat abusively we will sometimes also write $\IsProc{\bra[z]{M}}{\negg{\bra*{\Gamma}}, \Name{z}
: \bra*{T}}$ for the translated term. This slight abuse of notation also reveals the intended
typing.

The novelty here is in the CSGV rules for client-server interaction, and their translation into
CSLL. A name of shared client type $\que T_S$ can be seen as a form of `capability' for talking to
the server. \ruleref{ReqW} discards this capability, signalling the end of the client pool.
\ruleref{ReqA} uses it to spawn a fresh channel $\Name{x'}$ on which a client $\Prog{M}$ will talk
to a server, and returns the capability back to the caller. The client $\Prog{M}$ itself has type
$\endm$: it does not return valuable information, but uses values and channels found in $\Gamma$.

Dually, $\exc T_S$ is the type of a server channel. \ruleref{Serv} constructs a server from three
components. $\Prog{L}$ computes the initial state of the server. Given the current state in
$\Name{z}$, and a client channel $\Name{y}$, $\Name{M}$ serves the client listening on $\Name{y}$,
and then returns the next state of the server. $\Prog{N}$ finalizes the server. Note that the
so-called server `state' here could well be a channel itself, enabling bidirectional interleaving
communication---a design we will explore in \cref{section:voting-game}.

The \ruleref{Serv} typing rule is quite restrictive, in that it does not allow anything from the
environments $\Delta$ and $\Sigma$ to be used in the term $M$ which computes the next state of the
server. Fortunately, the following derivable rule allows us to weave some non-linear values of types
$\vec{V}$ in the server.
\begin{mathpar}
	\inferrule{
		\IsProg{\Delta}{L}{T}
		\and
		\IsProg{\Name{\vec{v}}:\vec{V}, \Name{z}:T, \Name{y}:T_S}{M}{T}
		\and
		\IsProg{\Sigma, \Name{f} : T}{N}{U}
		\and
		\text{$\vec{V}$ unlimited}
	}{
		\IsProg{\Name{\vec{v}} : \vec{V}, \Delta, \Sigma, \Name{y} : \exc T_S}
					 {%
					 	\underbrace{
							\Serv{y}{
								\Pair{\Name{\vec{v}}}{L}
							}{
								z'
							}{
								\LetP{\vec{v}}{z}{\Name{z'}} \; \Pair{\Name{\vec{v}}}{M}
							}{
								f'
							}{
								\LetP{\Name{\vec{v}}}{f}{\Name{f'}}\ N
							}
						 }_{%
						 	\textstyle
							\Serv*{y}{L}{z}{M}{f}{N}
						 }
					}{U}
	}
\end{mathpar}
We will make crucial use of this derivable rule in a couple of our examples. We also also adopt the
common shorthands $\Let{x}{M}\; \Prog{N} \defeq (\Lam{x}{N})\Prog{M}$ and $\Let{\_}{M}\; \Prog{N}
\defeq (\Lam{z}{N})\Prog{M}$ for fresh $\Name{z} : \Unit$.

\subsection{Functional Data Structure Server}
	\label{section:csgv-data-server}

Our primitives can be used to protect a shared functional data structure. Without loss of
generality, we consider a server whose state is a purely functional queue $T$ with operations
\begin{mathpar}
	\Namesf{enq}: T \otimes A \rightarrow T
	\and
	\Namesf{deq} : T \rightarrow T \otimes (\textsf{Unit} + A)
	\and
	\Namesf{empty} : T
\end{mathpar}
In particular, $\Namesf{deq}$ could return $\textsf{Unit}$ if the queue is empty. The server will
talk to a client via a channel of type $T_S \defeq (\recv A.\endp) \with (\send (\textsf{Unit} +
A).\endp)$. One client receives an $A$ along $\Name{r_0}$, and enqueues it. The other one dequeues
an element, and sends it along $\Name{r_1}$.

\begin{minipage}{0.4\textwidth}
	\begin{align*}
		\Prog{L} \defeq\  &\Namesf{empty}  \\
		\Prog{M_{enq}} \defeq\  &\LetP{v}{y''}{\Recv{\Name{y'}}}  \\
					&\Let{\_}{\Term{\Name{y''}}} \; \Namesf{enq} (\Name{z}, \Name{v}) \\
		\Prog{M_{deq}} \defeq\ & \LetP{v}{z'}{\Namesf{deq}\; \Name{z}}  \\
				& \Let{\_}{\Term*{\Send{\Name{v}}{\Name{y'}}}}\; \Name{z'}\\
					\Prog{M} \defeq\  & \SCase{\Name{y}}{y'}{M_{enq}}{M_{deq}}
	\end{align*}
\end{minipage}
\begin{minipage}{0.5\textwidth}
\begin{align*}
	\Prog{C_0} \defeq\
		&\LetP{v}{r_0'}{\Recv{\Name{r_0}}} \\
			&\Let{\_}{\Term{\Name{r_0'}}} \\
			&\Let{x_0'}{\Send{\Name{v}}{\SelectL*{\Name{x_0}}}} \; \Name{x_0'} \\
	\Prog{C_1} \defeq\
		&\LetP{v}{x_1'}{\Recv{\SelectR*{\Name{x_0}}}} \\
			&\Let{\_}{\Term*{\Send{\Name{v}}{r1}}} \; \Name{x_1'} \\
	\Progsf{clients} \defeq\  & \Let{x}{\ReqA{x}{x_0}{C_0}} \\
		& \Let{x}{\ReqA{x}{x_1}{C_1}} \; \ReqW{x}
		\end{align*}
\end{minipage}

\noindent We then define $\Progsf{server} \defeq \Serv{y}{L}{z}{M}{f}{\Name{f}}$, and see that
\[
	\IsProg{\Name{r_0} : \recv A.\endp, \Name{r_1} : \send (\textsf{Unit} + A).\endp}
				 {\Conn{x}{\textsf{clients}}{y}{\textsf{server}}}
				 {T}
\]

\subsection{Nondeterminism}
	\label{section:csgv-nondeterminism}

Unsurprisingly, the races in our system suffice to implement nondeterministic choice. We define
$\boo \defeq \textsf{Unit} + \textsf{Unit}$. We implement $\True$ and $\False$ by the obvious
injections, and the conditional by
\begin{mathpar}
	\inferrule*[right=$+E$]{
		\IsProg{\Gamma}{B}{\boo} \\
			\IsProg{\Delta}{M}{V}
		\\
			\IsProg{\Delta}{N}{V}
	}{
		\IsProg{\Gamma, \Delta}{\IfThen{B} \; M \; \Else \; N \defeq \Match{L}{x}{M}{N}}{V}
	}
\end{mathpar}
The clients $C_0, C_1$ respectively send $\False$ and $\True$ over a channel. We also define a
server with a pair of Booleans as internal state. The first component records whether the server has
ever received a value. When a value is received it is stored in the second component, and any
further values received are discarded.

\begin{minipage}{0.40\textwidth}
\begin{align*}
	\Prog{C_0} \defeq\
		& \Send{\False}{\Name{x_0}} \\
	\Prog{C_1} \defeq\
		& \Send{\True}{\Name{x_1}} \\
	\Progsf{clients} \defeq\
		& \Let{x}{\ReqA{x}{x_0}{C_0}} \\
		& \Let{x}{\ReqA{x}{x_1}{C_1}}\\
		& \ReqW{x}
\end{align*}
\end{minipage}
\begin{minipage}{0.45\textwidth}
\begin{align*}
	\Prog{M} \defeq\
		& \LetP{z_0}{z_1}{\Name{z}} \\
		& \LetP{v}{y'}{\Recv{\Name{y}}} \\
		& \Let{\_}{\Term{\Name{y'}}} \\
		& \IfThen{\Name{z_0}}\;  \Name{z}\; \Else\;  \Pair{\True}{\Name{v}}\\
	\Prog{N} \defeq\
		& \LetP{f_0}{f_1}{\Name{f}} \; \Name{f_1}
	 \end{align*}
\end{minipage}

\noindent We define a $\Progsf{server} \defeq\ \Serv{y}{\Pair{\False}{\False}}{z}{M}{f}{N}$
beginning from $\Pair{\False}{\False}$. We then have that
\[
	\IsProg{}{\Progsf{flip} \defeq \Conn{x}{\textsf{clients}}{y}{\textsf{server}}}{\boo}
\]
This program is translated to $\IsProc{\bra[y]{\Progsf{flip}}}{\Name{y} : \boom}$, with reactions
$\red*{\bra[y]{\Progsf{flip}}}{\FalseP{y}}$ and $\red*{\bra[y]{\Progsf{flip}}}{\TrueP{y}}$. We can
use this to implement a nondeterministic choice operator:
\begin{mathpar}
	\inferrule{
		\IsProc{P}{\Gamma} \\
		\IsProc{Q}{\Gamma}
	}{
		\IsProc{\Choose{P}{Q} \defeq \New{x}{y}{\Par*{\bra[y]{\Progsf{flip}}}{\IfO{x}{P}{Q}}}}
					 {\Gamma}
	}
\end{mathpar}
such that $\red*{\Choose{P}{Q}}{P}$ and  $\red*{\Choose{P}{Q}}{Q}$.

\subsection{Fork–join Parallelism}
	\label{section:fork-join}

\emph{Fork-join parallelism} \citep{conway1963multiprocessor} is a common model of parallelism in
which child processes are \emph{forked} to perform computation simultanously. Once they have
finished, they are \emph{joined} by the parent process, which collects their work and produces the
final result. We assume a `heavyweight' function $\Prog{h} : A \to B$ that will run on forked
processes, and a relatively less expensive function $\Prog{g} : B \to B \to B$ that will combine
their answers. We also assume an initial value $\Prog{g_0} : B$, and a list of `tasks' $\Prog{xs} :
[A]$ to process. $[A]$ is the type of lists of $A$, and is supported by the operations:
\begin{align*}
	\Progsf{nil} &: [A]
	&
	\Progsf{cons} &: A \to [A] \to [A]
	&
	\Prog{\textsf{fold}_{C}} &: C \to (C \to A \to C) \to [A] \to C
\end{align*}
Let
\begin{align*}
	\Progsf{clients} & \defeq \
		\Let{y}{
			\textsf{fold}_{\que T_S} 
				\; \Name{c}
				\; \Progparens{
							\Lam{x}{\Lam{v}{
								\ReqA{x}{x'}{(\Let{v'}{\Prog{h} \; \Name{v}} \; \Send{\Name{v'}}{\Name{x'}})}
							}}}
				\; \Prog{xs}
		} 
				\; \ReqW{y} \\
	\Prog{M} &\defeq\
			\LetP{v}{y'}{\Recv{\Name{y}}} \; 
				\Let{\_}{\Term{\Name{y'}}} \; 
					\Prog{g}\; \Name{z}\; \Name{v}
\end{align*}
The client protocol is $T_S \defeq \send B.\endm$. To form the client pool, we begin with a shared
client channel $\Name{c} : \que T_S$. We fold over the list $\Prog{xs} : [A]$, adding a forked
process for each `task' $\Name{v} : A$ to the client pool. Each one of these forked processes will
compute $\Prog{h} \; \Name{v} : B$, and send it over its fresh channel $\Name{x'} : T_S$. We have
$\IsProg{\Name{c} : \que T_S}{\textsf{clients}}{\endm}$.

We let $\Progsf{server} \defeq\ \Serv*{y}{\Prog{g_0}}{z}{M}{f}{\Name{f}}$. The server begins with
internal state $\Prog{g_0} : B$. It nondeterministically receives the result of a computation of
$\Name{h}$ from each client, and `merges' it into its state using $\Prog{g}$. In the end, it returns
the result. We have $\IsProg{\Name{z} : B, \Name{y} : \overline{T_S}}{M}{B}$, and thus
$\IsProg{\Name{y} : \exc \overline{T_S}}{\textsf{server}}{B}$. We use
$\Derivsf{serve'}$ to pass unlimited parameters to the server internals.

Putting this system together, we get 
\[
	\IsProg{}{\ForkJoin{\Prog{h}}{\Prog{g_0}}{\Prog{g}}{\Prog{xs}} \defeq \Conn{x}{\textsf{clients}}{y}{\textsf{server}}}{B}
\]

The fork-join paradigm is often used in industrial parallelization frameworks
\citep{dagum1998openmp,reinders2007intel,blumofe1995cilk,leijen2009design}. The background languages
and type systems usually do not use any logical devices for concurrency. In particular, concurrent
behaviour is not controlled by the type system, as it is here. Note that fork-join requires each spawned process to be independent of each other and only communicate with the parent process, which is precisely caputured by the linearity restriction of our system. 

Another parallel computation model is that of \textit{async-finish}. It is more expressive than
fork-join, as it allows spawned processes to spawn further processes. The whole tree is then joined
at the root process, with no regard to the spawning thread of each child. Our system(s) does not
support that: in the \ruleref{ReqA} rule, the spawned process $\Prog{M}$ is only given a channel
$\Name{x'} : T_S$, which cannot be used to spawn further processes in the same pool. However, it is
well-known is that nested parallelism is still possible, but each child has to spawn its own
instance of a fork-join computation, which does not interfere with the root process.

An even more expressive model is that of \textit{futures} \citep{halstead1984implementation}. A
future is a first-class value that represents a computation running in parallel to the current
process. At any point it can be \textit{forced} to obtain its result; if it has not finished an
error may be returned, or the process forcing it may block. While fork-join or async-finish spawned
processes are independent of each other, futures may be passed around freely (in any
reasonably expressive language) and introduce rich interactions. This
seems to be in violation of the linearity restriction of our system(s), and thus cannot be
expressed. Nevertheless, the \ruleref{Conn} rule can be seen as a very restricted form of future,
where the spawned process can only communicate with the parent process. More discussions about the
difference between these models is given by \citet{acar2016tapp}.

\subsection{Keynes' beauty contest}
	\label{section:voting-game}

Until this point we have seen only relatively simple examples of server-client interaction. In all
cases, the `internal server protocol' we have used has consisted of an unlimited type, the values of
which we can replicate and discard. This leads to the false impression that clients access the
server one-by-one in a sequential manner, so that clients that connect later are unable to influence
the information observed by the earlier ones. In this section we present an example that shows this
to be untrue. In particular, if the `internal server protocol' consists of a session type itself,
then we witness bidirectional, interleaving behaviour. This distinguishes our systems from those
based on \emph{manifest sharing} \cite{balzer2017}.

We present a server implementing the umpire in a \emph{Keynesian beauty contest} \cite[\S
12]{keynes_1936}. Keynes' beauty contest works as follows. A newspaper runs a beauty contest in
which readers have to pick the prettiest faces from a set of photographs. The competitors are not
those pictured, but the readers themselves: if they pick the faces which are judged to be the
prettiest by the majority, they will win a prize. Thus, the readers are incentivized to estimate the
aesthetics of the majority. 

We will implement a restricted version of this scenario, where a pool of clients votes for a Boolean
value. The server then counts the votes, and awards a payoff of $0$ or $1$ (represented by $\False$
and $\True$ respectively) to each client, indicating whether they voted for the winner. This is
obviously impossible if the server handles requests sequentially. In fact, the server will be
implemented by spawning a network of interconnected processes, each of which will handle one vote.

We first define the following derived rule. Informally, this rule expresses that a process that uses
a channel of type $T_S$ is also exposing a channel of dual type $\overline{T_S}$.
\begin{mathpar}
	\inferrule*{
		\IsProg{\Gamma, \Name{x}:T_S}{M}{\endm} \\
		\IsProg{\Name{y}: \overline{T_S}}{\Name{y}}{\overline{T_S}}
	}{
		\IsProg{\Gamma}{\Inv{x}{M} \defeq \Conn{x}{M}{y}{\Name{y}}}{\overline{T_S}}
	}
\end{mathpar}

The client
session type is $C_S \defeq \send \boo. \recv \boo. \endm$, and the internal server protocol is $T_S
\defeq \recv (\nat \otimes \nat) . \send \boo. \endp$, where $\nat$ is the type of natural numbers.
We assume a bunch of standard functions:
\begin{mathpar}
	\Progsf{zero} : \nat \and
	\Progsf{succ} : \nat \rightarrow \nat \and
	\Prog{\leq} : \nat \rightarrow \nat \rightarrow \boo \and
	\Progsf{eq} : \boo \rightarrow \boo \rightarrow \boo
\end{mathpar}
where $\Progsf{eq}$ checks Boolean values for equality. We let
\begin{align*}
	L \defeq\
		& \Let{w'}{\Send{\Pair{\Namesf{zero}}{\Namesf{zero}}}{\Name{w}}}	\tag{send initial state}\\
		& \LetP{\_}{w''}{\Recv{\Name{w'}}}  \; \Name{w''} 							 	\tag{receive final value}\\
	N \defeq\
	  & \LetP{s}{f'}{\Recv{\Name{f}}} 				\tag{receive final count} \\
		& \LetP{n_0}{n_1}{\Name{s}} 										\tag{unpack state} \\
		& \Let{f''}{\Send{(\Name{n_0} \mathbin{\Prog{\leq}} \Name{n_1})}{\Name{f'}}}
																										\tag{compute winner and notify the last worker process} \\
		& \Term{\Name{f''}}															\tag{close channel}	\\	
	M \defeq\
		& \LetP{s}{z'}{\Recv{\Name{z}}}									\tag{get state} \\
		& \LetP{n_t}{n_f}{\Name{s}} 										\tag{unpack state} \\
		& \LetP{b}{y'}{\Recv{\Name{y}}} 								\tag{receive a vote} \\
		& \Let{s'}{
				\IfThen{\Name{b}} \; 
					\Pair{\Progsf{succ} \; \Name{n_t}}{\Name{n_f}} \;
				\Else \;
					\Pair{\Name{n_t}}{\Progsf{succ} \; \Name{n_f}}
			}																							\tag{increment the right counter} \\
		& \Let{w'}{\Send{\Name{s'}}{\Name{w}}} 					\tag{pass new state to next worker process} \\
		& \LetP{b'}{w''}{\Recv{\Name{w'}}} 							\tag{receive winner from next worker process} \\
		& \Let{\_}{\Term*{\Send{(\Progsf{eq}\; \Name{b}\; \Name{b'})}{\Name{y'}}}} 
																										\tag{tell competitor if they won, close channel} \\
		& \Let{\_}{\Term*{\Send{\Name{b'}}{\Name{z'}}}} \tag{forward winner on, close channel} \\
		& \Name{w''}
\end{align*}
We define $\Progsf{server} \defeq \Serv{y}{\Inv{w}{L}}{z}{\Inv{w}{M}}{f}{N}$. The components are typed as
\begin{equation*}
	\begin{aligned}[c]
		\IsProg*{\Name{w} : \overline{T_S}}{L}{\endm}\\
		\IsProg*{\Name{w} : \overline{T_S}, \Name{z} : T_S, \Name{y} : \overline{C_S}}{M}{\endm}
	\end{aligned}
		\qquad
	\begin{aligned}[c]
		\IsProg*{\Name{f} : T_S}{N}{\unit}\\
		\IsProg*{\Name{y} : \exc T_S}{\textsf{server}}{\unit}
	\end{aligned}
\end{equation*}
The details of this protocol are subtle. The construct $\Inv{x}{-}$ allows us to use programs which
only have side-effects as internal server state, by inverting the polarity of one of the channels.
The server is initialized by $\Prog{L}$, which sets the state to be $(0, 0)$. It then listens on the
same channel to receive the winner, which it promptly discards. The server finalization $\Prog{N}$
receives the final tally of the votes, computes the winner, sends back the result, and closes the
channel.

The component $\Prog{M}$ is used to communicate with each competitor. It receives the state of the
server, the competitor's vote, and increments the appropriate tally. It then passes on this new state to
the next worker process $\Prog{M}$, which will communicate with the next competitor. This sets up an entire network of
worker processes $\Prog{M}$, one to serve each competitor. When the competitors have all cast their votes, $\Prog{N}$
computes the winner, and sends it back to the last worker process. This process then tells the
competitor whether they won, closes the channel to the competitor, and passes on the result to the worker process
serving the previous competitor, and so on. At the very end, the winner is passed to the initialization
process $\Prog{L}$.

We can then define a number of competitors $\IsProg{\Name{x_i} : \send \boo. \recv \boo.
\endm}{C_i}{\endm}$ who will cast their votes by sending a Boolean value and receive a payoff along
$\Name{x_i}$. These can be combined into a client pool, much in the same way as in previous
examples.

\begin{wrapfigure}{R}{0.5\textwidth}
	\begin{tikzpicture}[
		process/.style={rectangle, draw=black!60, fill=black!5, very thick, minimum size=12.5mm},
		null/.style={rectangle,draw=black!0, minimum size = 0mm},
		node distance=0.5cm
	]
	\node[process] (L) {$\Prog{L}$};
	\node[process] (N) [below=of L] {$\Prog{N}$};
	\node[process] (M0) [right=1cm of L] {$\Prog{M}$};
	\node[process] (M1) [right=1cm of N] {$\Prog{M}$};
	\node[process] (C0) [right=of M0] {$\Prog{C_0}$};
	\node[process] (C1) [right=of M1] {$\Prog{C_1}$};
	\node[null] (C0-env) [right=of C0] {};
	\node[null] (C1-env) [right=of C1] {};
	\node[null] (topleft) [above=of L.west] {};
	\node[null] (botright) [below=of M1.east] {};

	\draw[-latex] (L.east) node [left] {$\Name{i}$} to [out=45, in=135]  node [above] {$\nat \otimes \nat$}
	(M0.west) node [right] {$\Name{z}$};
	\draw[latex-] (L.east) to  [out=-45,in=-135] node [below] {$\boo$} (M0.west);

	\draw[-latex] (M0.south) node [above] {$\Name{z'}$} to [out=-45,in=45] node [right] {$\nat \otimes \nat$} (M1.north) node [below] {$\Name{z}$};
	\draw[latex-] (M0.south) node [above] {$\Name{z'}$} to [out=-135,in=135] node [left] {$\boo$} (M1.north) node [below] {$\Name{z}$};

	\draw[-latex] (M1.west) node [right] {$\Name{z'}$} to [out=135,in=45] node [above] {$\nat \otimes \nat$} (N.east)  node [left] {$\Name{f}$};
	\draw[latex-] (M1.west) node [right] {$\Name{z'}$} to [out=-135,in=-45] node [below] {$\boo$} (N.east)  node [left] {$\Name{f}$};

	\draw[latex-] (M0.east) node [left] {$\Name{y}$} to [out=45,in=135] node [above] {$\boo$} (C0.west)  node [right] {$\Name{x_0}$};
	\draw[-latex] (M0.east) node [left] {$\Name{y}$} to [out=-45,in=-135] node [below] {$\boo$} (C0.west)  node [right] {$\Name{x_0}$};

	\draw[latex-] (M1.east) node [left] {$\Name{y}$} to [out=45,in=135] node [above] {$\boo$} (C1.west)  node [right] {$\Name{x_1}$};
	\draw[-latex] (M1.east) node [left] {$\Name{y}$} to [out=-45,in=-135] node [below] {$\boo$} (C1.west)  node [right] {$\Name{x_1}$};

	\begin{pgfonlayer}{bg}    %
		\fill[blue!25] (topleft) rectangle (botright) node [below] {server};
	\end{pgfonlayer}
	\end{tikzpicture}
	\caption{Layout of the voting game after coexponentials reactions but before other reactions. Boxes represents processes whose names are at the center of the boxes. Arrows represents directed mesages between processes with types of the data annotated. Labels on edges of boxes are the names of the channels to the processes.}
	\label{figure:voting-layout}
\end{wrapfigure}

If we have two such competitors $C_0$ and $C_1$ merged in a pool, and we connect them to $\Progsf{server}$,
we will obtain a process topology of the form illustrated in the schematic diagram of
\cref{figure:voting-layout}. Compared with \cref{figure:cas-layout} this diagram is intuitive but
loose on accuracy. Details such as $\endp$ and $\endm$ are left out. We have also spelled out the
protocols internals. For example, the server internal protocol $T_S$ is indicated by a forward arrow
$\nat \otimes \nat$ and a backward arrow $\boo$.

\section{Related work}
	\label{section:related-work}

\paragraph{Hypersequents and Session Types.}

Hypersequents were introduced to process calculi and Classical Linear Logic by \citet{montesi2018}. Another version of that system was studied in detail by
\citet{kokke2019better}. A reaction semantics similar to the one used here was given in a later paper
\citep{kokke2019taking}.

\paragraph{Clients, Servers, and Races in Linear Logic.}

Typing client-server interaction has been a thorn in the side of session types and Linear Logic. All
previous attempts rely on some version of the \ruleref{Mix} rule. Both \citet[\S
3.4]{wadler2014} and \citet[Ex. 2.4]{caires_2017} use \ruleref{Mix} to
combine clients into client pools. Kokke et al. implicitly use \ruleref{Mix} to type an otherwise
untypable client pool in \textsc{HCP} \citep[Ex. 3.7]{kokke2019better}.\footnote{This has been
confirmed to us by the authors.} Remarkably, none of these calculi demonstrate stateful server
behaviour, as we predicted using a semantic argument in \cref{section:problem}.

\citet{lindley2016} explore the additional power bestowed upon CP by \emph{conflating} dual
connectives. The conflation of $\whynot$ and $\ofc$ leads to the notion of \emph{access point}, a
dynamic match-making communication service on a single end point. In fact, the rules look eerily
close to the list-like formulation of our servers and generators. Access points prove too powerful:
they introduce stateful nondeterminism, racy communication, and general recursion. This impairs the
safety of CP by introducing deadlock and livelock. Our work shows that we can still safely obtain
the former two features without introducing the third.

Adding nondeterminism to CLL in a controlled fashion is complex. \citet{lindley2016} express a form
of \emph{nondeterministic local choice} in CP by conflating $\with$ and $\oplus$. The resultant form
of nondeterministic choice cannot induce the racy behaviour normally exhibited in the $\pi$-calculus
\citep[\S 2]{kokke2019races}. \citet{caires_2017} present a dual-context system based on
CLL+\ruleref{Mix} in which the same kind of nondeterministic local choice is expressed through a new
set of modalities, $\oplus$ and $\with$.\footnote{This is an intentional clash with external and
internal choice in Linear Logic.} These bear a similarity to the coexponential modalities presented
here, but they are used for nondeterminism instead. Their $\with$ modality has a monadic flavour,
and hence can be used to encapsulate nondeterminism `in the monad' in the usual manner in which we
isolate effects.

\citet{kokke2019races} drew inspiration from Bounded Linear Logic
\citep{girard_1992} to formulate a system for nondeterministic client-server interaction. They use
types of the form $\whynot_n A$ (standing for $n$ copies of $A$ delimited by $\parr$) and $\ofc_n A$
(standing for $n$ copies of $A$ delimited by $\otimes$). $\ofc_n A$ represents a pool of $n$
disjoint clients with protocol $A$, and $\whynot_n A$ a server that can serve exactly $n$ clients
with protocol $A$. While this is consistent with disjoint-vs.-connected concurrency, their system is
limited to serving a specific number of clients in each session. Thus, it fails to satisfy criterion
(i) in \cref{section:problem}, and does not form a satisfactory model.

\paragraph{Fixed Points in Linear Logic.}

Inductive and coinductive types---presented proof-theoretically as least and greatest fixed
points---were introduced in the context of higher-order Classical Linear Logic by
\citet{baelde2012least}. Baelde formulates a weakly normalization cut elimination procedure, which
albeit does not satisfy the subformula property. \citet{ehrhard2021categorical} study categorical
models for a slight extension of the propositional fragment of Baelde's system, which allow them to
infer certain facts about the behaviour of (co)inductive linear data types.

The structure of Baelde's system has been used to extend CP with inductive and coinductive types by
\citet{lindley2016talking}. This is our starting point in \cref{section:deriving-coexp}, but with
significant alterations along the way. First, our system is based on HCP. Second, the server rule
has been reformulated to support hyperenvironments, and server finalization
(\cref{section:decision-lists}). Third, our client pools allow permutation in order to enable
nondeterminism (\cref{section:decision-permutation}). Finally, our reaction semantics are tailored
to the specific setting, and are consequently simpler. In a separate strand of work,
\citet{toninho2014corecursion} introduce coinduction in a system of session types based on
Intuitionistic Linear Logic (ILL); see \citet[\S\S 1, 7]{lindley2016talking} for a comparison. 

\paragraph{Manifest sharing.}

Closely related to our work is the notion of \emph{manifest sharing} \citep{balzer2017}. This work
starts from a very different premise: a channel is either \emph{linear} (as the usual channels in
session types), or \emph{shared} between processes. This leads to an ILL-based system,
$\textsf{SILL}_S$, with two \emph{modes} and two modalities shifting between them \citep{reed2009}.
The switch to a shared channel is punctuated by the modalities. Thus, sharing \emph{manifests} in
the types. In some ways, $\textsf{SILL}_S$ is a much stronger system, as it features
\emph{equi-synchronizing} recursive session types. The price to pay is the introduction of deadlock.
\citet{balzer2019} develop an additional layer of the type system that protects from it.

Our work attempts to solve the expressivity problem of LL-based session types beginning from
Curry-Howard: we seek the minimal extension to Linear Logic that will enable us to write server and
client processes. Unlike manifest sharing, we remain committed to CLL and its duality. The result is
that our system has simpler rules, avoiding the notions of linear and shared channels (as
(co)exponentials internalise them), as well as the lock-like primitives used to introduce modalities
in \citet{balzer2017}. Moreover, we have remained committed to the goal of retaining the good
properties ensured by cut elimination in CLL (e.g. deadlock freedom). A drawback of this approach is
that our system inherits the linearity constraint from linear logic, and is thus unable to express
circular structures (such as Dijkstra's dining philosophers).

Both systems provide atomicity, but in radically different ways. In $\textsf{SILL}_S$ users access
the service in a mutually exclusive manner. This is not compatible with the usual view of typical
client-server interaction, where mutliple clients need to access the server simultanously in order
to exchange information. A common workaround is to decompose an interleaved session into a
`stateless' protocol consisting of several mini-sessions. Every client is then required to send a
`cookie' to identify themselves across mini-sessions. In our system accesses to the shared service
are concurrent, but causally atomic (\cref{section:pools}). As a result, interleaved sessions can be
expressed natively (\cref{section:voting-game}).

\paragraph{Differential Linear Logic.}

The rules for $\que$ given in \S\ref{section:deriving-coexp} are almost the same as the
\emph{coweakening}, \emph{codereliction} and \emph{cocontraction} rules that are added to $\ofc$ in
Differential Linear Logic (DiLL) \citep{ehrhard2018}. DiLL is equipped with nondeterministic
reduction and formal sums, and is thus believed to have something to do with concurrency. \citet{ehrhard2010} have produced an embedding of the \emph{finitary} $\pi$-calculus into DiLL, though that encoding has been criticized \citep{mazza2018}. A type of
client-server interactions---namely the encoding of ML-style reference cells into session
types---has been encoded by \citet{castellan2020} in a system based on the rules of DiLL. This work
relies on both the costructural rules and \ruleref{Mix}, so it is not clear which device primarily
augments expressive power. Our work shows that something akin to the costructural rules of DiLL
arises from the wish to form client pools. The exact relationship between coexponentials and DiLL
remains to be determined.

\paragraph{Multiparty Session Types.}

There is a nontrivial connection between our work and \emph{Multiparty Session Types}
\citep{honda_2008,honda_2016,coppo_2016}, which comprise a $\pi$-calculus and a behavioural type
system specifying interaction between multiple agents. The kinds of protocols expressed by
multiparty session types are `fully' choreographed, and involve a \emph{fixed} number of
participants. As such, they cannot model interactions with an arbitrary number of clients; nor can
they introduce a controlled amount of nondeterminism. Some of these expressive limitations have been
remedied in systems of \emph{Dynamic Multirole Session Types} \citep{denielou_2011}, which come at
the price of introducing \emph{roles} that parties can dynamically join or leave, and a notion of
quantification over participants with a role. Our system captures certain use-cases of roles using
only tools from linear logic, with little additional complexity.

Closer to our work is the approach of \citet{carbone_2017} to multiparty session types
through \emph{coherence proofs}. In \emph{op. cit.} the authors develop \emph{Multiparty Classical
Processes}, a version of CP with role annotations and the \emph{MCut} rule. The latter is a version
of the \ruleref{MultiCut} rule annotated with a \emph{coherence} judgment derived from
\citet{honda_2008}, which generalises duality and ensures that roles match appropriately. MCP does
not allow dynamic sessions with arbitrary numbers of participants, and hence cannot model
client-server interactions. MCP was later refined into the system of Globally-governed Classical
Processes (GCP) by \citet{carbone_2016}. Unlike these calculi, our work does not
require any consideration of coherence or local vs. global types.

\section{Conclusions and Further Work}
\label{section:conclusion}

We presented the system of Client-Server Linear Logic, which features a novel form of modality, the
coexponentials. We then showed how \hcpcoexp~can be used to model client-server interactions without
falling down the slippery slope of introducing \ruleref{Mix}. We comment on some directions for
future work.

\paragraph{Termination.}
It would be interesting to establish a \emph{termination} result for \hcpcoexp. This would prove
that the resulting calculus does not generate \emph{livelock}. We expect this proof to be somewhat
involved, which is why most work on Linear Logic and session types either fails to produce a proof,
or defers to Girard's proof for CLL \cite{wadler2014,aschieri2019par}.

\paragraph{Syntax.}
The weak $\exc$ rule listed in \cref{section:deriving-coexp} is expressed by folding $\otimes$ over
the set of formulas. This obstructs a particular commuting conversion in cut elimination. Similarly, presentation of the \textit{strong} exponential and its computational interpretation is omitted due to its unsatisfactory rules. We believe these issues are due to the limitation of sequent calculus, and alternative techniques are necessary to solve them.

\bibliographystyle{ACM-Reference-Format}
\bibliography{ll}

\appendix
\section{Coexponentials and Logical equivalences}

We may derive the following logical equivalences about coexponentials, which are dual to similar
laws for the exponentials.
\begin{mathpar}\label{coexp-logic-equiv}
	\exc \exc A \equiv \exc A \and
	\exc A \equiv \exc A \parr \exc A \and
	\exc \bot \equiv \bot \and
	\exc (A \with B) \equiv \exc A \parr \exc B \and
	\exc 0 \equiv \bot \and
	\que \que A \equiv \que A \and
	\que A \equiv \que A \otimes \que A \and
	\que 1 \equiv 1 \and
	\que (A \oplus B) \equiv \que A \otimes \que B \and
	\que \top \equiv 1
\end{mathpar}

\begin{theorem}
	In CLL with \rulenamestyle{Mix} and \rulenamestyle{BiCut}, exponentials and coexponentials coincide up to provability. That is: if we replace $\whynot$ and $\ofc$ in the rules for the exponentials with $\que$ and $\exc$ respectively, the resultant rule is provable using the coexponential rules, and vice versa.
\end{theorem}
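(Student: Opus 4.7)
The plan is to treat the theorem as two symmetric implications, deriving each of the four exponential rules (with $\whynot$ rewritten as $\que$ and $\ofc$ as $\exc$) from the coexponential rules plus \ruleref{Mix} and \ruleref{BiCut}, and conversely. Throughout the argument I will use three standing consequences of \cref{lem:mix-folklore}: under \ruleref{Mix} we have the canonical derivation of $\vdash \Gamma,\Delta$ from $\vdash \Gamma$ and $\vdash \Delta$; under \ruleref{BiCut} we have the converse implication $A \parr B \multimap A \otimes B$ and so may freely exchange comma-separated contexts with $\otimes$-folded contexts; and \ruleref{Mix0} is available from the axiom together with \ruleref{BiCut}.

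For the direction ``coexponential rules derive exponential rules on $\que,\exc$'', weakening is immediate by applying \ruleref{Mix} to the given $\vdash \Gamma$ and $\vdash \que A$ (obtained from $\que w$); dereliction is literally $\que d$; and the weak-$\exc$ promotion $\vdash \que\Gamma, A \Rightarrow \vdash \que\Gamma, \exc A$ follows by first collapsing the comma-separated $\que \Gamma$ into $\bigotimes \que\Gamma$ using \ruleref{BiCut}, applying the weak $\exc$ rule as stated in \cref{section:deriving-coexp}, and re-expanding with \ruleref{Mix}. The only subtle step is contraction $\vdash \Gamma, \que A, \que A \Rightarrow \vdash \Gamma, \que A$. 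Here I apply $\que c$ to two copies of the axiom $\vdash \exc\negg A, \que A$ to obtain $\vdash \exc\negg A, \exc\negg A, \que A$, and then eliminate both copies of $\que A$ against the two $\exc\negg A$'s using \ruleref{BiCut}. Observe that $\que c$ builds up precisely the ``cocontracted'' witness required to feed \ruleref{BiCut}, which is the conceptual heart of the equivalence.

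For the converse direction, $\que w$ with conclusion $\vdash \whynot A$ is obtained by applying standard $\whynot w$ to $\vdash \emptyset$ (i.e.\ \ruleref{Mix0}); $\que d$ is literally $\whynot d$; $\que c$ combines its two premises by \ruleref{Mix} into $\vdash \Gamma, \whynot A, \Delta, \whynot A$ and then applies ordinary $\whynot c$; and the weak $\exc$ rule translates to standard promotion after swapping $\bigotimes \whynot\Gamma$ for the comma-separated $\whynot\Gamma$ via \ruleref{Mix}, applying $\ofc$-promotion, and finally collapsing back with \ruleref{BiCut}.

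The main obstacle is the contraction step in the first direction: it is tempting to try to mimic ordinary contraction structurally, but the coexponential $\que c$ has an intrinsically ``multiplicative'' shape, with disjoint contexts in its two premises. The reconciliation is to feed \ruleref{BiCut} a two-formula ``co-witness'' constructed by cascading $\que c$ on axioms, which is exactly what \ruleref{Mix}-plus-\ruleref{BiCut} makes legitimate. Once this move is in place, every other step is a routine juxtaposition of a structural rule with either \ruleref{Mix} or \ruleref{BiCut}, and the theorem follows.
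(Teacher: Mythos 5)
Your proposal is correct and takes essentially the same route as the paper's own proof: both directions treat the four rules exactly as you do, and in particular the paper's contraction case is precisely your key move of applying $\que c$ to two copies of the axiom $\vdash \exc \negg{A}, \que A$ to obtain $\vdash \exc\negg{A}, \exc\negg{A}, \que A$ and then discharging both $\que A$'s against $\vdash \Gamma, \que A, \que A$ with \ruleref{BiCut}. The promotion cases in the paper are likewise handled by cutting against $\otimes$- and $\parr$-folded forms of the context obtained from the folklore equivalences, which is exactly your collapse-with-\ruleref{BiCut}, apply-weak-$\exc$, re-expand-with-\ruleref{Mix} scheme.
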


\begin{proof}
	We first show that the exponential rules are derivable using coexponential rules under the substitution $\whynot \mapsto \que$.	The weakening rule $\inferrule*[right=$\whynot w$]{\vdash \Gamma}{\vdash \Gamma, \whynot A}$ is mapped to the derivation
	$
		\inferrule*[right=\rulenamestyle{Mix}]{
			\vdash \Gamma \\
			\inferrule*[right=$\que w$]{\strut}{\vdash \que A}
		}{\vdash \Gamma, \que A}
	$. The dereliction rule $\whynot d$ is just $\que d$, and the
	contraction rule $\inferrule*[right=$\whynot c$]{\vdash \Gamma, \whynot A, \whynot A}{\vdash  \Gamma, \whynot A}$ is
	mapped to
	\begin{mathpar}
		\inferrule*[right=\rulenamestyle{BiCut}]{
			\vdash \Gamma, \que A, \que A
			\\
			\inferrule*[right=$\que c$]{
				\inferrule*[right=Ax]{\strut}{\vdash \exc \negg A, \que A}
				\\
				\inferrule*[right=Ax]{\strut}{\vdash \exc \negg A, \que A}
			}{\vdash \exc \negg A, \exc \negg A, \que A}}
			{
				\Gamma, \que A
			}
	\end{mathpar}
  This leaves promotion. The forklores (iii--v) can be generalised to a
	bi-implication
	\begin{align*}
		A_1 \parr \dots \parr A_n \multimap A_1 \otimes \dots \otimes A_n &&
		A_1 \otimes \dots \otimes A_n \multimap A_1 \parr \dots \parr A_n
	\end{align*}
  and hence sequents $\vdash \bigotimes \negg{\Delta}, \bigotimes \Delta$ and $\vdash \bigparr
	\negg{\Delta}, \Delta$ for any $\Delta$. With these in hand, we can interpret the promotion rule
	$\inferrule*[right=$\ofc$]{\vdash \whynot \Gamma, A}{\vdash \whynot \Gamma, \ofc A}$ by the derivation
	\begin{mathpar}
		\inferrule*[right=Cut]{
			\inferrule*{ }{\vdash \que \Gamma, \bigparr \exc \negg{\Gamma}} \\
			\inferrule*[right=$\exc$]{
				\inferrule*[right=Cut]{
					\inferrule*{ }{\vdash \bigotimes \que \Gamma, \bigotimes \exc \negg{\Gamma}} \\
					\mprset{fraction={===}}
					\inferrule*[right=$\bigparr$]{\vdash \que \Gamma, A}{\vdash \bigparr \que \Gamma, A}
				}{
					\vdash \bigotimes \que \Gamma, A
				}
			}{
				\vdash \bigotimes \que \Gamma, \exc A
			}
		}{
			\vdash \que \Gamma, \exc A
		}
	\end{mathpar}

In the opposite direction, we show that the coexponentials rules are derivable using exponentials rules under
the substitution $\que \mapsto \whynot$. As the folklores ensure \rulenamestyle{Mix0}
is derivable in this system, we can interpret the weakening rule $\inferrule*[right=$\que w$]{\strut}{\vdash \que
A}$ by
$
	\inferrule*[right=$\whynot w$]{
		\inferrule*[right=\rulenamestyle{Mix0}]{\strut}{\vdash \cdot}
	}{
		\vdash \whynot A
	}
$. The dereliction rule $\que d$ is simply $\whynot d$, and the contraction rule $\inferrule*[right=$\que c$]{\vdash \Gamma, \que A \\ \vdash \Delta, \que A}{\vdash \Gamma, \Delta, \que A}$ is interpreted by the derivation
\begin{mathpar}
		\inferrule*[right=$\whynot c$]{
			\inferrule*[right=\rulenamestyle{Mix}]{
				\vdash \Gamma, \whynot A	\\
				\vdash \Delta, \whynot A
			}{
				\vdash \Gamma, \Delta, \whynot A, \whynot A
			}
		}{
			\vdash \Gamma, \Delta, \whynot A
		}
\end{mathpar}
Finally, the rule $\inferrule*[right=$\exc$]{\vdash \otimes \que \Gamma, A}{\vdash \otimes \que \Gamma, \exc A}$ is
interpreted in a way similar to promotion, but with the cuts replacing $\otimes$ with $\parr$
happening in the opposite order.
\end{proof}

\section{Translation of CSGV to CSLL: Omitted rules}
Of the functional fragment of CSGV the types are translated to CSLL:
\begin{equation*}
	\begin{aligned}[c]
		\bra*{T \multimap U} \defeq\ &\negg{\bra*{T}} \parr \bra*{U}\\
		\bra*{T \rightarrow U} \defeq\  &\ofc (\negg{\bra*{T}} \parr \bra*{U})
\end{aligned}
\qquad
\begin{aligned}[c]
	\bra*{T + U} \defeq\  &\bra*{T} \oplus \bra*{U} \\
	\bra*{T \otimes U} \defeq\  &\bra*{T} \otimes \bra*{U} \\
	\bra*{\textsf{Unit}} \defeq\  &\ounit 
\end{aligned}
\end{equation*}
Omitted rules of CSGV with their translation into CSLL are shown in \cref{figure:csgv-func,figure:csgv-linear-omitted}. Note that some translations use \cref{lemma:unlimited-positive}.
\begin{figure}
	\small
	\begin{flushleft}
		$
		\Trans{
		\inferhref{LImp-I}{$\multimap I$}{
			\IsProg{\Gamma, \Name{x} : T}{L}{U} 
		}{
			\IsProg{\Gamma}{\Lam{x}{L}}{T \multimap U}
		}
		}[z]
		\defeq\ 
		\inferrule{
			\IsProc{\bra[z]{L}}{\negg{\bra*{\Gamma}}, \Name{x} : \negg{\bra*{T}}, \Name{z} : \bra*{U}}
		}{ 
			\IsProc{\In{z}{x}{\bra[z]{L}}}{\negg{\bra*{\Gamma}}, \Name{z} : \negg{\bra*{T}} \parr \bra*{U}}
		}
		$
	\end{flushleft}
	\begin{flushleft}
		$
		\Trans{
		\inferhref{LImp-E}{$\multimap E$}{
			\IsProg{\Gamma}{M}{T \multimap U} \\
			\IsProg{\Delta}{N}{T}
		}{
			\IsProg{\Gamma,\Delta}{\Jux{M}{N}}{U}
		}
		}[x]
		\defeq\ 
		$
	\end{flushleft}
	\begin{mathpar}
		\inferrule{
			\IsProc{\bra[z]{M}}{\negg{\bra*{\Gamma}}, \Name{z} : \negg{\bra*{T}} \parr \bra*{U}} \\
				\inferrule*[right=$\otimes$]{
					\IsProc{\bra[z']{N}}{\negg{\bra*{\Delta}}, \Name{z'}:\bra*{T}} \\
					\IsProc{\Link{x}{y}}{\Name{x}:\bra*{U}, \Name{y}:\negg{\bra*{U}}}
				}{
					\IsProc{
						\Out{y}{z'}{\Par*{\bra[z']{N}}{\Link{x}{y}}}
					}{
						\negg{\bra*{\Delta}}, \Name{x}:\bra*{U}, \Name{y}:\bra*{T}\otimes\negg{\bra*{U}}
					}
				}
		}{
			\IsProc{
				\New{z}{y}{\Par*{
					\bra[z]{M}
				}{
					\Out{y}{z'}{\Par*{\bra[z']{N}}{\Link{x}{y}}}
				}}
			}{
				\negg{\bra*{\Gamma}}, \negg{\bra*{\Delta}}, \Name{x}:\bra*{U}
			}
		}
	\end{mathpar}
	\begin{flushleft}
		$
		\Trans{
		\inferhref{Imp-I}{$\rightarrow I$}{
			\IsProg{\Name{\vec{v}}:\vec{V}}{L}{T \multimap U}\\
			\text{$\vec{V}$ unlimited}
		}{
			\IsProg{\Name{\vec{v}}:\vec{V}}{L}{T \rightarrow U}
		}
		}[z]
		\defeq\ 
		\inferrule{
		\inferrule*{
			\inferrule*{
				\IsProc{\bra[z]{L}}{\Name{\vec{v}}:\negg{\bra{\vec{V}}}, \Name{z}:\negg{\bra{T}} \parr \bra{U}}
			}{
				\IsProc{\WhyD{\vec{v'}}{\vec{v}}{\bra[z]{L}}}{
					\Name{\vec{v'}}:\whynot \negg{\bra{\vec{V}}}, \Name{z}:\negg{\bra{T}} \parr \bra{U}
				}
			}
		}{
			\IsProc{\OfCP{z}{\vec{v'}}{\WhyD{\vec{v'}}{\vec{v}}{\bra[z]{L}}}}{\Name{\vec{v'}}:\whynot \negg{\bra{\vec{V}}}, \Name{z}:\ofc(\negg{\bra{T}} \parr \bra{U})}
		}
		}{
			\IsProc{\Coderelict{\vec{v}}{\vec{v'}}{\OfCP{z}{\vec{v}}{\WhyD{\vec{v'}}{\vec{v}}{\bra[z]{L}}}}}{\Name{\vec{v}}:\negg{\bra{\vec{V}}}, \Name{z}:\ofc(\negg{\bra{T}} \parr \bra{U})}
		}
		$
	\end{flushleft}
	\begin{flushleft}
		$
		\Trans{
		\inferhref{Imp-E}{$\rightarrow E$}{
			\IsProg{\Gamma}{L}{T \rightarrow U}
		}{
			\IsProg{\Gamma}{L}{T \multimap U}
		}
		}[x]
		\defeq\ 
		\inferrule{
			\IsProc{\bra[z]{L}}{\negg{\bra{\Gamma}}, \Name{z}:\ofc(\negg{\bra{T}} \parr \bra{U})}\\
			\inferrule*{
				\IsProc{\Link{x}{y}}{\Name{x}:\negg{\bra{T}} \parr \bra{U}, \Name{y}:\bra{T} \otimes \negg{\bra{U}}}
			}{
				\IsProc{\WhyD{y'}{y}{\Link{x}{y}}}{
					\Name{x}:\negg{\bra{T}} \parr \bra{U}, \Name{y'}: \whynot (\bra{T} \otimes \negg{\bra{U}})
				}
			}
		}{
			\IsProc{\New{z}{y'}{\Par*{
				\bra[z]{L}
			}{
				\WhyD{y'}{y}{\Link{x}{y}}
			}}}{
				\negg{\bra{\Gamma}}
				\Name{x}:\negg{\bra{T}} \parr \bra{U}
			}
		}
		$
	\end{flushleft}
	\begin{flushleft}
		$
		\Trans{
		\inferhref{W}{$w$}{
			\IsProg{\Gamma}{L}{T} \\
			\text{$U$ unlimited}
		}{
			\IsProg{\Gamma, x : U}{L}{T}
		}
		}[z]
		\defeq\ 
		\inferrule{
			\inferrule*{
				\IsProc{\bra[z]{L}}{\negg{\bra*{\Gamma}}, \Name{z} : \bra*{T}}
			}{
				\IsProc{\WhyW{x'}{\bra[z]{L}}}{\negg{\bra*{\Gamma}}, \Name{x'}:\whynot \negg{\bra*{U}}, \Name{z}:\bra*{T}}
			}
		}{
			\IsProc{\Coderelict{x}{x'}{\WhyW{x}{\bra[z]{L}}}}{\negg{\bra*{\Gamma}}, \Name{x} : \negg{\bra*{U}}, \Name{z} : \bra*{T}}
		}
		$
	\end{flushleft}
	\begin{flushleft}
		$
		\Trans{
		\inferhref{C}{$c$}{
			\IsProg{\Gamma, \Name{x_0} : U, \Name{x_1} : U}{L}{T} \\
			\text{$U$ unlimited}
		}{
			\IsProg{\Gamma, \Name{x} : U}{\Sb{\Sb{L}{x}{x_0}}{x}{x_1}}{T}
		}
		}[z]
		\defeq\ 
		\inferrule{
		\inferrule*{
			\inferrule*{
				\IsProc{\bra[z]{L}}{\negg{\bra*{\Gamma}}, \Name{x_0} : \negg{\bra*{U}}, \Name{x_1} : \negg{\bra*{U}}, \Name{z} : \bra*{T}}
			}{
				\IsProc{\WhyD{x'_0}{x_0}{\WhyD{x'_1}{x_1}{\bra[z]{L}}}}{
					\negg{\bra*{\Gamma}}, \Name{x'_0} : \whynot \negg{\bra*{U}}, \Name{x'_1} : \whynot \negg{\bra*{U}}, \Name{z} : \bra*{T}
				}
			}
		}{
			\IsProc{\WhyC{x'}{x'_0}{x'_1}{
				\WhyD{x'_0}{x_0}{\WhyD{x'_1}{x_1}{\bra[z]{L}}}
			}}{
				\negg{\bra*{\Gamma}}, \Name{x'} : \whynot \negg{\bra*{U}}, \Name{z} : \bra*{T}
			}
		}
		}{
			\IsProc{\Coderelict{x}{x'}{
				\WhyC{x'}{x'_0}{x'_1}{
				\WhyD{x'_0}{x_0}{\WhyD{x'_1}{x_1}{\bra[z]{L}}}
			}
			}}{
				\negg{\bra*{\Gamma}}, \Name{x} : \negg{\bra*{U}}, \Name{z} : \bra*{T}
			}
		}
		$
	\end{flushleft}
	\begin{flushleft}
		$
		\Trans{
		\inferhref{Prod-I}{$\otimes I$}{
			\IsProg{\Gamma}{M}{T} \\
			\IsProg{\Delta}{N}{U}
		}{
			\IsProg{\Gamma,\Delta}{\Pair{M}{N}}{T \otimes U}
		}
		}[z']
		\defeq\ 
		\inferrule{
			\IsProc{\bra[z]{M}}{\negg{\bra*{\Gamma}}, \Name{z} : \bra*{T}} \\
			\IsProc{\bra[z']{N}}{\negg{\bra*{\Delta}}, \Name{z'} : \bra*{U}} \\
		}{
			\IsProc{\Out{z'}{z}{\Par*{\bra[z]{M}}{\bra[z']{N}}}}{\negg{\bra*{\Gamma}}, \negg{\bra*{\Delta}}, \Name{z'} : \bra*{T} \otimes \bra*{U}}
		}
		$
	\end{flushleft}
	\begin{flushleft}
		$
		\Trans{
		\inferhref{Prod-E}{$\otimes E$}{
			\IsProg{\Gamma}{M}{T \otimes U} \\
			\IsProg{\Delta, \Name{x}:T, \Name{y}:U}{N}{V}
		}{
			\IsProg{\Gamma,\Delta}{\LetP{x}{y}{M} \; N}{V}
		}
		}[z]
		\defeq\ 
		$
	\end{flushleft}
	\begin{mathpar}
		\inferrule{
			\IsProc{\bra[z']{M}}{\negg{\bra*{\Gamma}}, \Name{z'}:\bra*{T} \otimes \bra*{U}} \\
			\inferrule*[right=$\parr$]{
				\IsProc{\bra[z]{N}}{\negg{\bra*{\Delta}}, \Name{x}:\negg{\bra*{T}}, \Name{y}:\negg{\bra*{U}},\Name{z}:\bra*{V}}
			}{
				\IsProc{\In{y}{x}{\bra[z]{N}}}{\negg{\bra*{\Delta}}, \Name{y}:\negg{\bra*{T}} \parr \negg{\bra*{U}},\Name{z}:\bra*{V}}
			}
		}{
			\IsProc{\New{z'}{y}{\Par*{
				\bra[z']{M}
			}{
				\In{y}{x}{\bra[z]{N}}
			}}}{
				\negg{\bra*{\Gamma}},\negg{\bra*{\Delta}},\Name{z}:\bra*{V}
			}
		}
	\end{mathpar}
	\begin{flushleft}
		$	
		\Trans{
		\inferhref{Sum-IL}{$+I_L$}{
			\IsProg{\Gamma}{M}{T}
		}{
			\IsProg{\Gamma}{\Left{M}}{T + U}
		}
		}[z]
		\defeq\ 
		\inferrule{
			\IsProc{\bra[z]{M}}{\negg{\bra*{\Gamma}}, \Name{z}:\bra*{T}}
		}{
			\IsProc{\Inl{z}{\bra[z]{M}}}{\negg{\bra*{\Gamma}}, \Name{z}:\bra*{T} \oplus \bra*{U}}
		}
		$
	\end{flushleft}

	\begin{flushleft}
		$
		\Trans{
		\inferhref{Plus-E}{$+E$}{
			\IsProg{\Gamma}{L}{T + U} \\
			\IsProg{\Delta,\Name{x}:T}{M}{V} \\
			\IsProg{\Delta,\Name{x}:U}{N}{V}
		}{
			\IsProg{\Gamma,\Delta}{\Match{L}{x}{M}{N}}{V}
		}
		}[z]
		\defeq\ 
		$
	\end{flushleft}
	\begin{mathpar}
		\inferrule{
			\IsProc{\bra[y]{L}}{\negg{\bra*{\Gamma}}, \Name{y}:\bra*{T} \oplus \bra*{U}}
			\\
			\inferrule*[right=$\with$]{
				\IsProc{\bra[z]{M}}{\negg{\bra*{\Delta}}, \Name{x}:\negg{\bra*{T}}, \Name{z}:\bra*{V}}
				\\
				\IsProc{\bra[z]{N}}{\negg{\bra*{\Delta}}, \Name{x}:\negg{\bra*{U}}, \Name{z}:\bra*{V}}
			}{
				\IsProc{\Case{x}{\bra[z]{M}}{\bra[z]{N}}}{\negg{\bra*{\Delta}}, \Name{x}:\negg{\bra*{T}} \with \negg{\bra*{U}}, \Name{z}:V}
			}
		}{
			\IsProc{\New{x}{y}{\Par*{
				\bra[y]{L}
			}{
				\Case{x}{\bra[z]{M}}{\bra[z]{N}}
			}}}{\negg{\bra*{\Gamma}}, \negg{\bra*{\Delta}}, \Name{z}:\bra*{V}}
		}
\end{mathpar}
\caption{Translation from CSGV to CSLL, functional fragments}
\label{figure:csgv-func}
\end{figure}

\begin{figure}
	\small
	\begin{flushleft}
		$
		\Trans{
		\inferH{SelectL}{
			\IsProg{\Gamma}{M}{T_S \oplus U_S}
		}{
			\IsProg{\Gamma}{\SelectL{M}}{T_S}
		}
		}[y]
		\defeq\ 
		\inferrule{
			\IsProc{\bra[z]{M}}{\negg{\bra*{\Gamma}}, \Name{z} : \bra*{T_S} \with \bra*{U_S}}\\
			\inferrule*[right=$\oplus$]{
				\IsProc{\Link{x}{y}}{\Name{x}: \negg{\bra*{T_S}}, \Name{y}: \bra*{T_S}}
			}{
				\IsProc{\Inl{x}{\Link{x}{y}}}{\Name{x}: \negg{\bra*{T_S}} \oplus \negg{\bra*{U_S}}, \Name{y}: \bra*{T_S}}
			}
		}{
			\IsProc{\New{x}{z}{\Par*{
				\bra[z]{M}
			}{
				\Inl{x}{\Link{x}{y}}
			}}}{\negg{\bra*{\Gamma}}, \Name{y} : \bra*{T_S}}
		}
		$
	\end{flushleft}
	\begin{flushleft}
		$
		\Trans{
		\inferH{Case}{
			\IsProg{\Gamma}{L}{T_S \with U_S}\\
			\IsProg{\Delta, \Name{x} : T_S}{M}{V}\\
			\IsProg{\Delta, \Name{x} : U_S}{N}{V}
		}{
			\IsProg{\Gamma,\Delta}{\SCase{L}{x}{M}{N}}{V}
		}
		}[z]
		\defeq\ 
		$
	\end{flushleft}
	\begin{mathpar}
		\inferrule*[right=Cut]{
			\IsProc{\bra[y]{L}}{\negg{\bra*{\Gamma}}, \Name{y} : \bra*{T_S} \oplus \bra*{U_S}}\\
			\inferrule*[right=$\with$]{
				\IsProc{\bra[z]{M}}{\negg{\bra*{\Delta}}, \Name{x}:\negg{\bra*{T_S}}, \Name{z} : \bra*{V}}\\
				\IsProc{\bra[z]{N}}{\negg{\bra*{\Delta}}, \Name{x}:\negg{\bra*{U_S}}, \Name{z} : \bra*{V}}
			}{
				\IsProc{\Case{x}{\bra[z]{M}}{\bra[z]{N}}}{\negg{\bra*{\Delta}}, \Name{x}:\negg{\bra*{T_S}} \with \negg{\bra*{U_S}}, \Name{z} : \bra*{V}}
			}
		}{
			\IsProc{\New{x}{y}{\Par*{
				\bra[y]{L}
			}{
				\Case{x}{\bra[z]{M}}{\bra[z]{N}}
			}}}{\negg{\bra*{\Gamma}},\negg{\bra*{\Delta}},\Name{z}:\bra*{V}}
		}
	\end{mathpar}
	\caption{Translation from CSGV to CSLL, omitted rules of linear fragments}
	\label{figure:csgv-linear-omitted}
\end{figure}

\section{CSLL: Metatheoretic Proofs}

\begin{lemma}
	If $\Equiv{P}{Q}$, then $\IsProc{P}{\Ctxs{G}}$ if and only if $\IsProc{Q}{\Ctxs{G}}$.
\end{lemma}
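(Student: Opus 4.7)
The plan is to proceed by induction on the derivation of $\Equiv{P}{Q}$. Since $\equiv$ is defined as the least congruence generated by the axioms in \cref{figure:proc-equiv}, the cases split into (a) reflexivity, symmetry, and transitivity; (b) the congruence closure under each process constructor; and (c) the generating axioms themselves. The symmetry case gives us the ``if and only if'' for free, so it suffices to show one direction---say, $\IsProc{P}{\Ctxs{G}}$ implies $\IsProc{Q}{\Ctxs{G}}$---for each axiom. Cases (a) and (b) are immediate: reflexivity is trivial, transitivity chains two inductive hypotheses, and each congruence case invokes the induction hypothesis on a subderivation and then reapplies the corresponding typing rule.

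The bulk of the work is case (c). For each axiom I would inspect the shape of the derivation of $\IsProc{P}{\Ctxs{G}}$, identify the last rules used, and rebuild a derivation of $\IsProc{Q}{\Ctxs{G}}$ by rearranging them. The axioms \ruleref{Par-Unit}, \ruleref{Par-Comm}, and \ruleref{Par-Assoc} are discharged by noting that hyperenvironments are unordered multisets of environments, so $\mid$ is commutative and associative on the nose, and $\emptyset$ is the unit for $\mid$ via \ruleref{HMix0} together with \ruleref{HMix2}. The axiom \ruleref{Link-Comm} is immediate from the symmetry of \ruleref{Ax} in its two names. For \ruleref{Res-Par} the side condition $\Name{x},\Name{y}\notin\fn{P}$ guarantees that the two cut formulas appear entirely within the hypercomponent typing $Q$, so we may reorder an application of \ruleref{Cut} past one \ruleref{HMix2}. \ruleref{Res-Res} commutes two nested \ruleref{Cut}s acting on disjoint pairs of names, again justified by permutability of hyperenvironment components.

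The remaining axioms \ruleref{Pre-Par}, \ruleref{Res-Pre}, and \ruleref{Pre-Pre} are variations on the same theme: each prefix rule consumes a hyperenvironment component (or merges two of them, in the case of tensor/output-like prefixes), and the side conditions on bound and free names ensure that the rearranged order of applications still respects the scoping discipline. In each case, the proof is a small diagram chase on the last few rules. The axiom \ruleref{Que-Que} is the only genuinely new one: an application of \ruleref{QueA} absorbs a component $\Delta,\Name{x'}\colon A$ from a hyperenvironment containing $\Gamma,\Name{x}\colon\que A$. Two successive \ruleref{QueA}s on $\Name{x}$ absorb two such components; since hyperenvironments are unordered, the order in which the two are absorbed is immaterial, and both sides of \ruleref{Que-Que} derive precisely the same conclusion $\Ctxs{G}\mid\Gamma,\Delta,\Sigma,\Name{x}\colon\que A$.

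The main obstacle will be bookkeeping rather than conceptual difficulty. In particular, \ruleref{Res-Par} and \ruleref{Res-Pre} require one to make explicit the way a single derivation of $\IsProc{P}{\Ctxs{G}}$ can be factored through different orders of \ruleref{Cut}, \ruleref{HMix2}, and prefix rules, and one must verify that the free-variable side conditions in the structural equivalence are exactly what is needed to ensure the rearranged derivation is well-typed (e.g.\ that a cut formula does not accidentally enter a subderivation where it was not present before). Once this administrative framework is in place, every case reduces to a one- or two-line derivation transformation, and the lemma follows.
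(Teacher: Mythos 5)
Your proposal is correct and takes essentially the same route as the paper: induction on the derivation of $\equiv$, proving one direction with the other analogous, discharging the congruence and \ruleref{Par-Unit}/\ruleref{Par-Comm}/\ruleref{Par-Assoc} cases via the unordered structure of hyperenvironments, and handling the cut/prefix axioms by inverting and rearranging the last typing rules, with your appeal to the free-name side conditions being exactly what the paper formalizes as \cref{lemma:fn-match}. Your explicit treatment of \ruleref{Que-Que} (which the paper leaves implicit) is also correct.
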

\begin{proof}
	By induction on $\Equiv{P}{Q}$. We prove one direction, the other one being entirely analogous. Moreover, the congruence cases are trivial. $\Equiv{\Par{P}{\Stop}}{\Proc{P}}$, commutativity, and associativity follow from the structure of hyperenvironments. Link-commutativity follows from the involutive property of $\negg{(-)}$.
	\begin{indproof}

		\indcase{\rulenamestyle{Res-Par}}

		Then $\Proc{P} = \New{x}{y}{\Par*{R}{S}}$ and $\Proc{Q} = \Par{R}{\New{x}{y}{S}}$ where
		$\Name{x}, \Name{y} \notin \fn{\Proc{R}}$. We must then have that $\IsProc{R}{\Ctxs{H}}$ where
		$\Name{x}, \Name{y} \not\in \Ctxs{H}$ (using \cref{lemma:fn-match}) and $\IsProc{S}{\Ctxs{I} \mid \Gamma, \Name{x} : A \mid
		\Delta, \Name{y}:\negg A}$, where $\Ctxs{G} = \Ctxs{H} \mid \Ctxs{I} \mid \Gamma, \Delta$.
		Hence, we can derive that $\Proc{Q} = \IsProc{\Par{R}{\New{x}{y}{S}}}{\Ctxs{G}}$.

		\indcase{\rulenamestyle{Res-Res}}

		Then $\Proc{P} = \New{x}{y}{\New{z}{w}{R}}$ and $\Proc{Q} = \New{z}{w}{\New{x}{y}{R}}$ for some
		$\Proc{R}$. We must invert $\IsProc{P}{\Ctxs{G}}$. This generates many cases: for example, it
		could be that $\IsProc{R}{\Ctxs{G}' \mid \Gamma, \Name{x} : A, \Name{z} : B \mid \Delta,
		\Name{y} : \negg A \mid \Sigma, \Name{w} : \negg B}$ where $\Ctxs{G} = \Ctxs{G}' \mid \Gamma,
		\Delta, \Sigma$, whence $\Proc{Q} = \IsProc{\New{z}{w}{\New{x}{y}{\Proc{R}}}}{\Ctxs{G}}$. The
		other cases are similar.

		\indcase{\rulenamestyle{Res-Pre}}
We show the case for $\New{x}{y}{\Out{z}{w}{P}} \equiv \Out{z}{w}{\New{x}{y}{P}}$, with that of other prefixes being similar. We have
\begin{mathpar}
	\inferrule*{
		\inferrule*{
			\IsProc{P}{
				\Ctxs{G} \mid \Sigma, \Name{y}: \negg C \mid \Gamma, \Name{z}:A, \Name{x}:C \mid \Delta, \Name{w}:B}
		}{
			\IsProc{\Out{z}{w}{P}}{
				\Ctxs{G} \mid \Sigma, \Name{y}: \negg C \mid \Gamma, \Name{z}:B \otimes A, \Name{x}:C, \Delta
			}
		}
	}{
		\IsProc{\New{x}{y}{\Out{z}{w}{P}}}{
			\Ctxs{G} \mid \Sigma, \Gamma, \Name{z}:B \otimes A, \Delta
		}
	}
\end{mathpar}
and therefore
\begin{mathpar}
	\inferrule*{
		\inferrule*{
			\IsProc{P}{
				\Ctxs{G} \mid \Sigma, \Name{y}: \negg C \mid \Gamma, \Name{z}:A, \Name{x}:C \mid \Delta, \Name{w}:B}
		}{
			\IsProc{\New{x}{y}{P}}{
				\Ctxs{G} \mid \Sigma, \Gamma, \Name{z}:A \mid \Delta, \Name{w}:B
			}
		}
	}{
		\IsProc{\Out{z}{w}{\New{x}{y}{P}}}{
			\Ctxs{G} \mid \Sigma, \Gamma, \Name{z}:B \otimes A, \Delta
		}
	}
\end{mathpar}
the other case has $\Name{x,y,z,w}$ in separate environments and are simpler.
\indcase{\rulenamestyle{Pre-Par}}
We show the case for $\Out{x}{y}{\Par*{P}{Q}} \equiv \Par{P}{\Out{x}{y}{Q}}$, with that of other prefixes being similar. We have
\begin{mathpar}
	\inferrule*{
		\inferrule*{
			\IsProc{P}{\Ctxs{G}}\\
			\IsProc{Q}{\Ctxs{H} \mid \Gamma, \Name{x}:A \mid \Delta, \Name{y}:B}
		}{
			\IsProc{\Par{P}{Q}}{\Ctxs{G} \mid \Ctxs{H} \mid \Gamma, \Name{x}:A \mid \Delta, \Name{y}:B}
		}
	}{
			\IsProc{\Out{x}{y}{\Par*{P}{Q}}}{\Ctxs{G} \mid \Ctxs{H} \mid \Gamma, \Delta, \Name{x}: B\otimes A}
	}
\end{mathpar}
and therefore
\begin{mathpar}
	\inferrule*{
		\IsProc{P}{\Ctxs{G}}\\
		\inferrule*{
			\IsProc{Q}{\Ctxs{H} \mid \Gamma, \Name{x}:A \mid \Delta, \Name{y}:B}
		}{
			\IsProc{\Out{x}{y}{Q}}{\Ctxs{H} \mid \Gamma, \Delta, \Name{x}: B\otimes A}
		}
	}{
		\IsProc{\Par{P}{\Out{x}{y}{Q}}}{
			\Ctxs{G} \mid \Ctxs{H} \mid \Gamma, \Delta, \Name{x}: B\otimes A
		}
	}
\end{mathpar}
\end{indproof}
\end{proof}

\begin{lemma} \label{lem:equiv-preserves-canon}
	If $\Proc{P} \equiv \Proc{Q}$, then $\Proc{P}$ is canonical if and only if $\Proc{Q}$ is canonical.
\end{lemma}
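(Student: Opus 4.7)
The proof will proceed by induction on the derivation of $\Equiv{P}{Q}$. Since $\equiv$ is defined as the least congruence generated by the axioms of \cref{figure:proc-equiv} together with \ruleref{Que-Que}, the cases split into: (a) the reflexivity, symmetry, and transitivity cases, which are immediate, (b) the congruence (contextual) cases, which follow from the inductive hypothesis together with the compositional definition of canonicity (e.g.\ $\Par{P}{Q}$ is canonical iff both components are; $\Pre{x}[P]$ iff $\Proc{P}$ is), and (c) the base axioms, which we check by unfolding \cref{def:canonical} on each side.

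For the base axioms, the plan is to carry out a direct case analysis. \ruleref{Par-Unit}, \ruleref{Par-Comm}, \ruleref{Par-Assoc}, and \ruleref{Link-Comm} are immediate, since $\Stop$ and links are canonical and $\Par{\cdot}{\cdot}$ is symmetric and associative with respect to canonicity of its components. For \ruleref{Pre-Pre}, both sides are canonical iff $\Proc{P}$ is canonical, since each is a prefix wrapped around another prefix wrapped around $\Proc{P}$. For \ruleref{Que-Que}, the same argument works once we observe that $\QueA{x}{x_i}{\cdot}$ is a prefix in the sense of \cref{def:canonical}. For \ruleref{Pre-Par}, the LHS $\Pre{x}[\Par*{P}{Q}]$ is canonical iff $\Par*{P}{Q}$ is, iff both $\Proc{P}$ and $\Proc{Q}$ are; the RHS $\Par{P}{\Pre{x}[Q]}$ is canonical iff $\Proc{P}$ is and $\Pre{x}[Q]$ is, which again reduces to $\Proc{P}$ and $\Proc{Q}$ being canonical.

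The interesting cases are those involving $\nu$, namely \ruleref{Res-Par}, \ruleref{Res-Res}, and \ruleref{Res-Pre}. Here we crucially use the remark following \cref{def:canonical} that $\New{x}{y}{R}$ is \emph{never} canonical. For \ruleref{Res-Par} the LHS $\New{x}{y}{\Par*{P}{Q}}$ is non-canonical because it begins with $\nu$; the RHS $\Par{P}{\New{x}{y}{Q}}$ is non-canonical because one of its parallel components, $\New{x}{y}{Q}$, is itself not canonical. So both sides are simultaneously non-canonical, and the biconditional holds vacuously. \ruleref{Res-Res} is analogous: both sides begin with $\nu$. For \ruleref{Res-Pre} the LHS begins with $\nu$, while the RHS is $\Pre{z}[\New{x}{y}{P}]$, which is canonical iff $\New{x}{y}{P}$ is, but the latter is never canonical; again both sides are non-canonical.

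No step looks genuinely difficult: this is a routine case analysis matching each structural equivalence against the clauses of \cref{def:canonical}. If there is any subtlety at all, it is the need to invoke the stipulation that cuts are never canonical in order to handle the three $\nu$-axioms uniformly. This is precisely the design choice made after \cref{def:canonical} and is the reason why the reaction rules (not structural equivalences) were used for \ruleref{Pre-Par} and \ruleref{Res-Pre}: had we instead chosen them as equivalences wrapping an $\nu$, canonicity-preservation would fail, as pointed out in the discussion of \ruleref{With-Comm} just before the lemma.
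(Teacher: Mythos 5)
Your proof is correct and is exactly the fleshed-out version of the paper's own argument, which in its entirety reads ``straightforward by induction on $\Proc{P} \equiv \Proc{Q}$'' --- your case split (reflexivity/symmetry/transitivity, congruence cases via compositionality of canonicity, and a direct check of each axiom, using crucially that cuts are never canonical for the three $\nu$-axioms) is precisely what that induction amounts to. One slip in your closing commentary only: you have the paper's design choice backwards, since \ruleref{Pre-Par} and \ruleref{Res-Pre} \emph{are} structural equivalences (your own case analysis treats them as such and correctly shows they preserve canonicity), whereas it is \ruleref{With-Comm} and \ruleref{OfCourse-Comm} that were demoted to reaction rules, because there the two sides genuinely differ in canonicity.
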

\begin{proof}
	Straightforward by induction on $\Proc{P} \equiv \Proc{Q}$.
\end{proof}

\begin{lemma}[Separation]
	\label{lem:separation}
	If $\IsProc{T}{\Gamma_0 \mid \dots \mid \Gamma_{n-1}}$, then there exist $\IsProc{T_i}{\Gamma_i}$ for $0 \leq i < n$ such that $\Proc{T} \equiv \Proc{T_0 \mid \dots \mid T_{n-1}}$. Moreover, if $\Proc{T}$ is canonical, then every $\Proc{T_i}$ is canonical.
\end{lemma}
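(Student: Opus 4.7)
The plan is to proceed by induction on the derivation of $\IsProc{T}{\Gamma_0 \mid \dots \mid \Gamma_{n-1}}$, producing at each step well-typed components $\IsProc{T_i}{\Gamma_i}$ together with a structural equivalence $\Proc{T} \equiv T_0 \mid \dots \mid T_{n-1}$ (read as $\Stop$ when $n=0$ and as $T_0$ when $n=1$). The ``moreover'' clause then follows at once from \cref{lem:equiv-preserves-canon} and \cref{lem:equiv-preserves-canon}: if $\Proc{T}$ is canonical, so is the equivalent parallel composition, and by the clause for $\Par{}{}$ in \cref{def:canonical} this forces each $\Proc{T_i}$ to be canonical.

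The base cases are immediate. $\ruleref{HMix0}$ gives $n = 0$ with $\Proc{T} = \Stop$, and $\ruleref{Ax}$ gives $n = 1$ with $\Proc{T_0} = \Link{x}{y}$. The rules $\ruleref{With}$ and $\ruleref{OfCourse}$ also come with a single environment in the conclusion, so the conclusion is trivial with $n = 1$. For $\ruleref{HMix2}$ one simply concatenates the two decompositions obtained from the premises by the inductive hypothesis; since each summand is already typed against one of the listed environments, no extra equivalence is required.

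The informative cases are the prefix rules ($\ruleref{Par}$, $\ruleref{PlusL}$, $\ruleref{PlusR}$, $\ruleref{M-False}$, $\ruleref{M-True}$, $\ruleref{WhyNotW}$, $\ruleref{WhyNotD}$, $\ruleref{WhyNotC}$, $\ruleref{QueW}$, $\ruleref{QueA}$, $\ruleref{Claro}$), together with $\ruleref{Cut}$ and $\ruleref{Tensor}$. For a prefix rule acting on a single environment of the premise, such as $\ruleref{Par}$, the inductive hypothesis applied to the continuation $\Proc{P}$ yields $\Proc{P} \equiv P_0 \mid \dots \mid P_{n-1}$. The prefix necessarily operates on one specific component, say $\Proc{P_k}$ (the one containing the channel on which it acts); since the bound names of the prefix can be chosen fresh via $\alpha$-renaming and thus lie only inside $\Proc{P_k}$, applying $\ruleref{Pre-Par}$ repeatedly slides the prefix past the other summands to yield $\Pre{y}[P] \equiv P_0 \mid \dots \mid \Pre{y}[P_k] \mid \dots \mid P_{n-1}$. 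The cases $\ruleref{Tensor}$, $\ruleref{Cut}$, $\ruleref{QueA}$, and $\ruleref{Claro}$ are slightly subtler because they merge two environments of the premise into one environment of the conclusion: here one first gathers the two relevant components $\Proc{P_\Gamma}$ and $\Proc{P_\Delta}$, applies the prefix (respectively binder) to $\Par{P_\Gamma}{P_\Delta}$, and then uses $\ruleref{Pre-Par}$ or $\ruleref{Res-Par}$ to isolate this new subterm from the other summands.

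The main obstacle is bookkeeping: ensuring at every prefix step that the names bound by the prefix truly occur only in the intended component(s), so that the side conditions of $\ruleref{Pre-Par}$ and $\ruleref{Res-Par}$ are satisfied. This is guaranteed by the global distinctness of variable names built into hyperenvironments, together with the invariant $\fn{\Proc{P_i}} \subseteq \Gamma_i$ carried along the induction (a separation of free-name lemma of the kind already used in the proof of \cref{lem:equiv-preserves-types}). Granted these observations, well-typedness of each $\Proc{T_i}$ is immediate from the way the pieces are assembled, and the witnessing equivalence is a finite composition of $\ruleref{Pre-Par}$ and $\ruleref{Res-Par}$ steps.
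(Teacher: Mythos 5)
Your proposal is correct and follows essentially the same route as the paper's proof: induction on the typing derivation, using the inductive hypothesis together with congruence and the \ruleref{Res-Par}/\ruleref{Pre-Par} equivalences (with side conditions discharged via the free-names lemma) to isolate the component(s) on which the last rule acts, and deriving the canonicity claim from \cref{lem:equiv-preserves-canon} plus the fact that parallel components of a canonical term are canonical. The paper merely presents fewer cases explicitly (\ruleref{HMix2}, \ruleref{Cut}, \ruleref{Tensor}), declaring the rest ``trivial or similar,'' which is exactly the pattern your prefix-rule treatment fills in.
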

\begin{proof}
	We prove the first claim by induction on $\IsProc{T}{\Gamma_0 \mid \dots \mid \Gamma_{n-1}}$. We show only the following cases; all other cases are either trivial or similar.
	\begin{indproof}
		\indcase{\rulenamestyle{HMix2}} Then $\Proc{T} = \Par{P}{Q}$, and after appropriately reordering the
		hyperenvironment we have $\IsProc{P}{\Gamma_0 \mid \dots \mid \Gamma_{m-1}}$ and
		$\IsProc{Q}{\Gamma_{m} \mid \dots \mid \Gamma_{n-1}}$ with $m \le n$. By the IH we have
		$\IsProc{T_i}{\Gamma_i}$ for $0 \leq i < n$, with $\Proc{P} \equiv \Proc{T_0} \mid \dots \mid
		\Proc{T_{m-1}}$, and $\Proc{Q} \equiv \Proc{T_m} \mid \dots \mid \Proc{T_{n-1}}$. We then have
		$\Proc{\Par{P}{Q}} \equiv \Proc{T_0} \mid \dots \mid \Proc{T_{n-1}} \equiv T$, as $\equiv$ is a
		congruence.

		\indcase{\rulenamestyle{Cut}}  Then $\Proc{T} = \New{x}{y}{P}$, and after appropriately reordering the
		hyperenvironment we have 
		\[
		\IsProc{P}{\Gamma_0 \mid \dots \mid \Gamma_{n-2} \mid \Delta_0, \Name{x} : A \mid \Delta_1, \Name{y} : \negg A}
		\]
		 where $\Gamma_{n-1} = \Delta_0,\Delta_1$. By the IH we have $\IsProc{P_i}{\Gamma_i}$ for $0 \leq i < n-1$, $\IsProc{P_{n-1}}{\Delta_0 ,
		\Name{x} : A}$, and $\IsProc{P_n}{\Delta_1, \Name{y} : \negg A}$, with $\Equiv{P}{P_0 \mid \dots
		\mid P_n}$. The result follows, as $\IsProc{\New*{x}{y}{\Par{P_{n-1}}{P_n}}}{\Gamma_{n-1}}$. and by (Res-Par)
		\[
			\New{x}{y}{P} \equiv \New*{x}{y}{P_0 \mid \dots \mid P_{n-1} \mid P_n}
										\equiv \Proc{P_0 \mid \dots \mid \New*{x}{y}{\Par{P_{n-1}}{P_n}}}
		\]

		\indcase{\rulenamestyle{Tensor}}
		Then $\Proc{T} = \Out{x}{y}{P}$, and after appropriately reordering the hyperenvironment we have 
		\[
			\IsProc{P}{\Gamma_0 \mid \dots \Gamma_{n-2} \mid \Delta_0, \Name{x}:A \mid \Delta_1, \Name{y}:B}
		\] where $\Gamma_{n-1} = \Delta_0,\Delta_1$. By the IH we have $\IsProc{P_i}{\Gamma_i}$ for $0 \le i < n-1$, $\IsProc{P_{n-1}}{\Delta_0, \Name{x}:A}$ and $\IsProc{P_n}{\Delta_1, \Name{y}:B}$ with $\Equiv{P}{P_0 \mid \dots \mid P_n}$. The result follows, as $\IsProc{\Out{x}{y}{\Par*{P_{n-1}}{P_n}}}{\Gamma_{n-1}, \Name{x}: B \otimes A}$, and by ()
	\end{indproof}
	The second claim follows by \cref{lem:equiv-preserves-canon}, and the fact subterms of canonical terms are canonical.
\end{proof}

\begin{lemma}[Local Progress]
	\label{lem:local-progress}
	If $\IsProc{P}{\Gamma, \Name{x} : A}$ and $\IsProc{Q}{\Delta, \Name{y} : \negg A}$ and both
	$\Proc{P}$ and $\Proc{Q}$ are canonical, then there exists an $\Proc{R}$ such that
	$\red{\New{x}{y}{\Par*{P}{Q}}}{R}$.
\end{lemma}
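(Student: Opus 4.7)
The plan is to proceed by lexicographic induction on $(|P|,|Q|)$, since the symmetric reasoning in the head-matched case recurses into $Q$. In each case we either exhibit a direct reaction, apply a commuting-conversion reaction, or reduce to a strictly smaller instance via a structural equivalence composed with one of the congruence rules \ruleref{Pre}, \ruleref{ParL}, \ruleref{Res}, \ruleref{Eq}.

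Trivial cases first. $P = \Stop$ is impossible, since its hyperenvironment would be empty, contradicting $\Name{x} \in \fn{P}$. If $P = \Link{z}{w}$, then freeness forces $\Name{x} \in \{z,w\}$; after normalizing via \ruleref{Link-Comm} we fire \ruleref{Link}. If $P = \Par{P_1}{P_2}$, then since $P$ is typed in a single environment one of $P_1,P_2$ must have the empty hyperenvironment; by inversion on \ruleref{HMix0}/\ruleref{HMix2} this sub-component is canonical-equivalent to $\Stop$, and \ruleref{Par-Unit} yields $P \equiv P_j$ for the remaining component, to which the IH applies.

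When the outermost construct of $P$ acts on a name $z \neq \Name{x}$, we push it out of the cut. For $P = \Case{z}{P_1}{P_2}$, the \ruleref{With-Comm} reaction fires directly. For $P = \Pre{z}[P']$, we $\alpha$-rename the bound names of $\Pre{z}$ to avoid $\fn{Q}$ and invoke \ruleref{Pre-Par} together with \ruleref{Res-Pre} to obtain $\New{x}{y}{\Par*{P}{Q}} \equiv \Pre{z}[\New{x}{y}{\Par*{P'}{Q}}]$; when $P'$ is typed in two environments (as for \ruleref{Tensor}-style prefixes), we first apply the separation lemma to write $P' \equiv \Par{P_1'}{P_2'}$ and use \ruleref{Res-Par} to factor the $\Name{x}$-free component out of the cut, leaving an inner cut on the strictly smaller single-environment $P_1'$ to which the IH applies; we then lift back through the congruence rules. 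For $P = \OfCP{z}{\vec{y}}{P'}$ with $z \neq \Name{x}$, freeness forces $\Name{x} \in \vec{y}$ with type $\whynot B$, so $\negg{A} = \ofc \negg{B}$, and a symmetric analysis of $Q$ exposes either a matching promotion $\OfCP{y}{\vec{v}}{Q'}$ (triggering \ruleref{OfCourse-Comm}) or a head $\whynot$-rule (triggering the appropriate OfCourse-WhyNot reaction from \cref{figure:proc-opsem}).

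The remaining case is that $P$'s head action is on $\Name{x}$ itself. Then $A$ is determined by the shape of this action, and the symmetric induction on $Q$ exposes its dual head action on $\Name{y}$; duality forces the pair to be one of the canonical redex shapes of \cref{figure:proc-opsem}, namely \ruleref{M-True}/\ruleref{M-False}, \ruleref{Tensor}/\ruleref{Par}, \ruleref{PlusL}-or-\ruleref{PlusR}/\ruleref{With}, \ruleref{OfCourse} paired with one of \ruleref{WhyNotW}, \ruleref{WhyNotD}, \ruleref{WhyNotC}, or \ruleref{Claro} paired with \ruleref{QueW} or \ruleref{QueA}, and the corresponding reaction fires. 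The chief obstacle is bookkeeping: one must preserve canonicity (via \cref{lem:equiv-preserves-canon}) and hyperenvironment well-formedness across structural manipulations, discharge the freshness side-conditions of \ruleref{Pre-Par} and \ruleref{Res-Pre} by $\alpha$-renaming, and correctly handle the multi-environment continuations of \ruleref{Tensor}- and \ruleref{Claro}-style prefixes via the separation lemma before the IH becomes applicable.
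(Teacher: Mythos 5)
Your overall strategy coincides with the paper's: simultaneous induction on the two canonical processes, firing a reaction axiom when both heads act on the cut pair, and otherwise pushing the offending head out of the cut via \ruleref{Pre-Par}/\ruleref{Res-Pre} (or the commuting reactions \ruleref{With-Comm}, \ruleref{OfCourse-Comm}) and recursing through \ruleref{Pre} and \ruleref{Eq}. You are in fact more careful than the paper on one point: for \ruleref{Tensor}-shaped prefixes the continuation $\Proc{P'}$ is typed in two environments, so the induction hypothesis does not literally apply to $\New{x}{y}{\Par*{P'}{Q}}$; the paper applies it anyway, whereas you route through \cref{lem:separation} and \ruleref{Res-Par}. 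That refinement is sound in spirit, but note that your size measure is not obviously compatible with it, since separation only produces components up to $\equiv$ and $\equiv$ (e.g.\ \ruleref{Par-Unit}) does not preserve raw size; you would want an $\equiv$-invariant measure, such as the number of non-$\Par$, non-$\Stop$ constructors.

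One step of your case analysis, however, is wrong. In the case $\Proc{P} = \OfCP{z}{\vec{z'}}{P'}$ with $\Name{z} \neq \Name{x}$, so that $\Name{x} \in \Name{\vec{z'}}$, $A = \whynot B$ and $\Name{y} : \ofc \negg{B}$, you claim the analysis of $\Proc{Q}$ ends either in a matching promotion on $\Name{y}$ (giving \ruleref{OfCourse-Comm}) or in a head $\whynot$-rule ``triggering the appropriate OfCourse-WhyNot reaction.'' The second alternative cannot occur: the reaction \ruleref{OfCourse-WhyNotD} (and its weakening/contraction siblings) requires the $\whynot$-prefix to be on the cut name $\Name{x}$ itself, cut against a promotion on $\Name{y}$. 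Here $\Proc{P}$'s head is a promotion on $\Name{z} \neq \Name{x}$, and by typing any $\whynot$-rule heading $\Proc{Q}$ must be on a name $\Name{w} \neq \Name{y}$, since only \ruleref{OfCourse} introduces $\ofc$ and $\Name{y} : \ofc \negg{B}$. No reaction axiom matches such a pair, so the claimed reaction does not exist. The correct move---and what the paper does---is to note that in this situation $\Proc{Q}$'s head is just another prefix not acting on the cut pair, push it out with \ruleref{Pre-Par}/\ruleref{Res-Pre}, and recurse; the only direct reaction available in this configuration is \ruleref{OfCourse-Comm}. Your own general prefix machinery covers that recursion, so the proof is repairable, but as written this step fails.
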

\begin{proof}
	By induction on $\Proc{P}$ and $\Proc{Q}$. Note the two are symmetric which we will exploit to omit some cases. The type judgment implies neither $\Proc{P}$ nor $\Proc{Q}$ can be $\Stop$. They cannot be of the form $\New{x}{y}{S}$ either, for they would not be canonical.
	\begin{itemize}
		\item If $\Proc{P} = \Par{P_0}{P_1}$, then it must be that $\Proc{P_1} = \Stop$ without loss of
		generality. We have that $\Equiv{\Par{P_0}{\Stop}}{P_0}$ by \rulenamestyle{Par-Unit}. Apply induction
		hypothesis on $\Proc{P_0}$ we get $\red{\New{x}{y}{\Par*{P_0}{Q}}}{R}$. Use \rulenamestyle{Eq} we have
		$\red{\New{x}{y}{\Par*{P}{Q}}}{R}$. Similar when $\Proc{Q} = \Par{Q_0}{Q_1}$.
		\item If $\Proc{P} = \Link{a}{b}$, it must be that $\Name{x} = \Name{b}$, so we can reduce by \rulenamestyle{Link}. Similar for $\Proc{Q}$.
		\item The remaining scenarios are where $\Proc{P} = \Pre{z}[P']$, or $\Proc{P} = \Case{z}{P_0}{P_1}$ or $\Proc{P} = \OfCP{z}{\vec{z'}}{P'}$, and similarly $\Proc{Q} = \Pre{w}[Q']$, or $\Proc{Q} = \Case{w}{Q_0}{Q_1}$ or $\Proc{Q} = \OfCP{w}{\vec{w'}}{Q'}$. If $\Name{z} = \Name{x}$ and $\Name{w} = \Name{y}$, then one of the reaction axioms apply; otherwise we can assume WLOG that $\Name{z} \neq \Name{x}$ (and of course $\Name{z} \neq \Name{y}$), and take cases of $\Proc{P}$.
		\begin{itemize}
			\item If $\Proc{P} = \Pre{z}[P']$, we have that $\New{x}{y}{\Par*{\Pre{z}[P']}{Q}} \equiv \Pre{z}[\New{x}{y}{\Par*{P'}{Q}}]$ by \rulenamestyle{Pre-Par} and \rulenamestyle{Res-Pre}. By induction hypothesis we have $\red{\New{x}{y}{\Par*{P'}{Q}}}{R}$, we therefore have $\red{\New{x}{y}{\Par*{P}{Q}}}{\Pre{z}[R]}$ by \rulenamestyle{Pre} and \rulenamestyle{Eq}.
			\item If $\Proc{P} = \Case{z}{P_0}{P_1}$, the commuting conversion \rulenamestyle{Case-Comm} applies.
			\item If $\Proc{P} = \OfCP{z}{\vec{z'}}{P'}$ where $\Name{x} \in \Name{\vec{z'}}$. We know $\Name{x} : A = \whynot B$ for some $B$ and thus $\Name{y}:\negg A = \ofc \negg B$. We check if $\Name{w} = \Name{y}$. If so, we have $\Proc{Q} = \OfCP{w}{\vec{w'}}{Q'}$ and thus \rulenamestyle{OfCourse-Comm} applies; otherwise, we know that $\Proc{Q}$ cannot be of the form $\OfCP{w}{\vec{w'}}{Q'}$ where $\Name{y} \in \Name{\vec{w'}}$ (because $\Name{y}:\ofc \negg B$ breaks \rulenamestyle{OfCourse} requirement that $\Name{\vec{w'}}:\whynot \vec{B}$), which means $\Proc{Q} = \Pre{w}{Q'}$  or $\Proc{Q} = \Case{w}{Q_0}{Q_1}$ and can be handled similarly as the previous two cases.
		\end{itemize}
	\end{itemize}
\end{proof}

\begin{theorem}[Progress]
				If $\IsProc{R}{\Ctxs{G}}$ then either $\Proc{R}$ is canonical, or there exists $\Proc{R'}$ such that	$\red{R}{R'}$.
\end{theorem}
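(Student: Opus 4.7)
The plan is to induct on the structure of $\Proc{R}$, using \cref{lem:separation} (Separation) and \cref{lem:local-progress} (Local Progress) together with \cref{lem:equiv-preserves-types,lem:equiv-preserves-canon} to handle the only nontrivial case, namely that of a cut $\New{x}{y}{P}$.

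The easy cases are dispatched as follows. If $\Proc{R}$ is $\Stop$, $\Link{x}{y}$, $\Case{x}{P}{Q}$, or $\OfCP{x}{\vec{y}}{P}$, then $\Proc{R}$ is canonical by \cref{def:canonical}. If $\Proc{R} = \Pre{x}[P]$, then by the induction hypothesis either $\Proc{P}$ is canonical (whence $\Proc{R}$ is canonical) or $\red{P}{P'}$, in which case $\red{\Pre{x}[P]}{\Pre{x}[P']}$ by \ruleref{Pre}. If $\Proc{R} = \Par{P}{Q}$, then by the induction hypothesis applied to each component, either both are canonical (so $\Proc{R}$ is canonical), or one of them reduces, and we conclude using \ruleref{ParL} (together with \ruleref{Par-Comm} if it is $\Proc{Q}$ that reduces).

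The interesting case is $\Proc{R} = \New{x}{y}{P}$ with $\IsProc{R}{\Ctxs{G}}$. By inversion on \ruleref{Cut}, we have $\IsProc{P}{\Ctxs{G}' \mid \Gamma, \Name{x} : A \mid \Delta, \Name{y} : \negg A}$. By the induction hypothesis applied to $\Proc{P}$, either $\red{P}{P'}$ (in which case $\red{R}{\New{x}{y}{P'}}$ by \ruleref{Res}), or $\Proc{P}$ is canonical. In the latter case, \cref{lem:separation} lets us write $\Equiv{P}{P_0 \mid \dots \mid P_{n-2} \mid P_{n-1} \mid P_n}$ with $\IsProc{P_{n-1}}{\Gamma, \Name{x} : A}$ and $\IsProc{P_n}{\Delta, \Name{y} : \negg A}$, and with every $\Proc{P_i}$ canonical. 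By \cref{lem:local-progress} applied to $\Proc{P_{n-1}}$ and $\Proc{P_n}$, there exists $\Proc{R'}$ such that $\red{\New*{x}{y}{\Par{P_{n-1}}{P_n}}}{R'}$. Using the structural equivalences \ruleref{Res-Par} (valid because $\Name{x}, \Name{y}$ are bound and appear only in $\Proc{P_{n-1}}, \Proc{P_n}$) together with \ruleref{Par-Assoc} and \ruleref{Par-Comm}, we obtain
\[
	\Proc{R} \equiv \Par{P_0 \mid \dots \mid P_{n-2}}{\New*{x}{y}{\Par{P_{n-1}}{P_n}}}.
\]
Applying \ruleref{ParL} and \ruleref{Eq}, we conclude that $\red{R}{\Par{P_0 \mid \dots \mid P_{n-2}}{R'}}$.

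The main obstacle is the cut case: isolating the two `active' parallel components at $\Name{x}$ and $\Name{y}$ up to structural equivalence so that Local Progress can be invoked. This is precisely what \cref{lem:separation} delivers, and the preservation of canonicity under $\equiv$ ensures that the extracted $\Proc{P_{n-1}}, \Proc{P_n}$ meet the hypothesis of \cref{lem:local-progress}. Once that is in place, the closure of reduction under $\equiv$ via \ruleref{Eq} packages everything back into a reduction of the original $\Proc{R}$.
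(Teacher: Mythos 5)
Your proposal is correct and follows essentially the same route as the paper's own proof: structural/typing induction with the easy cases dispatched identically, and the cut case handled by combining Separation (including its canonicity-preservation clause) with Local Progress, then repackaging via \ruleref{Res-Par}, \ruleref{ParL}, and \ruleref{Eq}. The only cosmetic difference is that the paper formally justifies the \ruleref{Res-Par} side condition by citing \cref{lemma:fn-match}, whereas you assert it directly from the typing of the remaining components.
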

\begin{proof}
	By induction on $\IsProc{R}{\Ctxs{G}}$.
	\begin{itemize}
		\item If $\Proc{R} = \Stop$ or $\Link{x}{y}$, then it is canonical.

		\item Suppose $\Proc{R} = \Par{P}{Q}$. If both $\Proc{P}$ and $\Proc{Q}$ are canonical, then so
		is $\Proc{R}$. Otherwise, if $P$ is not canonical, then by the IH we have $\red{P}{P'}$ for some
		$\Proc{R'}$, and thus $\red{\Par{P}{Q}}{\Par{P'}{Q}}$ by \rulenamestyle{ParL}. Similarly for
		$\Proc{Q}$.

		\item Suppose $\Proc{R} = \Pre{y}[P]$. If $\Proc{P}$ is canonical then so is $\Proc{R}$. Otherwise $\Proc{P}$ is not canonical, and by the IH $\red{P}{P'}$, and thus $\red{\Pre{y}[P]}{\Pre{y}[P']}$ for some $\Proc{P'}$ by \rulenamestyle{Pre}.

		\item Suppose $\Proc{R} = \Case{y}{P}{Q}$ or $\Proc{R} = \OfCP{y}{\vec{w}}{P}$, then it is canonical.

		\item Suppose $\Proc{R} = \New{x}{y}{P}$, with $\IsProc{P}{\Ctxs{G} \mid \Gamma, \Name{x} : A
		\mid \Delta, \Name{y} : \negg A}$. If $P$ is not canonical then by the IH we have $\red{P}{P'}$
		for some $\Proc{P'}$, and thus $\red{\New{x}{y}{P}}{\New{x}{y}{P'}}$ by \rulenamestyle{Res}. If
		$\Proc{P}$ is canonical, by \cref{lem:separation} we have that $\Proc{P} \equiv \Proc{P_0} \mid
		\dots \mid \Proc{P_{n}} \mid \Proc{P_{n+1}}$ where $\IsProc{P_n}{\Gamma, \Name{x}:A}$ and
		$\IsProc{P_{n+1}}{\Delta,\Name{y}:\negg A}$. Note that both $\Proc{P_n}$ and $\Proc{P_{n+1}}$
		are canonical. Hence we have $\Proc{R} \equiv \New*{x}{y}{P_0 \mid \dots \mid P_{n+1}}$. By
		\rulenamestyle{Res-Par} and \cref{lemma:fn-match} we obtain $\Proc{R} \equiv
		\Proc{P_0} \mid \dots \mid \Proc{P_{n-1}} \mid \New*{x}{y}{\Par{P_n}{P_{n+1}}}$. Local progress
		(\cref{lem:local-progress}) yields $\red{\New*{x}{y}{\Par{P_n}{P_{n+1}}}}{R'}$, which gives
		$\red{R}{P_0 \mid \dots \mid P_{n-1} \mid R'}$ by \rulenamestyle{ParL}.
	\end{itemize}
\end{proof}

\begin{definition}[Free Names] \label{def:fv}
	The free names $\fn{\Proc{P}}$ of a process $\Proc{P}$ is inductively defined as follows.
	\begin{align*}
 	\fn{\Stop} &= \emptyset \\
 	\fn{\Par{P}{Q}} &= \fn{\Proc{P}} \cup \fn{\Proc{Q}} \\
 	\fn{\New{x}{y}{P}} &= \fn{\Proc{P}} \setminus \{\Name{x},\Name{y}\} \\
 	\fn{\Link{x}{y}} &= \{x,y\} \\
 	\fn{\In{y}{x}{P}} = \fn{\Out{y}{x}{P}} &= \fn{\Proc{P}} \setminus \{\Name{x}\} \\
 	\fn{\Inl{x}{P}} = \fn{\Inr{x}{P}} &= \fn{\Case{x}{P}{Q}} = \fn{\Proc{P}} \\
	\fn{\In{x}{}{P}} = \fn{\Out{x}{}{P}} &= \fn{\Proc{P}} \cup \{\Name{x}\} \\
	\fn{\QueW{x}{P}} = \fn{\Proc{P}} \cup \{ \Name{x}\} \\
	\fn{\QueA{x}{x'}{P}} &= \fn{\Proc{P}} \setminus \{\Name{x'} \} \cup \{\Name{x}\}\\
	\fn{\ExcP{y}{P}{z}{w}{z'}{w'}{y'}{Q}} &= \fn{\Proc{P}} \cup \{\Name{y}\}
	\end{align*}
	The free names $\fn{\Gamma}$ of an environment $\Gamma$ is defined to be the names in $\Gamma$. The free names $\fn{\mathcal{G}}$ of an hyperenvironment $\mathcal{G}$ is defined to be the union of the free names of each environment in $\mathcal{G}$. Note that we stipulated names in each environment must not overlap.
\end{definition}

\begin{lemma}\label{lemma:fn-match}
	If $\IsProc{P}{\mathcal{G}}$, then $\fn{\Proc{P}} = \fn{\mathcal{G}}$.
\end{lemma}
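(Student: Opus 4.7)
The plan is to prove the lemma by structural induction on the typing derivation $\IsProc{P}{\mathcal{G}}$. The key observation from the stipulation in the system is that environments making up a hyperenvironment have pairwise disjoint name sets, so $\fn{\Ctxs{G} \mid \Gamma}$ is just the disjoint union $\fn{\Ctxs{G}} \cup \fn{\Gamma}$, which lets us combine inductive hypotheses cleanly. I would first spell out each of the four basic patterns the typing rules fall into, and then dispatch each rule by locating it in the appropriate pattern.

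The four patterns are as follows. First, the atomic rules \rulenamestyle{HMix0} and \rulenamestyle{Ax} are immediate by direct calculation: $\fn{\Stop} = \emptyset$ matches the empty hyperenvironment, and $\fn{\Link{x}{y}} = \{\Name{x}, \Name{y}\}$ matches $\Name{x}:\negg A, \Name{y}:A$. Second, the purely combining rule \rulenamestyle{HMix2} produces $\fn{\Par{P}{Q}} = \fn{P} \cup \fn{Q}$ which by two applications of the IH equals $\fn{\Ctxs{G}} \cup \fn{\Ctxs{H}} = \fn{\Ctxs{G} \mid \Ctxs{H}}$. Third, the binder rules \rulenamestyle{Cut}, \rulenamestyle{Par}, \rulenamestyle{Tensor}, \rulenamestyle{QueA} and the prefixes all follow the same template: the IH gives the free names of the premise; then $\fn{\Proc{P}}$ for the conclusion-term subtracts exactly the bound names, and the conclusion-hyperenvironment on the right-hand side likewise drops exactly those names. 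Fourth, rules that merely re-type or re-package without introducing or eliminating names (\rulenamestyle{PlusL}, \rulenamestyle{PlusR}, \rulenamestyle{With}, \rulenamestyle{M-False}, \rulenamestyle{M-True}, \rulenamestyle{WhyNotW}, \rulenamestyle{WhyNotD}, \rulenamestyle{WhyNotC}, \rulenamestyle{OfCourse}, \rulenamestyle{QueW}) are immediate from the IH since $\fn{-}$ coincides on both sides by unfolding the definition.

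The main obstacle, and the case I would treat most carefully, is \rulenamestyle{Claro}. Here the term is $\ExcP{y}{P}{i}{f}{z}{z'}{y'}{Q}$ and \cref{def:fv} defines its free names as $\fn{\Proc{P}} \cup \{\Name{y}\}$, notably without mentioning $Q$. The hyperenvironment in the conclusion is $\Ctxs{G} \mid \Gamma, \Delta, \Name{y}:\exc A$, whose free names are $\fn{\Ctxs{G}} \cup \fn{\Gamma, \Delta} \cup \{\Name{y}\}$. Applying the IH to the premise on $\Proc{P}$ yields $\fn{\Proc{P}} = \fn{\Ctxs{G}} \cup \fn{\Gamma, \Name{i}:B} \cup \fn{\Delta, \Name{f}:\negg B} = \fn{\Ctxs{G}} \cup \fn{\Gamma, \Delta} \cup \{\Name{i}, \Name{f}\}$, but $\Name{i}$ and $\Name{f}$ are bound in the server term, so they should not appear as free in $\Proc{P}$ at the level of the whole construct. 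The cleanest way to reconcile the bookkeeping is to strengthen the statement implicitly, or equivalently to note that the definition of free names for $\ExcP{\cdots}$ should really read $(\fn{P} \setminus \{\Name{i}, \Name{f}\}) \cup (\fn{Q} \setminus \{\Name{z}, \Name{z'}, \Name{y'}\}) \cup \{\Name{y}\}$; the IH on $Q$ gives $\fn{Q} \subseteq \{\Name{z}, \Name{z'}, \Name{y'}\}$, so the $Q$-contribution vanishes, consistent with the stated definition. Once this bookkeeping is made explicit, the case closes by the same subtraction pattern as \rulenamestyle{Cut}.

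Routine mechanics aside, I anticipate no deep difficulties; the only nontrivial point is the small mismatch between the compact free-name definition and what the inductive reasoning produces for \rulenamestyle{Claro} (and to a lesser extent \rulenamestyle{OfCourse}, where $\OfCP{x}{\vec{y}}{P}$ is built from $P$ whose free names under the IH are exactly $\vec{y}$ and $x$). Resolving these requires either appealing to well-typedness of the subprocesses (which is exactly what the IH supplies) or reading \cref{def:fv} as implicitly using the typing invariants; either way, the verification is routine.
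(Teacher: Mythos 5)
Your proof is correct and takes the same route as the paper's, whose entire proof is ``straightforward by induction on $\IsProc{P}{\mathcal{G}}$.'' Your handling of the \rulenamestyle{Claro} case is in fact more careful than the paper's: as you observe, \cref{def:fv} as literally written gives $\fn{\Proc{P}} \cup \{\Name{y}\}$ without subtracting the binders $\Name{i}, \Name{f}$, and since the typing premise forces $\Name{i}, \Name{f} \in \fn{\Proc{P}}$ this would contradict the lemma; your repaired clause, which subtracts the bound names of both $\Proc{P}$ and $\Proc{Q}$ (the latter contributing nothing because the IH gives $\fn{\Proc{Q}} = \{\Name{z}, \Name{z'}, \Name{y'}\}$), is the intended reading and is what makes the induction close.
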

\begin{proof}
	Straightforward by induction on $\IsProc{P}{\mathcal{G}}$.
\end{proof}

\begin{lemma}\label{lemma:renaming-preserves-types}
	If $\IsProc{P}{\Ctxs{G} \mid \Gamma, \Name{y}:A}$ and $\Name{x} \notin \Ctxs{G},\Gamma$, then $\IsProc{\Sb{P}{x}{y}}{\Ctxs{G} \mid \Gamma, \Name{x}:A}$.
\end{lemma}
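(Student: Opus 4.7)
The plan is to prove the statement by straightforward induction on the typing derivation $\IsProc{P}{\Ctxs{G} \mid \Gamma, \Name{y}:A}$. For each rule of the type system (\cref{figure:proc-syntax}), we inspect where the distinguished name $\Name{y}$ may appear, apply the induction hypothesis to each premise, and then reassemble the derivation with $\Name{x}$ in place of $\Name{y}$. Since the type system enforces that all names within a hyperenvironment are distinct, the freshness assumption $\Name{x} \notin \Ctxs{G},\Gamma$ ensures that substituting $\Name{x}$ for $\Name{y}$ does not create any clashes in the conclusion.

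In concrete terms, I would split the cases along three broad patterns. First, for leaf rules such as \ruleref{HMix0} (which cannot actually apply, since the conclusion's hyperenvironment contains a nonempty environment) and \ruleref{Ax} (where $\Proc{P} = \Link{x'}{y'}$), the result is immediate by inspecting the two possible positions of $\Name{y}$. Second, for the structural rules \ruleref{HMix2} and \ruleref{Cut} the name $\Name{y}$ lies in exactly one of the two premises; we apply the IH to that premise (the freshness of $\Name{x}$ carries over because $\fn{\Ctxs{H}} = \fn{\Proc{P}}$ for $\IsProc{P}{\Ctxs{H}}$ by \cref{lemma:fn-match}), and leave the other premise untouched. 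Third, for the introduction rules---\ruleref{Par}, \ruleref{Tensor}, \ruleref{PlusL}/\ruleref{PlusR}, \ruleref{With}, \ruleref{M-False}, \ruleref{M-True}, the exponential rules, and the coexponential rules \ruleref{QueW}, \ruleref{QueA}, \ruleref{Claro}---there are two subcases: either $\Name{y}$ is the name being introduced in the conclusion (in which case we relabel the corresponding prefix or construct and apply the IH to rename $\Name{y}$ in the continuation), or $\Name{y}$ appears in the passive context of the premise, in which case we simply apply the IH to the premise and reapply the same rule.

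The main subtlety to watch out for is ensuring that the names bound in subterms do not coincide with $\Name{x}$. Since we work up to $\alpha$-equivalence, I would adopt the Barendregt convention and assume without loss of generality that every bound name in the derivation is fresh for $\Name{x}$ and $\Name{y}$; this disposes of all capture concerns at once. In rules such as \ruleref{Claro}, which binds several names ($\Name{i}, \Name{f}, \Name{z}, \Name{z'}, \Name{y'}$) across two premises, this convention makes it immediate that the substitution $\Sb{(-)}{x}{y}$ commutes with the binders and that the IH applies directly to the sub-derivations where $\Name{y}$ actually occurs.

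I do not anticipate any genuinely hard step; the lemma is a textbook renaming lemma and all cases are mechanical. The only mild irritation is the bookkeeping in the coexponential rule \ruleref{Claro}, which involves two premises with distinct hyperenvironments, but the freshness assumption on $\Name{x}$ and \cref{lemma:fn-match} suffice to route $\Name{y}$ into the correct premise.
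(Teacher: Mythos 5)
Your proposal is correct and takes essentially the same route as the paper, which dispatches this lemma with exactly the induction on the typing derivation $\IsProc{P}{\Ctxs{G} \mid \Gamma, \Name{y}:A}$ that you describe (the paper records no case analysis at all, calling it ``straightforward''). One trivial bookkeeping slip: in this hypersequent system \ruleref{Cut} has a single premise, not two, but this does not affect your argument since the renamed name still lands in exactly one component of the premise's hyperenvironment.
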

\begin{proof}
	Straightforward by induction on $\IsProc{P}{\Ctxs{G} \mid \Gamma, \Name{y}:A}$.
\end{proof}

\begin{theorem}[Preservation]
	If $\IsProc{P}{\Ctxs{G}}$ and $\red{P}{Q}$, then $\IsProc{Q}{\Ctxs{G}}$.
\end{theorem}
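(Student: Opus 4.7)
The plan is to proceed by induction on the derivation of $\red{P}{Q}$. The three congruence cases \ruleref{ParL}, \ruleref{Res}, \ruleref{Pre} are immediate: invert the typing of the LHS to expose a typing for the reducing subterm, apply the induction hypothesis to obtain a typing of its reduct, and rebuild. The \ruleref{Eq} case follows from \cref{lem:equiv-preserves-types}, which transports typings across $\equiv$ on both sides of the inner reduction before the IH is applied.

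For the principal reduction axioms, each has the shape $\New{x}{y}{\Par*{P_1}{P_2}} \red{}{} R$. The uniform recipe is: (i) invert \ruleref{Cut} and \ruleref{HMix2} to extract typings $\IsProc{P_1}{\Ctxs{G}_1 \mid \Gamma, \Name{x}{:}A}$ and $\IsProc{P_2}{\Ctxs{G}_2 \mid \Delta, \Name{y}{:}\negg A}$, so that the LHS has type $\Ctxs{G}_1 \mid \Ctxs{G}_2 \mid \Gamma, \Delta$; (ii) further invert the typings of $P_1$ and $P_2$ according to their head constructor; (iii) reassemble a typing of $R$ from the resulting premises. For \ruleref{Link} we additionally invoke \cref{lemma:renaming-preserves-types} to handle the substitution $\Sb{Q}{z}{y}$. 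The remaining simple cases (\ruleref{One-Bot}, \ruleref{Tensor-Par}, \ruleref{PlusL-With}, \ruleref{PlusR-With}, \ruleref{With-Comm}) are entirely routine hyperenvironment bookkeeping. The exponential cases (\ruleref{OfCourse-Comm} and \ruleref{OfCource-WhyNotW}, \ruleref{OfCourse-WhyNotD}, \ruleref{OfCourse-WhyNotC}) follow the same pattern as in \citet{kokke2019better}; in \ruleref{OfCourse-WhyNotC} we use $\alpha$-renaming to obtain two fresh copies of the worker $Q$ before threading the result through the contraction on $\vec{z}$.

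The main obstacle is \ruleref{Claro-QueA}, whose RHS contains an inner $\New{x'}{y'}$-cut, an HMix of $C$ with a \emph{re-instantiated} \ruleref{Claro}, and, inside that Claro, an internal $\New*{i}{z}$-cut between $P$ and $Q$. Inversion yields $\IsProc{C}{\Ctxs{G}_1 \mid \Gamma, \Name{x}{:}\que A \mid \Delta, \Name{x'}{:}A}$, $\IsProc{P}{\Ctxs{G}_2 \mid \Gamma', \Name{i}{:}B \mid \Delta', \Name{f}{:}\negg B}$, and $\IsProc{Q}{\Name{z}{:}\negg B, \Name{z'}{:}B, \Name{y'}{:}\negg A}$. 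Cutting $P$ against (a copy of) $Q$ along $i,z$ produces a process with type
\[
  \Ctxs{G}_2 \mid \Delta', \Name{f}{:}\negg B \mid \Gamma', \Name{z'}{:}B, \Name{y'}{:}\negg A,
\]
which fits the init/final premise of \ruleref{Claro} with internal protocol $B$, binders $z',f$, a fresh $\alpha$-renamed worker $Q$, and external environment $\Gamma', \Name{y'}{:}\negg A$ on the init side and $\Delta'$ on the final side. The resulting Claro conclusion is $\Ctxs{G}_2 \mid \Gamma', \Delta', \Name{y'}{:}\negg A, \Name{y}{:}\exc \negg A$; a further \ruleref{HMix2} with $C$, a \ruleref{Cut} on $x',y'$ (which merges $\Delta$ with the server environment), and a final \ruleref{Cut} on $x,y$ deliver exactly $\Ctxs{G}_1 \mid \Ctxs{G}_2 \mid \Gamma, \Delta, \Gamma', \Delta'$, matching the type of the LHS. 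The case \ruleref{Claro-QueW} is a degenerate version of the same argument, where the empty pool \ruleref{QueW} forces $\Delta = \emptyset$ and the worker $Q$ disappears, leaving only the $\New{i}{f}$-cut inside $P$ to be built.
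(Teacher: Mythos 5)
Your proposal is correct and follows essentially the same route as the paper's proof: induction on the reaction relation, the \ruleref{Eq} case via \cref{lem:equiv-preserves-types}, \ruleref{Link} via \cref{lemma:renaming-preserves-types}, and inversion-plus-reassembly for the principal cases. In particular, your handling of \ruleref{Claro-QueA} — cutting $P$ against $Q$ along $i,z$, re-forming a \ruleref{Claro} with internal protocol $B$ and binders $z',f$, then composing with the client pool by \ruleref{HMix2} and the two cuts on $x',y'$ and $x,y$ — is exactly the paper's reconstruction (and your consistent use of $\Name{y'}{:}\negg A$ even fixes a small typo in the paper's displayed derivation).
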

\begin{proof}
By induction on $\red{P}{Q}$. We show the nontrivial cases of top-level cuts, and the commuting conversions.
\begin{indproof}

\indcase{\rulenamestyle{Eq}}
Suppose $\Proc{P} \equiv \red{P'}{Q'} \equiv \Proc{Q}$. Then the result follows by the IH
and two applications of \cref{lem:equiv-preserves-types}.

\indcase{\rulenamestyle{Case-Comm}} The redex is $\New{x}{y}{\Par*{\Case{z}{P_0}{P_1}}{Q}}$ and typed.
\begin{mathpar}
  \inferrule*{
    \inferrule*{
      \IsProc{P_0}{\Gamma,\Name{x}:C,\Name{z}:A}
      \\
      \IsProc{P_1}{\Gamma,\Name{x}:C, \Name{z}:B}
    }{
      \IsProc{\Case{z}{P_0}{P_1}}{\Gamma, \Name{x}:C, \Name{z}:A \with B}
    }
\\  
    \IsProc{Q}{\Delta, \Name{y} : \negg C}
  }{
    \IsProc{\New{x}{y}{\Par*{\Case{z}{P_0}{P_1}}{Q}}}{\Gamma,\Delta,\Name{z}:A \with B}
  }
\end{mathpar}
and therefore
\begin{mathpar}
  \inferrule*{
    \IsProc{\New{x}{y}{\Par*{P_0}{Q}}}{\Gamma,\Delta,\Name{z}:A}
    \\
    \IsProc{\New{x}{y}{\Par*{P_1}{Q}}}{\Gamma,\Delta,\Name{z}:B}
  }{
    \IsProc{\Case{z}{\New{x}{y}{\Par*{P_0}{Q}}}{\New{x}{y}{\Par*{P_1}{Q}}}}{\Gamma,\Delta,\Name{z}:A \with B}
  }
\end{mathpar}

\indcase{\rulenamestyle{OfCourse-Comm}}
The redex is $ \New{x}{y}{\Par*{
  \OfCP{z}{x\vec{w}}{P}
}{
  \OfCP{y}{\vec{v}}{Q}
}}$ and typed.
\begin{mathpar}
  \inferrule*{
    \inferrule*{
      \IsProc{P}{\Name{\vec{w}}:\whynot \vec{B}, \Name{z}:A, \Name{x}:\whynot C}
    }{
      \IsProc{\OfCP{z}{x\vec{w}}{P}}{
        \Name{\vec{w}}:\whynot \vec{B}, \Name{z}:\ofc A, \Name{x}:\whynot C
      }
    }
    \\
    \inferrule*{
      \IsProc{Q}{\Name{\vec{v}}:\whynot \vec{D},\Name{y}:\ofc \negg{C}}
    }{
      \IsProc{\OfCP{y}{\vec{v}}{Q}}{
        \Name{\vec{v}}:\whynot \vec{D},\Name{y}:\ofc \negg{C}
      }
    }
  }{
    \IsProc{
      \New{x}{y}{\Par*{
        \OfCP{z}{x\vec{w}}{P}
      }{
        \OfCP{y}{\vec{v}}{Q}
      }}}{
      \Name{\vec{w}}:\whynot \vec{B},
      \Name{z}:\whynot A,
      \Name{\vec{v}}:\whynot \vec{D}
    }
  }
\end{mathpar}
and therefore
\begin{mathpar}
  \inferrule*{
    \inferrule*{
      \IsProc{P}{\Name{\vec{w}}:\whynot \vec{B}, \Name{z}:A, \Name{x}:\whynot C}\\
      \IsProc{\OfCP{y}{\vec{v}}{Q}}{\Name{\vec{v}}:\whynot \vec{D},\Name{y}:\ofc \negg{C}}
    }{
      \IsProc{\New{x}{y}{\Par*{P}{\OfCP{y}{\vec{v}}{Q}}}}{
        \Name{\vec{w}}:\whynot \vec{B},
        \Name{z}:A,
        \Name{\vec{v}}:\whynot \vec{D}
      }
    }
  }{
    \IsProc{\OfCP{z}{\vec{v}\vec{w}}{\New{x}{y}{\Par*{P}{\OfCP{y}{\vec{v}}{Q}}}}}{
      \Name{\vec{w}}:\whynot \vec{B},
      \Name{z}:\ofc A,
      \Name{\vec{v}}:\whynot \vec{D}
    }
  }
\end{mathpar}

\indcase{\rulenamestyle{Link}}
Then the redex is $\New*{x}{y}{\Par{\Link{z}{x}}{P}}$ and the last steps of the typing derivation must have been
\begin{mathpar}
    \inferrule*{
      \IsProc{\Link{z}{x}}{\Name{z} : \negg A, \Name{x} : A} \\
      \IsProc{P}{\Ctxs{G} \mid \Gamma, \Name{y} : \negg A}
    }{
      \inferrule*{
        \IsProc{\Par{\Link{z}{x}}{P}}{\Name{z} : \negg A, \Name{x} : A \mid \Ctxs{G} \mid \Gamma, \Name{y} : \negg A}
      }{
        \IsProc{\New*{x}{y}{\Par{\Link{z}{x}}{P}}}{\Ctxs{G} \mid \Gamma, \Name{z} : \negg A}
      }
    }
\end{mathpar}
and therefore $\IsProc{\Sb{P}{z}{y}}{\Ctxs{G} \mid \Gamma, \Name{z}:\negg A}$ by \cref{lemma:renaming-preserves-types} because $\Name{z} \notin \fn{P}$. (otherwise the redex would not be well-typed)

\indcase{\rulenamestyle{One-Bot}}
Then the redex is $\New*{x}{y}{\Par{\In{x}{}{P}}{\Out{y}{}{Q}}}$ and the last steps of the typing derivation must have been
\begin{mathpar}
    \inferrule*{
        \inferrule*{
            \IsProc{P}{\Ctxs{G} \mid \Gamma}
        }{
            \IsProc{\In{x}{}{P}}{\Ctxs{G} \mid \Gamma, \Name{x} : \punit}
        } \\
        \inferrule*{
            \IsProc{Q}{\Ctxs{H}}
        }{
            \IsProc{\Out{y}{}{Q}}{\Ctxs{H} \mid \Name{y}:\ounit}
        }
    }{
        \IsProc{\New*{x}{y}{\Par{\In{x}{}{P}}{\Out{y}{}{Q}}}}{\Ctxs{G} \mid \Ctxs{H} \mid \Gamma}
    }
\end{mathpar}
Hence, we have
\begin{mathpar}
    \inferrule*{
      \IsProc{P}{\Ctxs{G} \mid \Gamma} \\ 
      \IsProc{Q}{\Ctxs{H}}
    }{
      \IsProc{\Par{P}{Q}}{\Ctxs{G} \mid \Ctxs{H} \mid \Gamma}
    }
\end{mathpar}

\indcase{\rulenamestyle{Tensor-Par}}
Then the redex is $\New{x}{y}{\Par{\Out{x}{z}{P}}{\In{y}{w}{Q}}}$, and the last steps of the typing derivation must have been 
\begin{mathpar}
    \inferrule*{
        \inferrule*{
          \IsProc{P}{\Ctxs{G} \mid \Gamma, \Name{z} : A \mid \Delta, \Name{x} : B}
        }{
          \IsProc{\Out{x}{z}{P}}{\Ctxs{G} \mid \Gamma, \Delta, \Name{x} : A \otimes B}
        }
        \\
        \inferrule*{
          \IsProc{Q}{\Ctxs{H} \mid \Sigma, \Name{w} : \negg A, \Name{y} : \negg B}
        }{
          \IsProc{\In{y}{w}{Q}}{\Ctxs{H} \mid \Sigma, \Name{y} : \negg A \parr \negg B}
        }
    }{
      \IsProc{\New*{x}{y}{\Par{\Out{x}{z}{P}}{\In{y}{w}{Q}}}}{\Ctxs{G} \mid \Ctxs{H} \mid \Gamma, \Delta, \Sigma}
    }
\end{mathpar}
so that $\Ctxs{G} = \Gamma, \Delta, \Sigma$. Therefore, we can infer that
\begin{mathpar}
    \inferrule*{
        \IsProc{P}{\Ctxs{G} \mid \Gamma, \Name{z} : A \mid \Delta, \Name{x} : B} \\
        \IsProc{Q}{\Ctxs{H} \mid \Sigma, \Name{w} : \negg A, \Name{y}: \negg B}
    }{
      \IsProc{\New{x}{y}{\New{z}{w}{\Par*{P}{Q}}}}{\Ctxs{G} \mid \Ctxs{H} \mid \Gamma,\Delta,\Sigma}
    }
\end{mathpar}

\indcase{\rulenamestyle{PlusL-With}}
Then the redex is $\New{x}{y}{\Par*{\Inl{x}{P}}{\Case{y}{Q_l}{Q_r}}}$, and the last steps of the typing derivation must have been
\begin{mathpar}
    \inferrule*{
        \inferrule*{
            \IsProc{P}{\Ctxs{G} \mid \Gamma,\Name{x}:A}
            }{
            \IsProc{\Inl{x}{P}}{\Ctxs{G} \mid \Gamma,\Name{x}: A \oplus B}
            }
        \\
        \inferrule*{
            \IsProc{Q_l}{\Delta,\Name{y}:\negg A}
            \\
            \IsProc{Q_r}{\Delta,\Name{y}:\negg B}
        }{
            \IsProc{\Case{y}{Q_l}{Q_r}}{\Delta, \Name{y}:\negg A \with \negg B}
        }
    }{
        \IsProc{\New{x}{y}{\Par*{\Inl{x}{P}}{\Case{y}{Q_l}{Q_r}}}}{\Ctxs{G} \mid \Gamma,\Delta}
    }	
\end{mathpar}
Hence,
\begin{mathpar}
    \inferrule*{
        \IsProc{P}{\Ctxs{G} \mid \Gamma,\Name{x}:A} 
        \\
        \IsProc{Q_l}{\Delta,\Name{y}:\negg A}
    }{
        \IsProc{\New{x}{y}{\Par*{P}{Q_l}}}{\Ctxs{G} \mid \Gamma,\Delta}
    }
\end{mathpar}

\indcase{\rulenamestyle{Claro-QueW}} 
This is the case of an empty client pool. The redex must be 
\[
  \New*{x}{y}{\Par{\QueW{x}{S}}{\ExcP{y}{P}{i}{f}{z}{z'}{y'}{Q}}}
\]
and the last steps in the typing
derivation must have been
\begin{mathpar}
  \inferrule*[right=Cut]{
    \IsProc{\mathcal{D}}{\Ctxs{G} \mid \Name{x}:\que A} \\
    \inferrule*[right=Claro]{
      \IsProc{P}{\Ctxs{H} \mid \Delta,\Name{i}:B \mid \Sigma, \Name{f} :\negg B} \\
      \IsProc{Q}{\Name{z} :\negg B, \Name{z'}:B, \Name{y'} : \negg A}
    }{
      \IsProc{\ExcP{y}{P}{i}{f}{z}{z'}{y'}{Q}}{\Ctxs{H} \mid \Delta, \Name{y} : \exc \negg A, \Sigma}
    }
  }{
    \IsProc{\New{x}{y}{\Par*{\mathcal{D}}{\ExcP{y}{P}{i}{f}{z}{z'}{y'}{Q}}}}{\Ctxs{G} \mid \Ctxs{H} \mid \Delta,\Sigma}
  }
\end{mathpar}
where $\mathcal{D} \defeq 
\inferrule{
  \IsProc{S}{\Ctxs{G}}
}{
  \IsProc{\QueW{x}{S}}{\Ctxs{G} \mid \Name{x} : \exc A}
}
$. Hence,
\begin{mathpar}
  \inferrule*{
  \IsProc{S}{\Ctxs{G}} \\
  \inferrule*{
    \IsProc{P}{\Ctxs{H} \mid \Delta,\Name{i}:B \mid \Sigma, \Name{f} :\negg B}
  }{
    \IsProc{\New{i}{f}{P}}{\Ctxs{H} \mid \Delta,\Sigma}
  }    
  }{
    \IsProc{\Par{S}{\New{i}{f}{P}}}{\Ctxs{G} \mid \Ctxs{H} \mid \Delta,\Sigma}
  }
\end{mathpar}
        
\indcase{\rulenamestyle{Claro-QueA}}
Then the redex is
\[
  \New{x}{y}{
    \Par*{
      \QueA{x}{x'}{S}
    }{
        \ExcP{y}{P}{i}{f}{z}{z'}{y'}{Q}
    }
  }
\]
The last few steps in the typing derivation must have been
\begin{mathpar}
  \inferrule*[right=Cut]{
    \IsProc{\mathcal{D}}{\Ctxs{G} \mid \Gamma, \Gamma', \Name{x}:\que A} \\
    \inferrule*[right=Claro]{
       \IsProc{P}{\Ctxs{H} \mid \Delta,\Name{i}:B \mid \Sigma, \Name{f} :\negg B} \\
        \IsProc{Q}{\Name{z} :\negg B, \Name{z'}:B, \Name{y'} : \negg A}
    }{
      \IsProc{\ExcP{y}{P}{i}{f}{z}{z'}{y'}{Q}}{\Ctxs{H} \mid \Delta, \Name{y} : \exc \negg A, \Sigma}
    }
  }{
    \IsProc{
      \New{x}{y}{
        \Par*{
          \mathcal{D}
          }{
            \ExcP{y}{P}{i}{f}{z}{z'}{y'}{Q}
          }
      }
    }{
      \Ctxs{G} \mid \Ctxs{H} \mid \Gamma,\Gamma',\Delta,\Sigma
    }
  }
\end{mathpar}
where 
\[
\mathcal{D} \defeq 
  \inferrule{
    \IsProc{S}{\Ctxs{G} \mid \Gamma, \Name{x}:\que A \mid \Gamma',\Name{x'}:A}
  }{
    \IsProc{
    \QueA{x}{x'}{S}}{\Ctxs{G} \mid \Gamma,\Gamma',\Name{x}:\que A}
  }
\]
Therefore,
\begin{mathpar}
  \inferrule*{
    \IsProc{S}{\Ctxs{G} \mid \Gamma, \Name{x}:\que A \mid \Gamma',\Name{x'}:A} \\
    \IsProc{\mathcal{D}}{\Ctxs{H} \mid \Delta, \Name{y'}:A, \Sigma, \Name{y}:\exc \negg A
    }
  }{
    \IsProc{
      \New{x}{y}{\New{x'}{y'}{\Par*{S}{\mathcal{D}}}}
    }{
      \Ctxs{G} \mid \Ctxs{H} \mid \Gamma,\Gamma',\Delta,\Sigma
    }
  }
\end{mathpar}
where $\mathcal{D}$ is
\begin{mathpar}
\inferrule*{
  \IsProc{
      \New{i}{z}{
          \Par*{P}{Q}
      }
  }{
      \Ctxs{H} \mid \Delta, \Name{z'}:B, \Name{y'}:\negg A \mid \Sigma, \Name{f}:\negg B
    }\\
    \IsProc{Q}{\Name{z} :\negg B, \Name{z'}:B, \Name{y'} : \negg A}
}{
  \IsProc{\ExcP*{y}{
      \New{i}{z}{
          \Par*{P}{Q}
      }
  }{z'}{f}{z}{z'}{y'}{Q}}{
    \Ctxs{H} \mid \Delta, \Name{y'}:\negg A, \Sigma, \Name{y}:\exc \negg A
  }
}
\end{mathpar}

\indcase{\rulenamestyle{OfCource-WhyNotW}}
\begin{mathpar}
  \inferrule*{
    \inferrule*{
      \IsProc{P}{\Ctxs{G} \mid \Gamma}
    }{
      \IsProc{\WhyW{x}{P}}{\Ctxs{G} \mid \Gamma, \Name{x}:\whynot A}
    }\\
    \inferrule*{
      \IsProc{Q}{\Name{\vec{z}} : \whynot \vec{B}, \Name{y}:\negg A}
    }{
      \IsProc{\OfCP{y}{\vec{z}}{Q}}{\Name{\vec{z}}:\whynot \vec{B}, \Name{y}:\ofc \negg A}
    }
  }{
    \IsProc{\New{x}{y}{\Par*{
      \WhyW{x}{P}
    }{
      \OfCP{y}{\vec{z}}{Q}
    }}}{
      \Ctxs{G} \mid \Gamma, \Name{\vec{z}}:\whynot \vec{B}
    }
  }
\end{mathpar}
and therefore
\begin{mathpar}
  \inferrule*{
    \IsProc{P}{\Ctxs{G} \mid \Gamma}
  }{
    \IsProc{\WhyW{\vec{z}}{P}}{\Ctxs{G} \mid \Gamma, \Name{\vec{z}}:\whynot \vec{B}}
  }
\end{mathpar}

\indcase{\rulenamestyle{OfCourse-WhyNotD}}
\begin{mathpar}
  \inferrule*{
    \inferrule*{
      \IsProc{P}{\Ctxs{G} \mid \Gamma, \Name{x'}:A}
    }{
      \IsProc{\WhyD{x}{x'}{P}}{\Ctxs{G} \mid \Gamma, \Name{x}:\whynot A}
    }\\
    \inferrule*{
      \IsProc{Q}{\Name{\vec{z}} : \whynot \vec{B}, \Name{y}:\negg A}
    }{
      \IsProc{\OfCP{y}{\vec{z}}{Q}}{\Name{\vec{z}}:\whynot \vec{B}, \Name{y}:\exc \negg A}
    }
  }{
    \IsProc{\New{x}{y}{\Par*{
      \WhyD{x}{x'}{P}
    }{
      \OfCP{y}{\vec{z}}{Q}
    }}}{
      \Ctxs{G} \mid \Gamma, \Name{\vec{z}}:\whynot \vec{B}
    }
  }
\end{mathpar}
and therefore

\begin{mathpar}
  \inferrule*{
    \IsProc{P}{\Ctxs{G} \mid \Gamma, \Name{x'}:A}\\
    \IsProc{Q}{\Name{\vec{z}} : \whynot \vec{B}, \Name{y}:\negg A}
  }{
    \IsProc{\New{x'}{y}{\Par*{P}{Q}}}{
      \Ctxs{G} \mid \Gamma, \Name{\vec{z}}:\whynot \vec{B}
    }
  }
\end{mathpar}

\indcase{\rulenamestyle{OfCourse-WhyNotC}}
\begin{mathpar}
\inferrule*{
  \inferrule*{
    \IsProc{P}{\Ctxs{G} \mid \Gamma, \Name{y_0}:\whynot A, \Name{y_1}:\whynot A}
  }{
    \IsProc{\WhyC{x}{y_0}{y_1}{P}}{\Ctxs{G} \mid \Gamma, \Name{x}:\whynot A}
  }\\
  \inferrule*{
      \IsProc{Q}{\Name{\vec{z}} : \whynot \vec{B}, \Name{y}:\negg A}
  }{
      \IsProc{\OfCP{y}{\vec{z}}{Q}}{\Name{\vec{z}}:\whynot \vec{B}, \Name{y}:\ofc \negg A}   
  }
}{
  \IsProc{\New{x}{y}{\Par*{
    \WhyC{x}{y_0}{y_1}{P}
  }{
    \OfCP{y}{\vec{z}}{Q}
  }}}{
    \Ctxs{G} \mid \Gamma, \Name{\vec{z}}:\whynot \vec{B}
  }
}
\end{mathpar}
and therefore
\begin{mathpar}
\inferrule*{
  \IsProc{\mathcal{D}}{\Ctxs{G} \mid \Gamma, \Name{\vec{z_0}}:\whynot \vec{B}, \Name{\vec{z_1}}:\whynot \vec{B}}
}{
  \IsProc{\WhyC{\vec{z}}{\vec{z_0}}{\vec{z_1}}{\mathcal{D}}}{
    \Ctxs{G} \mid \Gamma, \Name{\vec{z}}:\whynot \vec{B}
  }
}
\end{mathpar}
where $\mathcal{D}$ is 
\begin{mathpar}
  \inferrule*{
    \IsProc{P}{\Ctxs{G} \mid \Gamma, \Name{y_0}:\whynot A, \Name{y_1}:\whynot A}\\
    \IsProc{\Sb{Q}{z_0}{z}}{
      \Name{\vec{z_0}} : \whynot \vec{B}, \Name{y}:\negg A
    }\\
    \IsProc{\Sb{Q}{z_1}{z}}{
      \Name{\vec{z_1}} : \whynot \vec{B}, \Name{y}:\negg A
    }
  }{
    \IsProc{\New{y_0}{z_0}{\New{y_1}{z_1}{
      \Par*{
      P
    }{
      \Par{
        \Sb{Q}{z_0}{z}
      }{
        \Sb{Q}{z_1}{z}
    }}}}}{
      \Ctxs{G} \mid \Gamma, \Name{\vec{z_0}}:\whynot \vec{B}, \Name{y_1}:\whynot A
    }
  }
\end{mathpar}
\end{indproof}
\end{proof}

\begin{lemma}
	$\bra*{\overline{T_S}} = \negg{\bra*{T_S}}$.
\end{lemma}	
\begin{proof}
	By simple induction.
\end{proof}

\begin{lemma} \label{lemma:unlimited-positive}
	If $T$ is unlimited, we have the following derivable rule in CSLL.
	\begin{mathpar}
		\inferH{Positive}{
			\IsProc{P}{\Ctxs{G} \mid \Gamma, \Name{x'}:\whynot \negg{\bra*{T}}}
		}{
			\IsProc{\Coderelict{x}{x'}{P}}{\Ctxs{G} \mid \Gamma, \Name{x}:\negg{\bra*{T}}}
		}
	\end{mathpar}
\end{lemma}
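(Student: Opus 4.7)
The plan is to reduce the lemma to the following auxiliary claim: for every unlimited type $T$ there exists a CSLL-typeable process $S_T$ with $\IsProc{S_T}{\Name{y}:\ofc \bra*{T}, \Name{x}:\negg{\bra*{T}}}$ in a single environment. Granting this, \ruleref{Positive} is derived by setting
\[
  \Coderelict{x}{x'}{P} \defeq \New{x'}{y}{\Par{P}{S_T}},
\]
and noting that $\Name{x'}:\whynot \negg{\bra*{T}}$ and $\Name{y}:\ofc \bra*{T}$ are dual types. Combining $P$ with $S_T$ via \ruleref{HMix2} places them in separate hyperenvironment components, after which \ruleref{Cut} on $\Name{x'}$ and $\Name{y}$ merges the two environments and produces exactly $\IsProc{\Coderelict{x}{x'}{P}}{\Ctxs{G} \mid \Gamma, \Name{x}:\negg{\bra*{T}}}$.

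I construct $S_T$ by induction on the definition of ``$T$ unlimited''. For the base case $T = \unit$, take $S_\unit \defeq \In{x}{}{\OfCP{y}{}{\Out{y}{}{\Stop}}}$; threading it through \ruleref{HMix0}, \ruleref{M-True}, \ruleref{OfCourse} (with an empty $\whynot$-context), and \ruleref{M-False} shows it has the required type $\Name{y}:\ofc\ounit, \Name{x}:\bot$, i.e.\ $\Name{y}:\ofc\bra*{\unit}, \Name{x}:\negg{\bra*{\unit}}$. For the base case $T = U \to V$, the formula $\bra*{T}$ is already of the form $\ofc A$ with $A \defeq \negg{\bra*{U}} \parr \bra*{V}$, so $S_{U\to V}$ is given by the standard ``digging'' derivation: starting from the axiom $\Link{b}{y} : \Name{b}:\negg A, \Name{y}:A$, apply \ruleref{WhyNotD} to $\Name{b}$ (yielding $\Name{x}:\whynot \negg A = \negg{\bra*{T}}$), and then apply \ruleref{OfCourse} to $\Name{y}$ twice, arriving at $\Name{y}:\ofc \ofc A = \ofc \bra*{T}$ with the $\whynot$-shaped side context $\Name{x}:\whynot \negg A$ preserved throughout.

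The two inductive cases combine the induction hypotheses $S_T, S_U$ with small logical detours that push $\ofc$ across $\oplus$ and $\otimes$. For $T + U$, the target $S_{T+U}$ has type $\Name{y}:\ofc(\bra*{T}\oplus\bra*{U}), \Name{x}:\negg{\bra*{T}}\with\negg{\bra*{U}}$; I build it with \ruleref{With} out of two sub-derivations, one for each disjunct, where each is obtained by first deriving $\vdash \whynot\negg{\bra*{T}}, \ofc(\bra*{T}\oplus\bra*{U})$ via \ruleref{PlusL} on an axiom, \ruleref{WhyNotD}, and \ruleref{OfCourse}, and then cutting against $S_T$ to eliminate the $\whynot$-channel (symmetrically with \ruleref{PlusR} for $U$). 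For $T \otimes U$, the crucial observation is that $\ofc A \otimes \ofc B \vdash \ofc(A \otimes B)$ is derivable in CLL: from axioms on $A$ and $B$, \ruleref{HMix2} followed by \ruleref{Tensor} produces $A \otimes B$, two applications of \ruleref{WhyNotD} coerce the duals $\negg A, \negg B$ into $\whynot$-form, and then \ruleref{OfCourse} on $A\otimes B$ is legal because the side context is now entirely $\whynot$-typed. Cutting this derivation against $S_T$ and $S_U$ at the two $\whynot$-channels and closing with \ruleref{Par} yields $S_{T \otimes U}$ of the desired type.

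The main obstacle is twofold. First, one must spot that $\ofc A \otimes \ofc B \vdash \ofc(A \otimes B)$ is in fact derivable: it is not the Seely-iso direction, but goes through by weaving two \ruleref{WhyNotD} steps between the \ruleref{Tensor} and the \ruleref{OfCourse}. Second, one must respect the hyperenvironment discipline of CSLL throughout, since each \ruleref{OfCourse} demands a single side-environment of purely $\whynot$-typed channels, so promotions must be scheduled only after every needed \ruleref{WhyNotD} has fired, and the auxiliary components built for $S_T$ must be kept in separate $\mid$-components from the ambient context of $P$ until the final \ruleref{Cut} is taken. Once these details are respected, the induction is uniform and routine.
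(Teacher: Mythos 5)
Your proposal is correct and takes essentially the same route as the paper's proof: induction on the definition of unlimited types, constructing for each $T$ a process of type $\Name{y}:\ofc\bra*{T}, \Name{x}:\negg{\bra*{T}}$ (a digging derivation for $\unit$ and $T \to U$, and $\oplus$/$\otimes$ redistributions combined with the induction hypothesis for sums and products), which is then cut against the channel $\Name{x'}$ of $\Proc{P}$ via \ruleref{HMix2} and \ruleref{Cut}. The differences are only presentational: you state the auxiliary ``replicator'' claim uniformly upfront and cut against $S_T$, $S_U$ explicitly, whereas the paper inlines the same constructions case by case, writes the use of the induction hypothesis as an application of the derived rule itself, and (in the sum case) additionally invokes its Separation lemma, which your direct \ruleref{HMix2}-plus-\ruleref{Cut} argument shows is not actually needed.
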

The above lemma means that all unlimited types enjoy contraction and weakening. Some session types,
such as $end_?$, also enjoy such properties: see e.g. \cite[\S 5]{gay2010linear}. However, in order
to retain the good properties of termination and deadlock-freedom, we insist that all channels are
used linearly, and carefully closed at the end.

\begin{proof}
	It is given by the well-known fact that $\ofc A$, $\ounit$ and $\cunit$ are always positive, that $\otimes$, $\oplus$ preserve positivity, and that our system (without server and client) is equivalent to linear logic in terms of expressivity \citep[\S 2.3]{kokke2019better}. More concretely, we derive the rule by induction on $T$. 
	\begin{itemize}
		\item If $T$ is $\textsf{Unit}$ we have
		\begin{mathpar}
				\inferrule*{
					\inferrule*{
						\IsProc{\Out{y}{}{\Stop}}{\Name{y}:\ounit}
					}{
						\IsProc{\OfCP{y}{}{\Out{y}{}{\Stop}}}{\Name{y}:\ofc \ounit}
					}
				}{
					\IsProc{\In{x}{}{\OfCP{y}{}{\Out{y}{}{\Stop}}}}{
					\Name{y}: \ofc \ounit, \Name{x}: \punit
					}
				}
		\end{mathpar}
		Cut $\Name{y}$ of this with $\Name{x'}$ of $\Proc{P}$ and we are done.
		\item If $T$ is $U \rightarrow V$, we have
			\begin{mathpar}
					\inferrule*{
						\IsProc{\Link{y}{x}}{\Name{y}:\ofc (\bra*{U}\otimes \negg{\bra*{V}}), \Name{x}:\whynot (\negg{\bra*{U}} \parr \bra*{V})}
					}{
						\IsProc{\OfCP{y}{x}{\Link{y}{x}}}{
							\Name{y}:\ofc \ofc (\bra*{U}\otimes \negg{\bra*{V}}), \Name{x}:\whynot (\negg{\bra*{U}} \parr \bra*{V})
						}
					}
			\end{mathpar}
		Cut $\Name{y}$ of this with $\Name{x'}$ of $\Proc{P}$ and we are done.
		\item If $T$ is $U + V$, and both $U$ and $V$ are unlimited. First we apply \cref{lem:separation} on $\Proc{P}$ and acquire $\IsProc{P_0}{\Ctxs{G}}$ and $\IsProc{P_1}{\Gamma, \Name{x}:\whynot(\negg{\bra*{U}} \with \negg{\bra*{V}})}$, and we have $\Proc{\mathcal{D}_U}$ defined as
		\begin{mathpar}
				\inferrule*{
					\inferrule*{
						\inferrule*{
							\IsProc{\Link{y}{x}}{
								\Name{y}:\bra*{U},
								\Name{x}:\negg{\bra*{U}}
							}
						}{
							\IsProc{\Inl{y}{\Link{y}{x}}}{
								\Name{y}:\bra*{U} \oplus \bra*{V},
								\Name{x}:\negg{\bra*{U}}
							}
						}
					}{
						\IsProc{\WhyD{x'}{x}{\Inl{y}{\Link{y}{x}}}}{
							\Name{y}: \bra*{U} \oplus \bra*{V},
							\Name{x'}:\whynot \negg{\bra*{U}}
						}
					}
				}{
					\IsProc{\Coderelict{x}{x'}{\OfCP{y}{x'}{
						\WhyD{x'}{x}{\Inl{y}{\Link{y}{x}}}
					}}}{
						\Name{y}:\ofc(\bra*{U} \oplus \bra*{V}),
						\Name{x}:\negg{\bra*{U}}
					}
				}
		\end{mathpar}
		and similarly for $\Proc{\mathcal{D}_V}$. Finally we have
		\begin{mathpar}
			\inferrule*{
				\IsProc{\mathcal{D}_U}{
					\Name{y}:\ofc(\bra*{U} \oplus \bra*{V}),
						\Name{x}:\negg{\bra*{U}}
				}\\
				\IsProc{\mathcal{D}_V}{
					\Name{y}:\ofc(\bra*{U} \oplus \bra*{V}),
						\Name{x}:\negg{\bra*{V}}
				}
			}{
				\IsProc{\Case{x}{\mathcal{D}_U}{\mathcal{D}_V}}{
					\Name{y}:\ofc(\bra*{U} \oplus \bra*{V}),
						\Name{x}:\negg{\bra*{U}} \with \negg{\bra*{V}}
				}
			}
		\end{mathpar}
		Cut $\Name{y}$ of this with $\Name{x'}$ of $\Proc{P_1}$ then combine with $\Proc{P_0}$ and we are done.
		\item If $T$ is $U \otimes V$, and both $U$ and $V$ are unlimited, we have
		\begin{mathpar}
			\inferrule*{
			\inferrule*{
			\inferrule*{
				\IsProc{\Link{v}{v'}}{\Name{v'}:\bra*{V}, \Name{v}:\negg{\bra*{V}}}\\
				\IsProc{\Link{u}{u'}}{\Name{u'}:\bra*{U}, \Name{u}:\negg{\bra*{U}}}
			}{
				\IsProc{\Out{v'}{u'}{\Par*{
					\Link{v}{v'}
				}{
					\Link{u}{u'}
				}}}{
					\Name{v'}:\bra*{U} \otimes \bra*{V},
					\Name{u}:\negg{\bra*{U}}, 
					\Name{v}:\negg{\bra*{V}}
				}
			}}{
				\IsProc{
							\WhyD{u''}{u}{\WhyD{v''}{v}{
								\Out{v'}{u'}{\Par*{
					\Link{v}{v'}
				}{
					\Link{u}{u'}
				}}}}
				}{
					\Name{v'}:\bra*{U} \otimes \bra*{V},
					\Name{u''}:\whynot \negg{\bra*{U}}, 
					\Name{v''}:\whynot \negg{\bra*{V}}
				}
			}
			}{
				\IsProc{
					\In{v}{u}{
					\Coderelict{u}{u''}{\Coderelict{v}{v''}{
					\OfCP{v'}{u,v}{
						\WhyD{u''}{u}{\WhyD{v''}{v}{
								\Out{v'}{u'}{\Par*{
					\Link{v}{v'}
				}{
					\Link{u}{u'}
				}}}}
					}}
				}}}{
					\Name{v'}:\ofc(\bra*{U} \otimes \bra*{V}),
					\Name{v}:\negg{\bra*{U}} \parr \negg{\bra*{V}}	
				}
			}
		\end{mathpar}
		Cut $\Name{v'}$ of this with $\Name{x}$ of $\Proc{P}$, rename $\Name{v}$ to $\Name{x}$ and we are done.
	\end{itemize}
\end{proof}
\section{More Examples}
\subsection{Compare-And-Set} \label{section:csgv-cas}
We now recover the example of CAS server/client in CSGV. 
We define the server-client protocol to be $T_S \defeq \send \boo.\send \boo.\recv \boo.\endm$. The choice of $\endp$ vs. $\endm$ is purely driven by well-typedness.  

	\begin{minipage}{0.5\linewidth}
	\begin{align*}
		\Deriv{C_0}  \defeq \  &  \Let{x_d}{\Send{\False}{\Name{x_0}}} \\
			  &  \Let{x_r}{\Send{\True}{\Name{x_d}}} \\
			  & \LetP{r}{x'}{\Recv{\Name{x_r}}} \\
			  & \Let{\_}{\Term{\Send*{\Name{r}}{\Name{r0}}}} \\
			  & \Name{x'} \\
		\Deriv{C_1} \defeq \ &  \dots \\
		\Derivsf{clients} \defeq \ & \Let{x}{\ReqA{x}{x_0}{C_0}} \\
					   & \Let{x}{\ReqA{x}{x_1}{C_1}}\\
					   & \ReqW{x}
	\end{align*}
	\end{minipage}
	\begin{minipage}{0.5\linewidth}
	\begin{align*}
		\Deriv{L} \defeq \ & \False \\
		\Deriv{M} \defeq \ 
			& \LetP{exp}{y'}{\Recv{\Name{y}}} \\
			& \LetP{des}{y''}{\Recv{\Name{y'}}} \\
			& \IfThen{\Name{exp} = \Name{w}} \\
			& \Let{\_}{\Term{\Send*{\True}{\Name{y''}}}} \; \Name{des} \\
			& \Else \\
			& \Let{\_}{\Term{\Send*{\False}{\Name{y''}}}} \; \Name{w} \\
		\Deriv{N} \defeq\  & \Name{z} \\
		\Derivsf{server} \defeq\  &\Serv{y}{L}{w}{M}{z}{N}
	\end{align*}
	\end{minipage}

typed as:
\begin{equation*}
	\begin{aligned}[c]
	 \IsProg*{\Name{x_0} : T_S, \Name{r_0} : \send \boo.\endp}{C_0}{\endm} \\
	 \IsProg*{\Name{x_1} : T_S, \Name{r_1} : \send \boo.\endp}{C_1}{\endm} \\
	 \IsProg*{\Name{x} : \que T_S,\Name{r_0}:\send \boo.\endp,\Name{r_1}:\send \boo.\endp}{\textsf{clients}}{\endm} 
	\end{aligned}
	\qquad
	\begin{aligned}[c]
		\IsProg*{}{L}{\boo} \\
		\IsProg*{\Name{w} : \boo, \Name{y} : \recv \boo.\recv \boo.\recv \boo.\endp}{M}{\boo} \\
		\IsProg*{\Name{z} : \boo}{N}{\boo} \\
		\IsProg*{\Name{y} : \exc \overline{T_S}}{\textsf{server}}{\boo}
	\end{aligned}
\end{equation*}

	where $\overline{T_S} = \recv \boo.\recv \boo.\send \boo.\endp$. Finally we have
	\begin{mathpar}
		\IsProg{\Name{r_0}:\send \boo.\endp, \Name{r_1}:\send \boo.\endp}{\Conn{x}{\textsf{clients}}{y}{\textsf{server}}}{\boo}
	\end{mathpar}

	\subsection{List Shuffling}\label{section:list-shuffling}
	We use the racing behaviour of clients to shuffle a list. 
	 We define server/client protocol to be $T_S \defeq \send A.\endm$, meaning each client sends a value of $A$ and ends the session. Each clients are defined the same way: they simply take the value $A$ from the environment and send it over the channel. Clients are forked by folding the list.	The server simply receives values from clients and reforms the list.

	\begin{minipage}{0.5\linewidth}
	\begin{align*}
		\Prog{C} \defeq\ 
			& \Send{\Name{v}}{\Name{x'}} \\
		\Progsf{clients} \defeq \ 
			& \Let{y}{\Namesf{fold}_{\que T_S}\; \Name{x}\; \Lam{x}{\Lam*{v}{\ReqA{x}{x'}{C}}}\; \Name{l}} \\
			& \ReqW{y}
	\end{align*}
\end{minipage}
\begin{minipage}{0.5\linewidth}
	\begin{align*}
	 \Prog{L} \defeq\  & \Namesf{nil} \\
	 \Prog{M} \defeq\  & \LetP{v}{y'}{\Recv{\Name{y}}} \\
			  & \Let{\_}{\Term{\Name{y'}}} \\
			  & \Progsf{cons}\; \Name{v}\; \Name{z'}\\
	 \Prog{N} \defeq\  & \Name{z}\\
	 \Progsf{server} \defeq\  &\Serv{y}{L}{z'}{M}{z}{N}
	\end{align*}
\end{minipage}

	and we have the following typing
	\begin{equation*}
		\begin{aligned}[c]
		 \IsProg*{\Name{v}:A, \Name{x'}:T_S}{C}{\endm}\\
		 \IsProg*{\Name{l}:[A], \Name{x}:\que T_S}{\textsf{clients}}{\endm}\\
		 \IsProg*{}{L}{[A]} 
		\end{aligned}
		\qquad
		\begin{aligned}[c]
		 \IsProg*{\Name{z'}:[A], \Name{y}:\overline{T_S}}{M}{[A]} \\
		 \IsProg*{\Name{z}:[A]}{N}{[A]}\\
		 \IsProg*{\Name{y}:\exc \overline{T_S}}{\textsf{server}}{[A]}
		\end{aligned}
	\end{equation*}
	and finally we define shuffling as 
	\begin{align*}
	 \IsProg{\Name{l}:[A]}{\Conn{x}{\textsf{clients}}{y}{\textsf{server}}}{[A]}
	\end{align*}

	\subsection{Merge Sort} \label{section:merge-sort}
	Using fork-join, we can define parallel merge sort. We first have to assume general recursion (and therefore not expressible in the vanilla CSGV), and two functions on lists of $A$. $\textsf{split}$ splits a list of $A$ into several (supposedly two) lists, and $\textsf{merge}$ merges several sorted lists into one list. $\textsf{isend}$ returns $\True$ if the list is empty or singleton.
	\begin{mathpar}
	 \IsProg{}{\textsf{split}}{[A] \rightarrow [[A]]}
	 \and
	 \IsProg{}{\textsf{merge}}{[[A]] \rightarrow [A]}
	 \and
	 \IsProg{}{\textsf{isend}}{[A] \rightarrow \boo}
	\end{mathpar}
	And we define merge sort as follows. We first check if the list $\Name{l}$ is too short to sort; if not we $\Progsf{split}$ the list and $\Progsf{sort}$ each of the sub-lists. The sorted sub-lists are collected into $\Name{l'}$ which we will $\Progsf{merge}$. Note that the racing behaviour of client/server means $\Name{l'}$ could be any ordering, which does not matter for merge sort. For scnerios where it does matter, each sub-result should be accompanied by its index to get re-ordered.
	\begin{align*}
	 \Progsf{sort} \; \Name{l} \defeq\  & \IfThen{\Progsf{isend} \; \Name{l}} \; \Name{l} \\
							 & \Else \\
							 & \Let{l'}{\ForkJoin{\textsf{sort}}{\textsf{cons}} {\textsf{nil}}{(\textsf{split} \; \Name{l})}} \\
							 & \Progsf{merge} \; \Name{l'}
	\end{align*}
	which gives us
	\begin{align*}
	 \IsProg{}{\textsf{sort}}{[A] \rightarrow [A]}
	\end{align*}

		\subsection{Map-Reduce} %
	   The purpose of the map-reduce model is to transform input of type $[A]$ into output of type $[D]$ using two functions (using the functorial formulation given by \citet{hinrichsen2019actris}). Take the example of counting the frequency of each word in an article which contains several paragraphs $[A]$. The \textit{map} function $\Name{f}$ counts the frequency $C$ of each word $B$ in a paragraph. The \textit{reduce} function $\Name{g}$ takes a word $B$ and its frequency $[C]$ in all paragraphs and simply returns the word with the sum frequency $D \defeq B \otimes C$. In the end we hope to get $[D]$.
	   \begin{mathpar}
		\Name{f} : A \rightarrow [B \otimes C]
		\and
		\Name{g} : B \otimes [C] \rightarrow D
	   \end{mathpar}
	   We first define parallelized $\Progsf{flatMap}$ with fork-join:
	   \begin{align*}
			\IsProg{}{\Progsf{flatMap}_{A,B} \defeq \Lam{f}{\Lam{l}{\ForkJoin{f}{\Progsf{nil}}{\Progsf{concat}}{\Name{l}}}}}{(A \rightarrow [B]) \rightarrow [A] \rightarrow [B]}
	   \end{align*}
	   where $\Progsf{concat}$ is the standard function that concatenate two lists. Based on this we define $\Progsf{map-reduce}$:
	   \begin{align*}
		\IsProg{\Name{f}, \Name{g}}{\textsf{map-reduce} \defeq (\Progsf{flatMap}_{B \otimes [C], D} \; \Name{g}) \circ \Progsf{group}_{B,C} \circ (\Progsf{flatMap}_{A,B \otimes C} \; \Name{f})}{[A] \rightarrow [D]}
	   \end{align*}
	   where $\IsProg{}{\textsf{group}_{B,C}}{[B \otimes C] \rightarrow [B \otimes [C]]}$ is the standard function that groups list of pairs by their keys. 
	   
	   There is a notable difference between our version of map-reduce and the version in \citet{hinrichsen2019actris} (and other related literatures ). Usually a fixed number of threads (that usually corresponds to the number of cpu cores/nodes) are spawned, who will then repeatedly retrieve tasks from and send result to the main thread. In our version, however, the number of threads is the number of tasks, and each thread will handle one task only. The former approach seems lower-level, allowing optimizing the number of threads according to the hardware reality. Our language is higher-level, and it is up to the implementation to coordinate threads with cores/nodes. Implementing it at a lower-level seems to be difficult because of the linearity constraints.
	\subsection{Interleaving clients} \label{section:interleaving-clients}
	Another interleaving clients example (but simpler than the beauty contest example) is one where each client submits a boolean to the server, who calculates the XOR of all the submissions and sends the result back to all clients.  The internal protocol of the server, as well as the server interface, will be $T_S \defeq \recv \boo. \send \boo. \endp$.
	We define
	\begin{align*}
	 L \defeq & \Let{w'}{\Send{\False}{\Name{w}}} \tag{send initial value}\\
			  & \LetP{\_}{w''}{\Recv{\Name{w'}}} \tag{recv the final value} \\
			  & w'' \\
	 M \defeq & \LetP{s}{z''}{\Recv{\Name{z'}}} \tag{recv the last value} \\
			  & \LetP{b}{y'}{\Recv{\Name{y}}} \tag{recv the boolean from client} \\
			  & \Let{s'}{\Progsf{xor}(\Name{s}, \Name{b})} \tag{calculate the xor} \\
			  & \Let{w'}{\Send{\Name{s'}}{\Name{w}}} \tag{send to next worker process}\\
			  & \LetP{f}{w''}{\Recv{\Name{w'}}} \tag{recv the final boolean} \\
			  & \Let{\_}{\Term*{\Send{\Name{f}}{\Name{z''}}}} \tag{send the final to previous worker process} \\
			  & \Let{\_}{\Term*{\Send{\Name{f}}{\Name{y'}}}} \tag{send the final to client} \\
			  & \Name{w''}\\
	 N \defeq & \LetP{f}{z'}{\Recv{\Name{z}}} \tag{recv the final value} \\
			  & \Let{z''}{\Send{\Name{f}}{\Name{z'}}} \tag{simply send it back} \\
			  & \Term{\Name{z''}} \\
	 \Progsf{server} \defeq & \Serv{y}{\Inv{w}{L}}{z'}{\Inv{w}{M}}{z}{N} 
	\end{align*}
	where $\IsProg{}{\textsf{xor}}{\boo \otimes \boo \rightarrow \boo}$ is the standard xor function on booleans.
	\begin{mathpar}
	 \IsProg{\Name{w}: \overline{T_S}}{L}{\endm}
	 \and
	 \IsProg{\Name{w}:\overline{T_S}, \Name{z'}:T_S, \Name{y}:T_S}{M}{\endm} 
	 \and
	 \IsProg{\Name{z}:T_S}{N}{\textsf{Unit}} 
	 \and
	 \IsProg{\Namesf{xor}, \Name{y}:\exc T_S}{\textsf{server}}{\textsf{Unit}}
	\end{mathpar}
	We omit defining the clients as they will be very similar to the ones in previous examples.

	\subsection{Symbol Generator} \label{section:symbol-generator}
	Another simple scenario is where server acts like a generator of unique symbols (essentially natural numbers $\nat$) and clients race to acquire those symbols. The server protocol is $\send \nat. \endp$, meaning the server simply sends an number to the client and ends the session; the server internal state is $\nat$.
	\begin{align*}
	 L \defeq \ & \Progsf{zero} \tag{starts with zero}\\
	 M \defeq \ & \Let{\_}{\Term*{\Send{\Name{z'}}{\Name{y}}}} \tag{send the counter to client}\\
			   & \Progsf{succ} \; \Name{z'} \tag{increase the counter}\\
	 N \defeq\  & \Name{z} \tag{output the final counter}\\
	 \Progsf{server} \defeq & \Serv{y}{L}{z'}{M}{z}{N}
	\end{align*}
	typed as
	\begin{mathpar}
	 \IsProg{}{L}{\nat} 

	 \IsProg{\Name{z'}:\nat, \Name{y}:\send N. \endp}{M}{\nat} 

	 \IsProg{\Name{z}:\nat}{N}{\nat}

	 \IsProg{\Name{y}:\exc(\send \nat. \endp)}{\Progsf{server}}{N}
	\end{mathpar}
	We omit defining the clients as they would be similar to previous ones; but we note that it is impossible for a process to act as multiple clients and aggregate two symbols. The reason is that informally speaking, in our system clients are not allowed to communicate with each other besides via the server as indicated by the functor. More concretely, supposed we are to define a process acting as multiple clients, it would be typed as $\IsProg{\Gamma, \Name{x_0}: \recv N. \endp, \Name{x_1}: \recv N. \endp}{K}{T}$; but there is no way in CSGV to combine $\Name{x_0}$ and $\Name{x_1}$.

\end{document}